\def\dOi{9(3:22)2013}
\subjclass{
  F.3.1 [Logics and Meanings of Programs]: Specifying and Verifying and 
  Reasoning about Programs;
F.3.2 [Logics and Meanings of Programs]: Semantics of Programming Languages
    ---   algebraic approaches to semantics, denotational semantics; 
General Terms: Theory}
\newcommand{\hash}{\#}
\theoremstyle{plain}
\newtheorem{lemdefn}[thm]{Lemma and Definition}
\theoremstyle{definition}
\newtheorem{remdefn}[thm]{Remark and Definition}
\newcommand{\CC}{{\Cls C}}
\newcommand{\CE}{{\Cls E}}
\newcommand{\CM}{{\Cls M}}
\newcommand{\CT}{{\Cls T}}
\newcommand{\CW}{{\Cls W}}
\newcommand{\CTone}{\Cls T}
\newcommand{\CTtwo}{\Cls S}
\newcommand{\Theorify}[1]{\Psi{#1}}
\newcommand{\Monadone}{T}
\newcommand{\Monadtwo}{S}
\newcommand{\Amonad}{T}
\newcommand{\Set}{\Cat{Set}}
\newcommand{\Class}{\Cat{Class}}
\newcommand{\TAlg}{\Cat{\text{-}TAlg}}
\newcommand{\CatMod}[2]{{#2}\Cat{\text{-}Alg}\,[{#1}]}
\newcommand{\TAlgOf}[3]{({#2},{#3})\Cat{\text{-}TAlg}\,[{#1}]}
\newcommand{\SmallTalgOf}[2]{({#1},{#2})\Cat{\text{-}TAlg}}
\newcommand{\congsp}[1]{\quad \cong_{#1} \quad }
\newcommand{\eqdefsp}{\quad \stackrel{\mbox{\rm {\tiny def}}}{=} \quad }
\newcommand{\Cat}{\mathbf}
\newcommand{\Cls}{\mathcal}
\newcommand{\id}{\operatorname{id}}
\newcommand{\PSet}{{\mathcal P}}
\newcommand{\consto}{\mathrel{\ooalign{$\circ$\cr\kern-1.65pt$\rightarrow$}}}
\newcommand{\tensor}{\varotimes}
\newcommand{\tpair}[2]{{#1\wideparen{\;}\kern-1pt #2}}
\newcommand{\argument}{\_\!\_}%{\underline{\;\;}}
\newcommand{\downset}{\kern-2.8pt\downarrow\kern-2.8pt{}} % _{\sqsubseteq}}
\newcommand{\upset}{\kern-2.8pt\uparrow\kern-2.8pt{}} % _{\sqsubseteq}}
\newcommand{\unit}{\star}
\newcommand{\fst}{\operatorname{\sf fst}}
\newcommand{\snd}{\operatorname{\sf snd}}
\newcommand{\CASE}{\operatorname{\sf case}}
\newcommand{\OF}{\operatorname{\sf of}}
\newcommand{\case}[3]{\CASE\kern1.2pt #1\kern1.2pt \OF\kern1.2pt #2;\kern1.2pt #3}
\newcommand{\caseOne}[2]{\CASE\kern1.2pt #1\kern1.2pt \OF\kern1.2pt #2}
\newcommand{\DO}{\operatorname{\sf do}}
\newcommand{\retOp}{\operatorname{\sf ret}}
\newcommand{\ret}[1]{\retOp #1}
\newcommand{\letTerm}[2]{\DO\kern1.2pt#1; #2}
\newcommand{\letTermR}[2]{#1;\,#2}
\newcommand{\leteq}{\gets}
\newcommand{\bind}[2]{#1\leteq #2}
\newcommand{\letTermO}[2]{\DO_{\nu}\kern1.2pt#1; #2}
\newcommand{\LOOP}{\operatorname{\sf loop}}
\newcommand{\THEN}{\operatorname{\sf then}}
\long\def\loopTerm@[#1][#2][#3]#4#5#6{
\DO\kern1.2pt#4 #1 \LOOP #3 #5 #2\THEN #6
}
\newcommand{\loopTerm}{
\optparams{\loopTerm@}{[;][\};][\{]}
}
\newcommand{\LET}{\operatorname{\sf init}}
\newcommand{\letLoop}[2]{\LET #1\kern1.2pt\LOOP\kern1.2pt\{#2\}}
\newcommand{\LetLoop}[2]{\LET #1\kern1.2pt\LOOP\kern1.2pt\bigl\{#2\bigr\}}
\newcommand{\IF}{\operatorname{\sf if}}
\newcommand{\ifTerm}[3]{\IF\kern1.2pt #1\kern4.2pt {\sf then}\kern2.2pt #2\kern4.2pt {\sf else}\kern2.2pt #3}
\newcommand{\ifTermO}[3]{\IF_{\nu} #1\kern2.2pt {\sf then}\kern2.2pt #2\kern2.2pt {\sf else}\kern2.2pt #3}
\newcommand{\WHILE}{\operatorname{\sf while}_{\nu}}
\newcommand{\whileTerm}[3]{\LET #1\kern1.2pt\WHILE\kern1.2pt #2 \kern2.2pt{\sf do}\kern2.2pt #3}
\newcommand{\whileTermS}[2]{\WHILE\kern1.2pt #1 \kern2.2pt{\sf do}\kern2.2pt #2}
\newcommand{\SEQ}{\operatorname{\sf seq}}
\newcommand{\seqTerm}[2]{\SEQ_{\nu}\kern1.2pt#1; #2}
\newcommand{\infrule}[2]{\frac{#1}{#2}}
\newcommand{\brks}[1]{\langle #1\rangle}
\newcommand{\Path}[1]{\Bigl(\kern-2.1pt\Bigl|\, #1\,\Bigr|\kern-2.1pt\Bigr)}
\newlength{\myboxwidth}
\newcommand{\inject}{\hookrightarrow}
\newcommand{\ulist}{\iota}
\newcommand{\len}[1]{\hash{#1}}
\DeclareMathOperator{\redpar}{{
\declareslashed{}{
\vrule height2pt depth 2pt
\kern1pt
\vrule height2pt depth 2pt
}{0}{0}{\rightarrowtail}\slashed{\rightarrowtail}}}
\renewcommand{\emptyset}{\varnothing} %{\mathop{\slashed0}} %{\{\kern1pt\}}
\newcommand{\Nat}{\mathbb{N}}
\newcommand{\op}{{\mathit{op}}}
\newcommand{\into}{\hookrightarrow}
\newcommand{\cofinal}{\mathit{cf}}
\def\todo{}
  \renewcommand{\todo}[1]{
  {\color{red}{[TODO: #1]}}
  }
  \renewcommand{\todo}[1]{
  {\color{red}{[TODO: #1]}}
  }
\newcommand{\varstar}{\star}
\newcommand{\pow}{\PSet^{\varstar}_{\kappa}}
\newcommand{\Nbb}{\mathbb N_{>0}}
\newcommand{\itum}{\item}
\newcommand{\easy}{}
\begin{document}
\sloppy

% Suppress paragraph indentation 
%\setlength{\parindent}{0pt}
%\setlength{\parskip}{1.2ex plus 0.3ex minus 0.3ex}

\title[Exploring the Boundaries of Monad Tensorability on Set]{Exploring the Boundaries of\\ Monad Tensorability on Set}

%%\subtitle{Subtitle Text, if any}
\author[N.~Bowler]{Nathan Bowler\rsuper a}
\address{{\lsuper a}Department of Mathematics, Universit\"at Hamburg}
\email{N.Bowler1729@gmail.com}

\author[S.~Goncharov]{Sergey Goncharov\rsuper b}
\address{{\lsuper{b,d}}Department of Computer Science, Friedrich-Alexander-Universit\"at Erlangen-N\"urnberg}
\email{\{Sergey.Goncharov, Lutz.Schroeder\}@fau.de}
%\thanks{Research supported by the DFG project \emph{A construction kit for program logics} (LU 707/9-1)}

\author[P.~B.~Levy]{Paul Blain Levy\rsuper c}
\address{{\lsuper c}School of Computer Science, University of Birmingham}
\email{P.B.Levy@cs.bham.ac.uk}
\thanks{{\lsuper c}Paul Blain Levy supported by EPSRC Advanced Research Fellowship EP/E056091}

\author[L.~Schr\"oder]{Lutz Schr\"oder\rsuper d}
\address{\vspace{-18 pt}}

\keywords{Monads, tensor products, side effects, non-determinism}

\begin{abstract}
  We study a composition operation on monads,
  equivalently presented as large equational theories. Specifically, we discuss the
  existence of \emph{tensors}, which are combinations of theories that
  impose mutual commutation of the operations from the component
  theories. As such, they extend the \emph{sum} of two 
  theories, which is just their unrestrained combination. Tensors of
   theories arise in several contexts; in particular, in the
  semantics of programming languages, the monad transformer for global
  state is given by a tensor. We present two main results: we show
  that the tensor of two monads need not in general exist by
  presenting two counterexamples, one of them involving finite
  powerset (i.e.\ the theory of join semilattices); this solves a
  somewhat long-standing open problem, and contrasts with recent
  results that had ruled out previously expected counterexamples.  On
  the other hand, we show that tensors with bounded powerset monads do
  exist from countable powerset upwards.
\end{abstract}

\maketitle

\section{Introduction}
\noindent The concept of monad may be regarded as a category-theoretic abstraction of the notion of equational theory that is insensitive to the choice of syntax. More recently, monads have gained importance in the theory and practice of programming, where they are now commonly recognized as a standard formal abstraction for computational effects. In this context, the combination of theories, or monads, can be seen as modelling the combination of computational effects, a topic of interest e.g.\ in the modular semantics of programming languages. 

One way to implement the combination of effects is via \emph{monad
  transformers}~\cite{CenciarelliMoggi93,Moggi95}. Essentially, a
monad transformer is a function that sends monads to monads
(additional properties, such as functoriality, are not in general
imposed). Monad transformers form the core of the treatment of side-effects in the
functional programming language Haskell~\cite{Peyton-Jones03}.
Besides the fact that due to their lack of structure they say only very little about the mathematical foundations of monad combination, monad transformers have been criticized for their asymmetry~\cite{HylandPlotkinEtAl06} which treats one set of effects as the monad transformer and the other set of effects as an argument of the latter. E.g.\ the approach via monad transformers hides the symmetry in the combination of exceptions and I/O: this combination can be obtained either by applying the I/O monad transformer to the exception monad or by applying the exception monad transformer to the I/O monad, but this equivalence is not apparent from the corresponding monad transformer expressions. Moreover, monad transformers are ad hoc in character, and have been known only for a limited number of effects, a prominent negative example being nondeterminism. 

It has turned out, however, that some of the most important monad
transformers have an elegant abstract description using \emph{sum} and
\emph{tensor}.  Specifically, the monad transformers for exceptions
and I/O constructs a sum, and the monad transformers for state,
reader, and writer construct a
tensor~\cite{HylandPlotkinEtAl06,HylandLevyEtAl07}. Whereas the sum of
two monads is the simplest monad supporting both given effects without
any interaction between them (and corresponds, in terms of 
theories, just to taking the disjoint union), the tensor (whose
definition goes back to~\cite{Freyd66}) moreover requires commutation
of these effects over each other, e.g.\ in case of tensoring
statefulness with finite nondeterminism one has
\begin{displaymath}
(\mathtt{x := a; (b + c)}) ~=~ (\mathtt{(x := a; b) + (x := a; c)})
\end{displaymath} 
and the like. We refer to the general form of this condition as the
\emph{tensor law}. When we view monads as representations of 
theories, the tensor law just states that the operations of one theory
are homomorphic with respect to\ those of the other theory, i.e.\ it
describes algebras (also known as ``models'') of the first theory in the category of algebras of the
second.

%In contrast to monad transformers, the construction for sums and tensors strongly depend on the underlying category. 

As indicated above, an important example of the tensor product
$\tensor$ is tensoring with global state, in which case the result is
equivalent to application of the state monad
transformer~\cite{PowerShkaravska04}, e.g.\ $\PSet\tensor
T_S=S\to\PSet(S\times\argument)$ where $\PSet$ is the powerset monad
and $T_S=S\to (S\times\argument)$ is the (global) state monad. One can
look at this case from the opposite perspective and consider it as an
application of a \emph{nondeterminism monad transformer}
$T\mapsto\PSet\tensor T$. This transformer yields the universal
\emph{completely additive monad} over
$T$~\cite{GoncharovSchroderEtAl09}, which therefore allows for a
generalized Fischer-Ladner decomposition of control operators, i.e.\
roughly the translation
\begin{align*}
\mathtt{if}~b~\mathtt{then}~p~\mathtt{else}~q 	&:= b?; p + (\neg b)?; q\\
\mathtt{while}~b~\mathtt{do}~p 					&:= (b?;p)^{\varstar};(\neg b)?
\end{align*}
The catch in all this is that unless one requires both component
monads to be \emph{ranked}, i.e.\ generated by a set of
algebraic operations, there is no guarantee that sum and tensor
exist~\cite{HylandPlotkinEtAl06}. Intuitively, unranked monads arise
when the number of values that can participate in a computation is
unbounded.  Unbounded non-determinism and continuations are prominent
examples of unranked monads; in particular, it has long been unclear
that the above-mentioned non-determinism monad transformer actually
exists. It is comparatively easy to see that the sum of simple ranked
monads with most unranked monads will typically fail to exist (see,
e.g.,~\cite{HylandPower07}). The case of tensoring is more subtle. For
the specific example of the (unranked) continuation monad, it has been
shown in~\cite{HylandLevyEtAl07} that the tensor does exist if the
partner monad is ranked. It has been conjectured in \emph{op.\ cit.}
(p.\ 30) that the tensor of unranked monads does not exist in general,
and it has been implicitly indicated (\emph{op.\ cit.}, p.\ 33) that
the tensor of continuations with a suitable unranked monad might serve
as a counterexample, which seemed reasonable insofar as continuations
generally constitute a good source of counterexamples (see, e.g.,
\cite{SchroderMossakowski04}).

However, two of us (Goncharov and Schr\"oder) have recently proved
that the tensor of two monads always exists if one of them is
\emph{uniform}, a natural criterion that ensures sufficiently
pervasive applicability of the tensor
law~\cite{GoncharovSchroder11b}. The class of uniform monads is
surprisingly broad and includes not only countable and unbounded
non-determinism (which implies that the above-mentioned
non-determinism monad transformer does after all exist), but also
continuations, thus discharging the latter as a suspect for a
potential counterexample to existence of tensors. In summary, prior to
the current work (respectively the conference
version~\cite{GoncharovSchroder11}), the question of universal
existence of monad tensors was open, and no good candidates for
possible counterexamples were known. It should be noted that the
question as such dates back at least to~\cite{Manes69}, where it
appears in the context of early developments of the categorical
foundations of universal algebra.

Having said this, we do settle the question in the negative in the
present work. Specifically, we present two countexamples to
tensorability. By the above, at least one of the partner monads in a
counterexample must be unranked, and in both our examples, the other
partner is in fact ranked. One of these, originally presented
in~\cite{GoncharovSchroder11}, shows that the tensor of a
\emph{well-order monad} with a simple free algebra monad (with two
binary operations) fails to exist. In the other example, the ranked
partner is finite powerset, while the unranked partner is less easy to
grasp, being defined by a rather involved equational theory.

Moreover, we settle tensorability of non-empty $\kappa$-bounded
powerset $\pow$ in the positive in the remaining cases. Specifically,
the uniformity method of~\cite{GoncharovSchroder11b} proves
tensorability of $\pow$ for all successor cardinals $\kappa$, and
as mentioned above we show in the present work that tensorability fails for
$\kappa=\omega$. We prove that $\pow$ is tensorable for every
uncountable $\kappa$; the proof method is dedicated to this case and
does not currently seem to generalize to other monads, except that
tensorability of full bounded powerset follows immediately in the
applicable cases.

The paper is organised as follows. We give an introduction to monads
and their algebras as well as their use in programming language
semantics in Section~\ref{sec:monads}. In Section~\ref{sec:tensors},
we discuss tensors and \emph{tensor algebras}, a notion that goes back
to~\cite{Manes69}, and summarize known results (and simple new ones)
on existence of
tensors. % To set the mood, we summarize known existence
% results for tensors in Section~\ref{sec:ex} and state a few (easy) new
% ones. 
In Section~\ref{sec:pow-positive}, we prove tensorability of
uncountably bounded powerset. We then proceed to prove our negative
results in Sections~\ref{sec:finpow} and~\ref{sec:non-ex}. In both
counterexamples, tensor algebras appear as a key technical tool in
that we show non-existence of the tensor by exhibiting a family of
reachable tensor algebras of unbounded cardinality.

\subsection*{Note on Foundations}

The category-theoretic concept of a \emph{monad} on $\Set$ serves as an
abstraction of the notion of equational theory. The constructions
involved are well-known in the ranked case; we will discuss the
situation for the unranked case in detail. To formalize the
correspondence between theories and monads, we shall need to
consider theories with large signatures. 

We accordingly work in the von Neumann-Bernays-G\"{o}del (NBG) theory of sets and classes, which is conservative over ZFC set theory~\cite{Felgner:NBGconserve}.  In NBG, certain classes are sets, whilst others, such as the class of sets or the class of ordinals, are proper classes.  All elements of classes are sets.  We shall use ``small'' to indicate that something is a set, and ``large'' to indicate that it might be a proper class.   

We shall also speak of ``hyperlarge'' categories such as the category $\Class$ of classes.   Whereas a familiar large category has a class of objects, a hyperlarge category has a \emph{hyperclass} of  objects, and also a hyperclass of morphisms from one object to another.  Informally, a hyperclass is a collection of classes.   Formally, our statements involving hyperclasses are interpreted at the meta-level, in the same way that ZFC users would interpret statements involving classes: a hyperclass is the extension of a unary formula.  

Note that, in NBG, care is needed to represent quotients and tuples of classes in such a way that they are classes.
\begin{iteMize}{$\bullet$}
\item To quotient a class by an equivalence relation, we may either  use the Global Axiom of Choice (``the class of sets is well-orderable'') to represent each equivalence class by a chosen element, or employ ``Scott's trick''~\cite{Scott55} of representing each equivalence class by its set of elements of least rank.
\item Following Morse~\cite{Morse:sets} an ordered pair $(X,Y)$ of classes is represented as the class $X+Y$. More generally a large (i.e.\ class-indexed) tuple of classes $(X_{i} \mid i \in I)$ is represented as the class $\sum_{i \in I}X_{i}$.  Consequently a hyperlarge sum or large product of hyperclasses is a hyperclass, just as a large sum or small product of classes is a class.
% $(D_{i} \mid i \in I)$
% it consists of those classes $Y$ such that \emph{(a)} all elements of $Y$ have the form $(i,x)$ for some $i \in I$ and set $x$, and \emph{(b)} for every $i \in I$ there is a class $X \in D_i$ such that, for every set $x$, $(i,x) \in Y \iff x \in X$.
\end{iteMize}

\noindent There are numerous other foundational options besides the one we have chosen, e.g.\ developing a theory of hyperclasses conservative over NBG; interpreting all our statements directly as properties of constructions on ZFC formulas; or using ZFC with one or even two Grothendieck universes, at the cost of losing conservativity over ZFC.  

% \textbf{Concrete categories} A \emph{concrete category} is a category $\catc$ equipped with a faithful functor $U_{\catc} : \catc \to \Set$, or $\catc \to \Class$ if $\catc$ is hyperlarge.  A \emph{concrete functor} $H : \catc \to \catd$ between concrete categories is a functor such that $U_{\catd}H = U_{\catc}$.

\section{Monads and  Theories}\label{sec:monads}

%  (An alternative would be to work in ZFC set theory with a Grothedieck universe, or even two Grothendieck universes if we wish to take Theorem~\ref{} literally.)

% We accordingly work in ZFC
% set theory with a Grothendieck universe. (An alternative would be to
% use the NBG theory of sets and classes, employing Scott's
% trick~\cite{Scott55} to form quotient classes.)  A set is \emph{small}
% if it is in bijection with a set in the universe. To emphasize that a
% set is not assumed to be small, we sometimes use the term \emph{large
%   set}. As usual, we denote by $\Set$ the category of \emph{small}
% sets.
\noindent In programming language semantics, monads serve to encapsulate
side-effects, a principle originally due to Moggi~\cite{Moggi91} that
was subsequently introduced into the functional programming language
Haskell as the principal means of dealing with impure
features~\cite{Wadler97}. In a nutshell, the idea is to relocate the
side effect from the function arrow into the result type of a
function: a side-effecting function $X\to Y$ becomes a pure function
$X\to TY$, where $TY$ is a type of side-effecting computations over
$Y$; the base example is $TY=S\to (S\times Y)$ for a fixed set
$S$ of states, so that functions $X\to TY$ are functions that may read
and update a global state.

Formally, a \emph{monad} on $\Set$ consists of a functor
$T:\Set\to\Set$ mapping sets $X$ (of values or, from the point
of view of theories, variables) to sets $TX$ (of
\emph{computations}, or terms modulo equations) and two natural
transformations $\eta:\id\to T$ and $\mu:T^2\to T$, the \emph{unit}
and the \emph{multiplication}, respectively, subject to the equations
$\mu\eta_T=\mu T\eta=\id$ and $\mu T\mu=\mu\mu_T$. We usually denote a
monad by just its functor part $T$, with the other components
understood implicitly.  On $\Set$, any monad has a unique
\emph{strength}, i.e.\ comes equipped with a unique natural
transformation $X\times TY\to T(X\times Y)$ satisfying a number of
conditions~\cite{Moggi91}. A \emph{monad morphism} is a
natural transformation between the underlying monad functors
satisfying obvious conditions with respect to the unit, the
Kleisli extension (equivalently, multiplication) 
see~e.g.~\cite{BarrWells85} for details.  For monads on $\Set$ preservation of strength is automatic.

A monad $T$ on $\Set$ induces two categories, the smallest and the
largest realization of $T$ as an adjunction, respectively: the
\emph{Kleisli category} $\Set_T$ of $T$ has sets as objects and maps
$X\to TY$ as morphisms; the unit $\eta:X\to TX$ serves as the identity
on $X$, and composition is given by \emph{Kleisli composition}
$(f,g)\mapsto \mu (Tf)g$.  On the other hand, the
\emph{Eilenberg-Moore category} $\Set^T$ of $T$ consists of
\emph{$T$-algebras}, which are maps of the form $\alpha:TX\to X$ such
that $\alpha\eta_X=\id_X$ and $\alpha\mu_X=\alpha T\alpha$, and their
morphisms. Here, a morphism $f:\alpha\to\beta$ of $T$-algebras
$\alpha:TX\to X$ and $\beta:TY\to Y$ is a map $f:X\to Y$ such that
$f\alpha=\beta Tf$.

We now turn to theories.  A {\em large signature} $\Sigma$ is a class $\Sigma$ of operations
$f$ each with an associated small arity $\alpha(f)$.  A {\em
  large theory} $\CT = (\Sigma,\CE)$ consists of a large signature
$\Sigma$ and a class $\CE$ of equations between $\Sigma$-terms.  (Formally, an equation is a tuple $(X,t,t')$ where $X$ is a set and $t$ and $t'$ are $\Sigma$-terms over $X$.) An {\em algebra} for such a
theory is a class equipped with a $\Sigma$-structure satisfying all of the
equations in $\CE$.  The algebra is termed \emph{small} if its carrier (not the algebra as a whole) is a set.   For
each set $X$ we build the {\em free algebra} $F_{\CT}(X)$ on $X$ as the class of $\Sigma$-terms over $X$ taken modulo the equations
in $\CT$: this may fail to be small.

More generally, for a category $\CC$ with small products, a {\em $\CT$-algebra} in $\CC$ consists of an object $A$ of $\CC$ together with a map $A^{\alpha(f)} \xrightarrow{f^A} A$ for each $f \in \Sigma$ such that, for each equation $(X,t,t') \in \CE$, the maps $A^{X} \to A$ corresponding to the two terms being equated are equal. An \emph{algebra homomorphism} from $A$ to $B$ consists of a map $A \xrightarrow{k} B$ in $\CC$ commuting with all the operations in the sense that for any $f \in \Sigma$ we have $f^A \cdot  k  ^{\alpha(f)} = k \cdot f$. The category of $\CT$-algebras in $\CC$ is denoted $\CatMod{\CC}{\CT}$. There is an evident forgetful functor from $\CatMod{\CC}{\CT}$ to $\CC$, which creates small products.

Returning to the category of sets, $\CatMod{\Set}{\CT}$ is the
category of small $\CT$-algebras, and there is an evident forgetful functor $U_{\CT}$ from
it to $\Set$. There are two ways in which $U_{\CT}$ can have a
left adjoint (in which case it is even monadic):
\begin{defi}
  A large theory $\CT$ \emph{has small free
    algebras} if $F_\CT(X)$ is small for every set $X$, and
  \emph{free small algebras} if $U_\CT$ has a left adjoint.%  If
%   $(\CT,\Sigma)$ has free small algebras but not small free algebras,
%   we say that $(\CT,\Sigma)$ is \emph{pathological for smallness}.
\end{defi}
\noindent It is, then, clear that
\begin{enumerate}[(1)]
\item every large theory $\CT$ with small free
  algebras has free small algebras, where the left adjoint of $U_\CT$
  maps $X$ to $F_\CT(X)$;
\item every large theory $\CT$ with free small
  algebras gives rise to, or \emph{presents}, a monad in the standard way, i.e.\ by composing
  $U_\CT$ with its left adjoint,
\end{enumerate}
i.e.\ we have
\begin{equation}
  \label{eqn:sffs}
  \begin{array}{ccccc}
    \begin{array}{c}
\text{Large theories}\\\text{with small free algebras}
\end{array}
 & \subseteq &
 \begin{array}{c}
\text{Large theories}\\\text{with free small algebras}
\end{array}
 & \to & 
\text{Monads on $\Set$}
\end{array}
\end{equation}
% \begin{align*}
%   & \text{Large algebraic theories with small free algebras}\\
%   & \qquad \subseteq && (*)\\
%   & \text{Large algebraic theories with free small algebras}\\
%   & \qquad \to  && (+)\\
%   & \text{Monads on $\Set$}
% \end{align*}
where the arrow $\to$ is intended to denote a map. We make three observations about (\ref{eqn:sffs}).  Firstly the inclusion is strict.
\begin{thm} \label{thm:sffs}
  There exists a large theory that has free small algebras but not small free algebras.
\end{thm}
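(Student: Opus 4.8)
The plan is to exhibit a large theory $\CT$ whose only small algebras are the empty one and the one-element one — so that $U_\CT$ has a left adjoint for trivial reasons — while the free term algebra $F_\CT(\{x\})$ on a single generator is already a proper class. The key idea is that, although equational logic cannot directly bound the cardinality of a model, it \emph{can} force an operation to be split monic, and (for $|A|\ge 2$) a split monic $A^\kappa\to A$ cannot exist on a \emph{set} $A$ as soon as $\kappa\ge|A|$, by Cantor's theorem.

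Concretely, I would take the signature to consist, for every cardinal $\kappa\ge 2$, of a $\kappa$-ary operation symbol $f_\kappa$ together with unary operation symbols $g_{\kappa,\alpha}$ for $\alpha<\kappa$ (``coordinate retractions''), and the equations to be
\[
  g_{\kappa,\alpha}\bigl(f_\kappa(x_0,\dots,x_\beta,\dots)\bigr)=x_\alpha
  \qquad(\kappa\ge 2,\ \alpha<\kappa).
\]
The argument then has three steps. \emph{Step 1: all small $\CT$-algebras are subsingletons.} Given a small $\CT$-algebra $A$ with $|A|\ge 2$, the equations say that $b\mapsto(g^A_{\kappa,\alpha}(b))_{\alpha<\kappa}$ is a left inverse of $f^A_\kappa\colon A^\kappa\to A$, so $f^A_\kappa$ is injective; instantiating $\kappa:=|A|$ gives $|A|^{|A|}=|A^\kappa|\le|A|$, contradicting Cantor. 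Thus up to isomorphism the only small $\CT$-algebras are $\emptyset$ and $1$. \emph{Step 2: $U_\CT$ has a left adjoint.} Since $U_\CT$ is, up to equivalence, the inclusion $\{\emptyset,1\}\hookrightarrow\Set$, the evident functor sending $\emptyset\mapsto\emptyset$ and $X\mapsto 1$ for $X\ne\emptyset$ is left adjoint to it; so $\CT$ has free small algebras. \emph{Step 3: $F_\CT(\{x\})$ is a proper class.} For this it suffices to show that the terms $f_\kappa(x,\dots,x)$, one for each cardinal $\kappa\ge 2$, are pairwise distinct in $F_\CT(\{x\})$. To avoid a circular syntactic argument I would exhibit an honest (class-sized) $\CT$-algebra separating them: on the class $V$ of all sets put $f^V_\kappa\bigl((a_\beta)_{\beta<\kappa}\bigr)=\langle\kappa,(a_\beta)_{\beta<\kappa}\rangle$ and let $g^V_{\kappa,\alpha}$ decode the $\alpha$-th coordinate of arguments of this shape (and be the identity elsewhere); one checks the equations hold, and the unique homomorphism $F_\CT(\{x\})\to V$ with $x\mapsto\emptyset$ sends $f_\kappa(x,\dots,x)$ to $\langle\kappa,(\emptyset)_{\beta<\kappa}\rangle$, which are pairwise distinct. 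Hence $F_\CT(\{x\})$ is a proper class, i.e.\ $\CT$ does not have small free algebras.

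Once the construction is found the proof is short, so there is no single hard computation; the real content is Step 1's observation that split-monic operations deliver exactly the soft, Cantor-style size bound that is needed, and the only subtlety is in Step 3 — one must make sure the deliberately redundant-looking theory does not secretly collapse the terms $f_\kappa(x,\dots,x)$, which is why producing the explicit proper-class algebra $V$ is cleaner than arguing about normal forms. (One may equally let $\kappa$ range only over successor cardinals and use $\kappa:=|A|^+$ in Step 1.)
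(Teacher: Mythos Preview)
Your argument is correct, but it takes a genuinely different route from the paper's. The paper's theory has a constant $k_\alpha$ for each ordinal $\alpha$ together with a single ternary operation $f$, subject to $f(k_\alpha,k_\alpha,x)=k_0$ and $f(k_\alpha,k_\beta,x)=x$ for $\alpha\neq\beta$. In any small model two constants coincide by pigeonhole, which forces every element to equal $k_0$; so every small model is a singleton and the presented monad is $X\mapsto 1$. On the other hand $F_\CT(\emptyset)$ contains a proper class of pairwise distinct constants. Thus the paper's mechanism is a bespoke pigeonhole collapse, using only a class of constants and a finite-arity operation, and already the free algebra on \emph{no} generators is large.

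Your mechanism is instead Cantor's theorem: the equations make each $f_\kappa$ split monic, so any small model with at least two elements would embed $A^{|A|}$ into $A$. This is very clean conceptually and makes the ``why small models collapse'' step essentially a one-liner, at the cost of a heavier signature (operations of every infinite arity) and the need for one generator before the free algebra blows up. Your Step~3 is also handled more carefully than the paper's counterpart: rather than asserting that the constants stay distinct, you exhibit an explicit class-sized model separating the terms $f_\kappa(x,\dots,x)$, which is the right level of rigor here. Both constructions spuriously present (variants of) the terminal monad, so neither yields a stronger statement than the other; they simply trade off simplicity of signature against transparency of the collapse argument.
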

\begin{proof}
   Let $\CT$ be the following theory.  For the signature we take a constant $k_\alpha$
  for each ordinal $\alpha$ and a function $f$ of arity 3.  For the equations we take  $f(k_{\alpha}, k_{\alpha}, x) = k_0$
  for any $\alpha$, and $f(k_{\alpha}, k_{\beta}, x) = x$ for
  any distinct $\alpha,\beta$. Then the elements of
  $F_{\CT}(\emptyset)$ are in bijection with the set of constants
  $k_{\alpha}$, with the action of $f$ completely determined by the
   equations in $\CT$, so $F_{\CT}(\emptyset)$ fails to be small, i.e.\
  $\CT$ does not have small free algebras. On the other hand,
  in any small $\CT$-algebra $A$, the interpretations of $k_{\alpha}$
  and $k_{\beta}$ must be equal for some pair of distinct ordinals
  $\alpha,\beta$. Then for any $x$ in $A$ we have $x =
  f(k_{\alpha}, k_{\beta}, x) = f(k_{\alpha}, k_{\alpha}, x) = k_0$,
 so that $A$ has only one element. Hence $U_{\CT}$
  has a left adjoint, i.e.\ $\CT$ has free small algebras.
\end{proof}
\noindent Suppose $\CT$ is a theory with free small algebras that presents the monad $\Amonad$.   We call $\CT$ a \emph{genuinely presenting theory} of $\Amonad$ if $\CT$ has small free algebras, and otherwise a \emph{spuriously presenting theory}.  Thus the theory in the preceding proof spuriously presents the monad $X \mapsto 1$.  We next see that every monad has a genuinely presenting theory.
\begin{defi}
  Let $(T,\eta,-^{*})$ be a monad on $\Set$ expressed as a Kleisli triple.  We define the large theory $\Theorify{\Amonad}$ as follows.   For the signature, we take for every set $X$ and $m \in TX$ an operation $h_{X,m}$ of arity $X$.  For the equations, we take
  \begin{iteMize}{$\bullet$}
  \item for every set $X$ and $a\in X$, an equation
    \begin{equation*}
      h_{X, \eta_X(a)} (p_x \mid x\in X) = p_a.
    \end{equation*}
  \item for all sets $X,Y$ and $m\in TX,f : X \to TY$, an equation
 \begin{equation*}
     h_{X, m} (h_{Y,f(x)} (p_y \mid y\in Y) \mid x\in X ) = h_{Y, f^{*}m}(p_y \mid y\in Y).
    \end{equation*} 
  \end{iteMize}
\end{defi}

\begin{thm} \label{thm:montheory}
Any monad $\Amonad$ on $\Set$ is genuinely presented by the large theory $\Theorify{\Amonad}$.  Thus we have
\begin{displaymath}
   \Set^{\Amonad}  \congsp{\Set}  \CatMod{\Set}{\Theorify{\Amonad}} 
  \end{displaymath}
where $\cong_{\Set}$ indicates an isomorphism commuting with the forgetful functors to $\Set$.
\end{thm}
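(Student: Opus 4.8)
The plan is to write down explicit functors between $\Set^{\Amonad}$ and $\CatMod{\Set}{\Theorify{\Amonad}}$ that are the identity on underlying sets and functions, prove they are mutually inverse, and then separately check that the syntactic free algebras $F_{\Theorify{\Amonad}}(Z)$ are small. Going from an Eilenberg--Moore algebra $\alpha\colon\Amonad A\to A$ to a $\Theorify{\Amonad}$-algebra on the same carrier, I would interpret $h_{X,m}$ as the map $A^{X}\to A$ sending a tuple $p\colon X\to A$ to $\alpha((\Amonad p)(m))$. Verifying the two equation schemes is a short diagram chase: the first equation of $\Theorify{\Amonad}$ uses naturality of $\eta$ together with $\alpha\eta_{A}=\id_{A}$, and the second uses naturality of $\mu$, the identity $f^{*}=\mu\cdot\Amonad f$, and the algebra law $\alpha\cdot\Amonad\alpha=\alpha\cdot\mu_{A}$. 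An Eilenberg--Moore morphism $k$ (one with $k\alpha=\beta\,\Amonad k$) is then automatically a $\Theorify{\Amonad}$-homomorphism, so this assignment is functorial, and by construction it commutes with the forgetful functors to $\Set$.

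For the converse, the key is a \emph{substitution} identity: in any small $\Theorify{\Amonad}$-algebra $A$, setting $\alpha(m) := h^{A}_{A,m}(\id_{A})$ for $m\in\Amonad A$, one has $h^{A}_{X,m}(g)=\alpha((\Amonad g)(m))$ for every $g\colon X\to A$ and $m\in\Amonad X$. This follows from a single instance of the second equation of $\Theorify{\Amonad}$ with $f=\eta_{A}\cdot g$, using $(\eta_{A}\cdot g)^{*}=\Amonad g$ and the first equation to collapse the inner operations. Granting it, the rest is formal: the first equation at $X=A$ with the identity tuple gives $\alpha\eta_{A}=\id_{A}$; taking the second equation at $X=\Amonad A$, $f=\id_{\Amonad A}$ (so $f^{*}=\mu_{A}$) and applying the substitution identity gives $\alpha\cdot\Amonad\alpha=\alpha\cdot\mu_{A}$; and the substitution identity also shows that the two passages are mutually inverse on objects and that $\Theorify{\Amonad}$-homomorphisms coincide with Eilenberg--Moore morphisms. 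This yields the displayed isomorphism $\Set^{\Amonad}\cong_{\Set}\CatMod{\Set}{\Theorify{\Amonad}}$.

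It remains to show that $\Theorify{\Amonad}$ genuinely presents $\Amonad$, i.e.\ that each $F_{\Theorify{\Amonad}}(Z)$ is small (and that the presented monad is $\Amonad$). For this I would establish a normal-form result: every $\Sigma$-term over $Z$ is provably equal to one of the form $h_{Z,m}(z\mid z\in Z)$ with $m\in\Amonad Z$. The base case uses the first equation with $m=\eta_{Z}(z)$ for the variable $z$, and the inductive step rewrites $h_{X,m}(h_{Z,n_{x}}(z\mid z\in Z)\mid x\in X)$ to $h_{Z,f^{*}m}(z\mid z\in Z)$ via the second equation with $f(x)=n_{x}$. Evaluating such normal forms in the $\Theorify{\Amonad}$-algebra attached to $(\Amonad Z,\mu_{Z})$ at the tuple $\eta_{Z}$ returns $m$ (since $\mu\cdot\Amonad\eta=\id$), so distinct $m$ yield distinct elements; hence $F_{\Theorify{\Amonad}}(Z)\cong\Amonad Z$ is small, and chasing the adjunction identifies the presented monad, with its unit and multiplication, as $\Amonad$.

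I expect the only genuine subtlety to be the substitution identity: it is the one place where the two equation schemes have to be combined in precisely the right instantiation, and all the remaining verifications — the algebra laws, fullness of the comparison functor, and the normal-form induction — are downstream of it. Everything else is bookkeeping, modulo the foundational care for large theories and hyperlarge categories already set up above.
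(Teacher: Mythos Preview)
The paper states this theorem without proof, treating it as a standard fact about the correspondence between monads on $\Set$ and large theories; there is nothing in the paper to compare against beyond the definition of $\Theorify{\Amonad}$ itself.

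Your argument is correct and is the natural one. The two directions of the comparison functor are set up in the only reasonable way, and your ``substitution identity'' $h^{A}_{X,m}(g)=\alpha((\Amonad g)(m))$ is exactly the lemma that makes everything else routine: it simultaneously shows that the inverse passage lands in Eilenberg--Moore algebras, that homomorphisms match up, and that the two constructions are mutually inverse. The normal-form argument for smallness of $F_{\Theorify{\Amonad}}(Z)$ is likewise the expected one, and your use of the free $\Amonad$-algebra $(\Amonad Z,\mu_{Z})$ to separate normal forms is the clean way to see that the map $\Amonad Z\to F_{\Theorify{\Amonad}}(Z)$ is a bijection. There is no gap.
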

\noindent We exploit the theory $\Theorify{\Amonad}$ to define algebras in other categories.
\begin{defi} \label{def:alggen}
Let $\Amonad$ be a monad on $\Set$.
%\begin{enumerate}\item 
  For any category with small products $\CC$, we define the \emph{$\Amonad$-algebras in $\CC$} via
  \begin{displaymath}
    \CatMod{\CC}{\Amonad} \eqdefsp \CatMod{\CC}{\Theorify{\Amonad}}
  \end{displaymath}
% \item A  \emph{large $\Amonad$-algebra} is a 
%   $\Amonad$-algebra in $\Class$.
% \end{enumerate}
\end{defi}

\begin{rem}
%  Both parts of Def.~\ref{def:alggen} are closely linked to the literature.
%   \begin{enumerate}
%   \item 
$\CatMod{\CC}{\Amonad}$ is equivalent to the category of functors from $\Set_{\Amonad}^{\op}$ to $\CC$ preserving small products.  The latter was used for the same purpose in~\cite{Dubuc70,HylandLevyEtAl07}.
 %  \item  $\Amonad$ may be regarded as a ``relative monad'', in the sense of~\cite{}, on the inclusion $J : \Set \to \Class$.  The algebras of this relative monad are precisely the large $\Amonad$-algebras.  
%   \end{enumerate}
\end{rem}

\noindent We close the circle by seeing that the composite (\ref{eqn:sffs}) does not affect the category of algebras.
\begin{thm} \label{thm:alggood}
Let $\Amonad$ be a monad on $\Set$ genuinely presented by $\CT$.   For any category $\CC$ with small products, we have
  \begin{displaymath}
    \CatMod{\CC}{\CT} \congsp{\CC} \CatMod{\CC}{\Amonad}
  \end{displaymath}
% The (large, locally small) category of Eilenberg-Moore $\Amonad$-algebras is isomorphic over $\Set$ to the (hyperlarge, locally small) category of $\Amonad$-algebras in $\Set$.
% \item \label{item:nomatter} Let $\CT$ be a genuinely presenting theory for $\Amonad$, and let $\CC$ have small products.  Then the category  of $\CT$-algebras in $\CC$ is isomorphic over $\CC$ to the category of $\Amonad$-algebras in $\CC$.
\end{thm}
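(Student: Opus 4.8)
The plan is to build mutually inverse functors $G:\CatMod{\CC}{\CT}\to\CatMod{\CC}{\Theorify{\Amonad}}$ and $H:\CatMod{\CC}{\Theorify{\Amonad}}\to\CatMod{\CC}{\CT}$ which are the identity on underlying $\CC$-objects and on morphisms; since they then automatically commute with the forgetful functors to $\CC$, and since $\CatMod{\CC}{\Amonad}$ is by definition $\CatMod{\CC}{\Theorify{\Amonad}}$, this gives the asserted isomorphism over $\CC$. The only input used is that $\CT$ \emph{genuinely} presents $\Amonad$: then $F_\CT(X)$ is small and equals $\Amonad X$, every element of $\Amonad X$ is the class $[t]$ of some $\Sigma$-term $t$ over $X$, two terms being identified iff they are congruent under the equations $\CE$; moreover $\eta_X(a)=[p_a]$ and Kleisli extension along $f:X\to\Amonad Y$ is substitution of the terms $f(x)$ for the variables $p_x$.

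The technical core, which I would establish first, is that terms can be interpreted in a $\CT$-algebra $A$ in $\CC$. By recursion on the (well-founded, possibly infinitely branching, but set-sized) tree $t$ — a variable going to the corresponding projection, $f(t_i\mid i)$ going to $f^A\circ\langle t_i^A\mid i\rangle$ — each $\Sigma$-term $t$ over a set $X$ denotes a morphism $t^A:A^X\to A$; this is legitimate since all arities, hence all products involved, are small. A structural induction on $t$ gives the substitution identity $(t[\sigma])^A=t^A\circ\langle\sigma(x)^A\mid x\in X\rangle$ whenever $\sigma$ assigns to each $x\in X$ a term over $Y$. Hence the relation $t^A=(t')^A$ on terms over $X$ is a congruence closed under substitution instances of the equations in $\CE$ (using that $A$ models $\CE$), so it contains the kernel of $F_\CT(X)$ and $t^A$ depends only on $m=[t]\in\Amonad X$; write $m^A:A^X\to A$. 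Read through $F_\CT=\Amonad$, the substitution identity becomes $(f^{*}m)^A=m^A\circ\langle(f(x))^A\mid x\in X\rangle$, i.e.\ substitution corresponds to Kleisli composition.

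Given the lemma, define $G$ on objects by leaving the carrier of $A$ unchanged and setting $(h_{X,m})^A:=m^A$; this is well defined, and the two equation schemes of $\Theorify{\Amonad}$ hold — the first because $(\eta_X(a))^A$ is the $a$-th projection, the second because it is precisely the Kleisli form of the substitution identity. Define $H$ by leaving the carrier of a $\Theorify{\Amonad}$-algebra $B$ unchanged and reconstructing, for each $f\in\Sigma$, $f^B:=(h_{\alpha(f),\,m_f})^B$ where $m_f\in\Amonad(\alpha(f))$ is the class of the generic term $f(p_x\mid x\in\alpha(f))$. That $HB$ models $\CE$ follows once one shows, by induction on $t$ (base case from the first scheme, inductive step from the second), that $t^{HB}=(h_{X,[t]})^B$ for every term $t$ over every $X$: an equation $(X,t,t')\in\CE$ then gives $[t]=[t']$, whence $t^{HB}=(h_{X,[t]})^B=(h_{X,[t']})^B=(t')^{HB}$. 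On morphisms both functors are the identity, by the (easy, inductive) fact that a $\CC$-map between carriers respects all $\Sigma$-operations iff it respects all derived operations $t^A$ iff it respects all operations $h_{X,m}$.

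Finally, $HG=\id$ and $GH=\id$ follow from the same computations: the operation of $HGA$ at $f\in\Sigma$ is $(h_{\alpha(f),m_f})^{GA}=(m_f)^A=(f(p_x\mid x))^A=f^A$, while for $GHB$ we have $(h_{X,m})^{GHB}=m^{HB}=(h_{X,m})^B$ by the identity just proved. (One could alternatively go through the remark and identify both sides with the small-product-preserving functors $\Set_{\Amonad}^{\op}\to\CC$, but this requires the same lemma and only yields an equivalence, not an isomorphism over $\CC$.) I expect the main obstacle to be the term-interpretation lemma — more precisely, arranging the substitution identity and the well-definedness of $m\mapsto m^A$ so that they go through for infinitary (small-arity) terms interpreted in an arbitrary category with small products, rather than in $\Set$; once everything is phrased diagrammatically this is a routine induction and the remainder is formal bookkeeping.
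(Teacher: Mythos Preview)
The paper states this theorem without proof, treating it (together with Theorem~\ref{thm:montheory}) as a standard background result; the only hint toward an argument is the remark that $\CatMod{\CC}{\Amonad}$ is equivalent to the category of small-product-preserving functors $\Set_{\Amonad}^{\op}\to\CC$. Your direct construction of mutually inverse functors over $\CC$ is correct and is the natural way to prove the statement as an actual isomorphism rather than merely an equivalence. The term-interpretation lemma you isolate---that $t^A:A^X\to A$ is well defined on $\CE$-classes, together with the substitution identity in its Kleisli form---is precisely the content needed, and both inductions (the substitution lemma and the identity $t^{HB}=(h_{X,[t]})^B$) go through as you outline, using respectively the first and second equation schemes of $\Theorify{\Amonad}$. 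Your final remark is also on point: routing through product-preserving functors would re-use the same lemma and deliver only $\simeq_{\CC}$ rather than $\cong_{\CC}$.
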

\noindent Theorem~\ref{thm:alggood} tells us that the category of algebras depends only on the monad, not on the choice of genuinely presenting theory.   With these conversions in place, we will switch back and forth freely
between monads and large theories as convenient. (Monads are
also formally equivalent to large Lawvere theories~\cite{Linton66},
which were used in the theory of generic side-effects
in~\cite{HylandLevyEtAl07}.)

The following easy results will be useful.
\begin{prop} \label{lemma:adapttheory}\hfill
  \begin{enumerate}[\em(1)]
  \item Let $T_0$ be a monad on $\Set$ genuinely presented by $\CT = (\Sigma,\CE)$, and $\alpha : T_0 \to T_1$ a componentwise surjective monad morphism.  Then $T_1$ is genuinely presented by the theory with signature $\Sigma$ and equation class
    \begin{displaymath}
      \CE \cup \{ (X,t,t') \mid \alpha_X[t]_{\CE} = \alpha_X[t']_{\CE}  \}.
    \end{displaymath}
\item Let $T$ be a monad on $\Set$ genuinely presented by $\CT = (\Sigma,\CE)$. For any set $E$, the monad $T(- + E)$ is genuinely presented by the theory with signature $\Sigma$ extended by  a family of constants $(c_e \mid e \in E)$, and equation class $\CE$.
  \end{enumerate}
\end{prop}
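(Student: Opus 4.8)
The plan is to prove the two parts of Proposition~\ref{lemma:adapttheory} separately, in each case constructing the candidate genuinely presenting theory explicitly and verifying that (a) it presents the right monad (via Theorem~\ref{thm:montheory} and the universal property of the free algebra) and (b) its free algebras on sets are small, which is what ``genuinely presenting'' adds on top of merely presenting.

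For part (1), write $\CT' = (\Sigma, \CE')$ with $\CE' = \CE \cup \{(X,t,t') \mid \alpha_X[t]_\CE = \alpha_X[t']_\CE\}$. The key observation is that a $\CT'$-algebra in a category $\CC$ with small products is precisely a $\CT$-algebra on which the extra equations hold; so for $\CC = \Set$ we get that $\CatMod{\Set}{\CT'}$ is the full subcategory of $\CatMod{\Set}{\CT} \simeq \Set^{T_0}$ consisting of those $T_0$-algebras whose structure map factors through $\alpha_X$ — equivalently, the $T_1$-algebras, using that $\alpha$ is a componentwise surjective monad morphism (surjectivity ensures the factorisation, when it exists, is unique and is a $T_1$-algebra). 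One then computes $F_{\CT'}(X)$: it is the quotient of $F_\CT(X) = F_{T_0}(X)$ by the congruence generated by the new equations, and since $\alpha_X \colon T_0 X \to T_1 X$ is a surjection that respects all the operations and equations in $\CE'$, the free $\CT'$-algebra on $X$ is exactly $T_1 X$, which is small. Hence $\CT'$ genuinely presents $T_1$. The main subtlety here is bookkeeping around the congruence closure: one must check that imposing the ``pointwise'' new equations $[t]_\CE \mapsto [t']_\CE$ on the free $\CT$-algebra yields precisely the kernel of $\alpha_X$, not something smaller; this uses that $\alpha$ is a monad morphism (so its kernel is a congruence compatible with substitution) together with surjectivity.

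For part (2), let $\CT' = (\Sigma', \CE)$ where $\Sigma'$ adds a constant $c_e$ for each $e \in E$ and the equation class is unchanged. A $\CT'$-algebra in $\CC$ is a $\CT$-algebra $A$ together with an $E$-indexed family of global elements $1 \to A$, i.e.\ a map $E \to A$ in $\Set$ when $\CC = \Set$. The free $\CT'$-algebra on $X$ is therefore the free $\CT$-algebra on the disjoint union $X + E$ — the constants $c_e$ are forced to be the generators indexed by $E$, and no equations constrain them — so $F_{\CT'}(X) = F_\CT(X + E) = T(X+E)$, which is small because $\CT$ has small free algebras. Checking that the induced monad is $T(-+E)$ with the evident unit and multiplication is then a routine verification that the free-algebra adjunction for $\CT'$ has underlying monad $T(-+E)$; the only point requiring a line of care is the description of Kleisli extension, which one reads off from the universal property of $F_\CT(X+E)$.

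The genuinely harder of the two is part (1): part (2) is essentially a reindexing and the main obstacle there is purely notational. In part (1) the real work is the claim that the new equations cut $F_\CT(X)$ down to exactly $T_1 X$; I expect to spend most of the argument showing that the congruence on $T_0 X$ generated by $\{([t]_\CE,[t']_\CE) : \alpha_X[t]_\CE = \alpha_X[t']_\CE\}$ coincides with $\ker(\alpha_X)$, for which componentwise surjectivity of $\alpha$ (so that every element of $T_1 X$ is hit, and the quotient map is a monad morphism splitting appropriately) is exactly the hypothesis needed. Once that identification is made, Theorem~\ref{thm:montheory} and Theorem~\ref{thm:alggood} do the rest.
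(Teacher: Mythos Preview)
The paper does not give a proof of this proposition; it is introduced as one of ``the following easy results'' and stated without argument. Your plan is correct and supplies exactly the details one would expect, and it matches the evident intended argument behind the paper's statement.

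One small remark on part~(1): you write that the subtlety is checking the congruence generated by the new equations is ``not something smaller'' than $\ker(\alpha_X)$. In fact that direction is immediate, since for every pair $(s,s')$ of $\Sigma$-terms over $X$ with $\alpha_X[s]_\CE = \alpha_X[s']_\CE$, the triple $(X,s,s')$ is itself one of the new equations, and the identity substitution places $(s,s')$ directly into the congruence. The direction that actually uses the monad-morphism hypothesis is the opposite one: you need that the congruence is \emph{not larger} than $\ker(\alpha_X)$, i.e.\ that substitution instances (over $X$) of new equations $(Y,t,t')$ from other variable sets $Y$ stay inside $\ker(\alpha_X)$, and that $\ker(\alpha_X)$ is closed under the $\Sigma$-operations. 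Both follow from $\alpha$ being a monad morphism (compatibility with Kleisli extension), and surjectivity then gives $T_0X/\ker(\alpha_X)\cong T_1X$. Your proposal already names these ingredients; only the labelling of which inclusion is the nontrivial one is reversed.

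Part~(2) is, as you say, essentially bookkeeping: the identification $F_{\CT'}(X)\cong F_\CT(X+E)=T(X+E)$ is immediate from the universal property, and smallness is inherited from $\CT$.
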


\noindent The notion of \emph{rank} for a monad refers to the arity of
the involved algebraic operations. Formally, a monad is
\emph{$\kappa$-ranked} for a regular cardinal $\kappa$ if the
underlying functor preserves $\kappa$-filtered colimits. A monad is
\emph{ranked} if it is $\kappa$-ranked for some $\kappa$. The
$\kappa$-ranked monads on $\Set$ are precisely those that are induced
by a \emph{small} theory in which the arity of each operation is less
than $\kappa$, so that the ranked monads are precisely those induced
by small theories. Computationally relevant unranked monads include
the continuation monad and the unbounded powerset and are presented
further below. It has been shown that the algebraic view of ranked
monads gives rise to computationally natural operations; e.g.\ the
state monad (with state set $S=V^L$ for sets $V$ of values and $L$ of
locations) can be algebraically presented in terms of operations
$\mathit{lookup}$ and $\mathit{update}$~\cite{PlotkinPower02}.

We give some standard examples of computational monads, mostly
from~\cite{Moggi91}:
\begin{exa}[Computational Monads]\label{expl:theories}
  \mbox{}\begin{enumerate}[(1)]
  \item \emph{Global state:} as stated initially, $TX=S\to(S\times X)$
    is a monad (for this and other standard examples, we omit the
    description of the remaining data), the well-known \emph{state
      monad}. % A variant is the \emph{partial state monad}
    % $TX=S\to(S\times X)_\bot$, where $X_\bot$ extends $X$ by a fresh
    % element $\bot$ representing non-termination. (This induces a
    % relational model of non-termination in the spirit of PDL and
    % related formalisms; a domain-theoretic treatment of
    % non-termination requires a domain-enriched Lawvere theory in which
    % $\bot$ is explicitly a bottom element).
    %We denote the associated
    %large Lawvere theories by $L_S$ for (terminating) global state and
    %$L_{S,\bot}$ for partial state. E.g., $L_{S,\bot}$ has maps
    %$m\to(S\to(S\times n)_\bot)$ as morphisms $n\to m$.
  \item\label{item:non-det} \emph{Nondeterminism:} the generic (unranked)
    monad for nondeterminism is the one presented by the covariant powerset functor $\PSet$. Variants arise on the one hand by restricting
    to nonempty subsets, thus ruling out non-termination, and on the
    other hand by bounding the cardinality of subsets. We denote
    nonemptyness by a superscript $\varstar$, and cardinality bounds
    by subscripts. E.g., the monad ${\PSet^{\varstar}_{\omega_1}}$ describes countable non-blocking
    nondeterminism.
    %Of course, finite non-determinism
    %can be described by corresponding finitary Lawvere theories, which
    %are the full subcategories of the respective large Lawvere
    %theories spanned by $\aleph_0$. 
    Yet another variant arises by replacing sets with countable multisets, i.e.\
    maps $X\to(\Nat\cup\{\infty\})$, thus modelling weighted
    nondeterminism~\cite{DrosteKuichEtAl09}. Let us denote by $T_{\mathit{mult}}$ the corresponding countable multiset monad.
  \item\label{item:cont} \emph{Continuations:} The continuation monad
    maps a set $X$ to the set $(X\to R)\to R$, for a fixed set $R$ of
    results. We denote the corresponding unranked monad as $T^R_{\mathit{cont}}$.
  \item \emph{Input/Output:} For a given set $I$ of input symbols, the
    monad $T_I$ for input is generated by a single $I$-ary operation;
    this monad is induced by an \emph{absolutely free} theory, i.e.\
     one without equations. Similarly, given a set $O$ of output
    symbols, the monad $T_O$ for output is induced by a family of
    unary operations indexed over $O$.
  \end{enumerate}
\end{exa}

\noindent A convenient way of denoting generic computations is the
so-called \emph{computational metalanguage}~\cite{Moggi91},
% which is
%essentially an internal language for monads (and
which has found its way into functional programming in the shape of
Haskell's do-notation. We briefly outline the version of the
metalanguage we use below; this version is deliberately simplistic, as
it serves only to elucidate the definition of tensors.

The metalanguage denotes morphisms in the underlying category of a
given monad, using the monadic structure; since we are working over
$\Set$, the metalanguage just denotes maps in our setting. We let a
signature $\Sigma$ consist of a set $\mathcal{B}$ of \emph{base
  types}, to be interpreted as sets, and a collection of typed
\emph{function symbols} $f:A_1\to A_2$ to be interpreted as functions,
where $A_1,A_2$ are types. Here, we assume that the set $\mathcal{T}$
of \emph{types} is generated from the base types by the grammar
\begin{equation*}
  \mathcal{T}\owns A_1,A_2::=1\mid B\mid A_1\times A_2\mid TA_1\qquad (B\in\mathcal{B})
\end{equation*}
where $\times$ is interpreted as set theoretic product, $1$ is a
singleton set, and $T$ is application of the given monad. We then have
standard formation rules for terms-in-context $\Gamma\rhd t:A$, read
`term $t$ has type $A$ in context $\Gamma$', where a \emph{context} is
a list $\Gamma=(x_1:A_1,\dots,x_n:A_n)$ of typed variables (later,
contexts will mostly be omitted):

\begin{align*}
&&\infrule{x:A\in\Gamma}{\Gamma \rhd x:A}&&\quad
\infrule{f:A\to B\in\Sigma\qquad\Gamma\rhd t:A}{\Gamma\rhd f(t):B}&&\quad
\infrule{}{\Gamma\rhd\unit:1}&&
\end{align*}

\begin{align*}
&&\infrule{\Gamma\rhd s:A\qquad\Gamma\rhd t:B}
{\Gamma\rhd \langle s,t\rangle : A \times B}&&\quad
\infrule{\Gamma\rhd t:A\times B}
{\Gamma\rhd\fst t:A}&&\quad
\infrule{\Gamma\rhd t:A\times B}
{\Gamma\rhd\snd t:B}&&
\end{align*}

\begin{align*}
&&
\infrule{\Gamma \rhd t:A}
{\Gamma \rhd \ret{t}:TA}&&\quad
\infrule{\Gamma\rhd p:TA\qquad\Gamma, x:A\rhd q:TB}
{\Gamma\rhd \letTerm{\bind{x}{p}}{q}:TB}&&\\[.01ex]
\end{align*}

\noindent Only the operations in the last line are specific to monads;
they are called~\emph{return} and~\emph{binding}, respectively. For
binding, we use % a notational variant of
Haskell's $\DO$-notation%  that
% makes the standard equational laws of the metalanguage more
% readable
. Return is interpreted by the unit $\eta$ of the monad, and
can be thought of as returning a value. A binding
$\letTerm{\bind{x}{p}}{q}$ % (which in $\DO$-notation corresponds to $\DO\:x\leteq p;q$) 
executes $p$, binds its result to $x$, and
then executes $q$, which may use $x$%  (if not, mention of $x$ may be
% omitted)
. Binding is right associative, i.e.\
$\letTerm{\bind{x}{p}}{\letTermR{\bind{y}{q}}{r}}=\letTerm{\bind{x}{p}}{(\letTerm{\bind{y}{q}}{r})}$.
It is interpreted using Kleisli composition and strength, where the
latter serves to propagate the context $\Gamma$~\cite{Moggi91}. In
consequence, one has the \emph{monad laws}
\begin{gather*}
\letTerm{\bind{x}{p}}{\ret{x}} = p\qquad
\letTerm{\bind{x}{\ret{a}}}{p} = p[a/x]\\[1ex]
\letTerm{\bind{x}{(\letTerm{\bind{y}{p}}{q})}}{r} = \letTerm{\bind{y}{p}}{\letTermR{\bind{x}{q}}{r}}.
\end{gather*}
Terms of a type $T A$ are called~\emph{programs}. We say that two programs $p:TA$ and $q:TB$ \emph{commute} if they satisfy the equation
\begin{equation}\label{eq:comm}
\letTerm{\bind{x}{p}}{\letTermR{\bind{y}{q}}{\ret\brks{x,y}}} = \letTerm{\bind{y}{q}}{\letTermR{\bind{x}{p}}{\ret\brks{x,y}}} \ : \ T(A\times B).
\end{equation}

\begin{rem}
  The notion of commutation of programs relates as expected to the
  standard notion of \emph{commutative monad}: a monad is commutative
  iff all its programs commute.
\end{rem}

\section{Tensors and Tensor Algebras}\label{sec:tensors}

\noindent We begin by defining tensor of monads in terms of a universal property, just as the coproduct is defined to be an initial cocone.
\begin{defi} \label{def:univtensor}
 % \begin{enumerate}\item 
    Let $\Monadone$ and $\Monadtwo$ be monads on $\Set$.
     A cocone from $\Monadone$ and $\Monadtwo$----that is to say, a
      monad $R$ together with morphisms $\tau:\Monadone\to R$
      and $\sigma:\Monadtwo\to R$---is a \emph{commuting cocone}
      when every two programs of the form $\tau_X(p)$ and
      $\sigma_Y(q)$ commute (see Section~\ref{sec:monads}).    A \emph{tensor product} $\Monadone \tensor \Monadtwo$ is an
      initial commuting cocone, i.e.\ a commuting cocone from which there is a unique cocone morphism to any commuting cocone.
%   \item A monad $\Monadone$ on $\Set$ is \emph{tensorable} if $\Monadone
%     \tensor \Monadtwo$ exists for every monad $\Monadtwo$ on $\Set$.
%   \end{enumerate}
\end{defi}
\noindent Just as the coproduct of monads corresponds to the disjoint union of theories~\cite{AdamekBowlerLevyMilius:coprodmonset,HylandPlotkinEtAl06,Kelly:transfin,LuthGhani02}, so the tensor can be described in terms of theories.

\begin{defi}
  Let $\CTone$ and $\CTtwo$ be large theories with signatures
  $\Sigma_1$ and $\Sigma_2$ respectively.
  \begin{enumerate}[(1)]\item 
    Their {\em disjoint union} has as its signature the disjoint union
    of $\Sigma_1$ and $\Sigma_2$, and its equations consist of those
    of $\CTone$ together with $\CTtwo$. 
\item The {\em tensor product} $\CTone
    \tensor \CTtwo$ of the theories is on the same signature as the
    disjoint union, and has as its equations all those of the
    disjoint union together with the {\em commutativity} equation
    \begin{equation}\label{eq:commute}
    f(g(x_{ij}\mid j \in \alpha(g))\mid i \in \alpha(f)) = g(f(x_{ij}\mid i \in
    \alpha(f))\mid j \in \alpha(g))
  \end{equation}
saying that $f$ and $g$ {\em commute}
    for any $f \in \Sigma_1$ and $g \in \Sigma_2$.
  \end{enumerate}
\end{defi}
\begin{rem}
  A theory $\CT$ with signature $\Sigma$ is said to be
  \emph{commutative} if all its operations are algebra homomorphisms,
  that is, if Equation~(\ref{eq:commute}) holds for all
  $f,g\in\Sigma$. Clearly, if $\CT$ has small free
  algebras, then $\CT$ is commutative iff the monad it presents is
  commutative.
\end{rem}
\noindent Tensor algebras can also be viewed as algebras in categories
of algebras.
\begin{prop}
For large theories $\CTone$ and $\CTtwo$ and category $\CC$ with small products, we have
\begin{displaymath}
  \CatMod{\CC}{(\CTone \tensor \CTtwo)}
\congsp{\CC} \CatMod{\CatMod{\CC}{\CTtwo}}{\CTone}
\end{displaymath}
\end{prop}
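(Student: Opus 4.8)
The plan is to unwind both sides directly from the definitions of $\CT$-algebra in a category with small products. Recall that a $\CT$-algebra in $\CC$ is an object $A$ with an operation $f^A\colon A^{\alpha(f)}\to A$ for each $f$, subject to the term equations of $\CT$; a homomorphism is a map in $\CC$ commuting with all operations. The right-hand side $\CatMod{\CatMod{\CC}{\CTtwo}}{\CTone}$ first forms the category $\BD\eqdefsp\CatMod{\CC}{\CTtwo}$, noting (as the excerpt states) that the forgetful functor $\BD\to\CC$ creates small products, so $\BD$ itself has small products; then one takes $\CTone$-algebras in $\BD$. So an object of the right-hand side is: an object $A$ of $\CC$, together with a $\CTtwo$-structure $(g^A)_{g\in\Sigma_2}$ satisfying the $\CTtwo$-equations, together with $\CC$-maps $f^A\colon A^{\alpha(f)}\to A$ for $g\in\Sigma_1$ that (i) satisfy the $\CTone$-equations and (ii) are $\CTtwo$-algebra homomorphisms $A^{\alpha(f)}\to A$, where $A^{\alpha(f)}$ carries the pointwise (product) $\CTtwo$-structure.

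First I would observe that the data of such an object is precisely a $\CC$-object $A$ with a combined $(\Sigma_1\sqcup\Sigma_2)$-structure. The key point is that condition (ii) — that each $f^A$ for $f\in\Sigma_1$ is a homomorphism with respect to each $g^A$ for $g\in\Sigma_2$ — is, when spelled out using the pointwise structure on the power $A^{\alpha(f)}$ and the universal property of the product, exactly the commutativity equation~(\ref{eq:commute}) for that pair $(f,g)$. This is the heart of the argument and is essentially a diagram chase: the homomorphism condition $f^A\cdot (g^{A^{\alpha(f)}}) = g^A \cdot (f^A)^{\alpha(g)}$ becomes, after composing with projections and using that $g^{A^{\alpha(f)}}$ is computed componentwise, the two nested-composite maps $A^{\alpha(f)\times\alpha(g)}\to A$ appearing on the two sides of~(\ref{eq:commute}). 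Hence the objects of $\CatMod{\CatMod{\CC}{\CTtwo}}{\CTone}$ are exactly the $\CC$-objects equipped with a $(\Sigma_1\sqcup\Sigma_2)$-structure satisfying: the equations of $\CTone$, the equations of $\CTtwo$, and the commutativity equations — i.e.\ exactly the objects of $\CatMod{\CC}{(\CTone\tensor\CTtwo)}$.

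Next I would check that the morphisms agree: a morphism on the right-hand side is a $\CC$-map commuting with all $f^A$ ($f\in\Sigma_1$) and, being a morphism in $\BD=\CatMod{\CC}{\CTtwo}$, also with all $g^A$ ($g\in\Sigma_2$); this is literally a homomorphism for $\CTone\tensor\CTtwo$. Composition and identities are inherited from $\CC$ on both sides, so the assignment is an isomorphism of categories, and it manifestly commutes with the forgetful functors down to $\CC$, giving $\congsp{\CC}$.

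The main obstacle — the only genuinely non-formal step — is the translation of the "homomorphism" clause into the commutativity equation, because it requires being careful about how the $\CTtwo$-structure on the power $A^{\alpha(f)}$ is defined (componentwise, via the projections) and then repeatedly invoking the universal property of small products to identify a map into $A$ with its family of components; there is also a mild bookkeeping point that $\alpha(f)$ and $\alpha(g)$ are small, so all the products and powers involved genuinely exist in $\CC$ (and in $\BD$). Once that correspondence is set up cleanly, the rest is routine. I would present it by fixing $f\in\Sigma_1$, $g\in\Sigma_2$ and chasing the single square, then noting the argument is uniform in the pair.
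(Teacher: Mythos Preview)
The paper states this proposition without proof, so there is no argument to compare against. Your proposal is correct and is the standard unwinding: the only substantive step is exactly the one you identify, namely that the homomorphism condition for $f^A\colon A^{\alpha(f)}\to A$ with respect to the componentwise $\CTtwo$-structure on the power is, after expanding the product structure, precisely the commutativity equation~(\ref{eq:commute}).
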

\noindent As in Section~\ref{sec:monads} we adapt tensor algebras from theories to monads.
\begin{defi} \label{def:tensoralg}
  Let $\Monadone$ and $\Monadtwo$ be monads on $\Set$.
  \begin{enumerate}[(1)]
  \item~\cite{Manes69} A \emph{small $(\Monadone,\Monadtwo)$-tensor algebra} is a triple $(X,\alpha,\beta)$ where $X$ is a set and $\alpha$ and $\beta$ are respectively Eilenberg-Moore $\Monadone$- and $\Monadtwo$-algebra structures on $X$, such that for all sets $Y,Z$ and all $p\in S Y$, $q\in T Z$,
  $f:Y\times Z\to X$, the following equation, called the \emph{tensor
    law}, holds
  \begin{flalign*}
    \beta(\Monadone(\lambda z.\, \alpha (\Monadtwo f_{\_,z}\; p)) q) = \alpha(\Monadtwo(\lambda
    y.\, \beta (T f_{y,\_}\; q)) p)
  \end{flalign*}
  where $f_{\_,z}(y)=f_{y,\_}(z)=f(y,z)$ for $(y,z)\in Y\times Z$.
  Morphisms of $(\Monadone,\Monadtwo)$-tensor algebras are maps between the respective
  carriers which are homomorphic for both $\Monadone$ and $\Monadtwo$. 
 The (large, locally small) category
  of $(\Monadone,\Monadtwo)$-tensor algebras is denoted $\SmallTalgOf{\Monadone}{\Monadtwo}$.
  \item For any category with small products $\CC$, we define
    \begin{displaymath}
      \TAlgOf{\CC}{\Monadone}{\Monadtwo} \eqdefsp \CatMod{\CC}{(\Theorify{\Monadone}\tensor \Theorify{\Monadtwo})}
    \end{displaymath}
 %  \item A \emph{large $(\Monadone,\Monadtwo)$-tensor algebra} is a tensor algebra in $\Class$. 
  \end{enumerate}
\end{defi}
\noindent We then have the following counterpart of Theorems~\ref{thm:montheory} and \ref{thm:alggood}.
\begin{thm}\label{thm:nomattertensor}
  Let $\Monadone$ and $\Monadtwo$ be monads on $\Set$.
  \begin{enumerate}[\em(1)]
  \item The two parts of Def.~\ref{def:tensoralg} agree: we have 
    \begin{displaymath}
      \SmallTalgOf{\Monadone}{\Monadtwo} \congsp{\Set} \TAlgOf{\Set}{\Monadone}{\Monadtwo}
    \end{displaymath}
%  the (large, locally small) category of small $(\Monadone,\Monadtwo)$-tensor algebras is isomorphic over $\Set$ to the (hyperlarge, locally small) category of $(\Monadone,\Monadtwo)$-tensor algebras in $\Set$.
  \item \label{item:nomattertensor} Let $\CTone$ and $\CTtwo$ be genuinely presenting theories for $\Monadone$ and $\Monadtwo$ respectively.  For any category $\CC$ with small products, we have
    \begin{displaymath}
      \CatMod{\CC}{(\CTone \tensor \CTtwo)} \congsp{\CC} \TAlgOf{\CC}{\Monadone}{\Monadtwo}
    \end{displaymath}
  \end{enumerate}
\end{thm}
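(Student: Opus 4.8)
The plan is to derive both parts from two facts already established for a single large theory---namely that a genuinely presenting theory $\CT$ of a monad $\Amonad$ has the same category of algebras as $\Amonad$ itself in any category with small products (Theorems~\ref{thm:montheory} and~\ref{thm:alggood})---together with the preceding Proposition, which exhibits algebras for a tensor of theories as ``algebras in a category of algebras'': $\CatMod{\CC}{(\CTone\tensor\CTtwo)}\cong_{\CC}\CatMod{\CatMod{\CC}{\CTtwo}}{\CTone}$. One auxiliary observation I would record first: if $F\colon\BD\to\BD'$ is an isomorphism of categories commuting with the forgetful functors to $\CC$, each of which creates small products, then $F$ carries product cones to product cones and hence induces an isomorphism $\CatMod{\BD}{\CT}\cong\CatMod{\BD'}{\CT}$ over $\CC$ for every large theory $\CT$; i.e.\ the category of $\CT$-algebras in a category with small products is invariant, over $\CC$, under such isomorphisms of the base. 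This ``transport'' lets one swap a category of $\CTtwo$-algebras for an isomorphic one underneath a further layer of $\CTone$-structure.

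For part~(1) I would unwind $\CatMod{\Set}{(\Theorify\Monadone\tensor\Theorify\Monadtwo)}$ directly. An object is a set $X$ carrying an interpretation of every operation of $\Theorify\Monadone$ and of $\Theorify\Monadtwo$, subject to: (a) the equations of $\Theorify\Monadone$; (b) the equations of $\Theorify\Monadtwo$; and (c) the commutativity equation~\eqref{eq:commute} between each operation of $\Theorify\Monadone$ and each operation of $\Theorify\Monadtwo$. By Theorem~\ref{thm:montheory}, the data in~(a) is precisely an Eilenberg-Moore $\Monadone$-algebra structure $\alpha$ on $X$, with the generic operation interpreted by $h_{Y,m}(p)=\alpha(\Monadone(p)(m))$ for $p\colon Y\to X$, $m\in\Monadone Y$; symmetrically~(b) is an Eilenberg-Moore $\Monadtwo$-algebra structure $\beta$. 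Substituting these formulas into~\eqref{eq:commute} and currying (using the canonical strengths of the monads) shows that the family of equations~(c) holds exactly when $(\alpha,\beta)$ satisfies the tensor law of Definition~\ref{def:tensoralg}(1). On morphisms, being a homomorphism for all operations of $\Theorify\Monadone$ and $\Theorify\Monadtwo$ is, again by Theorem~\ref{thm:montheory}, the same as being a homomorphism for both $\Monadone$ and $\Monadtwo$. Hence $\CatMod{\Set}{(\Theorify\Monadone\tensor\Theorify\Monadtwo)}\cong_{\Set}\SmallTalgOf{\Monadone}{\Monadtwo}$, which is part~(1).

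For part~(2) I would apply the preceding Proposition to both $\CTone\tensor\CTtwo$ and $\Theorify\Monadone\tensor\Theorify\Monadtwo$, reducing the claim to $\CatMod{\CatMod{\CC}{\CTtwo}}{\CTone}\cong_{\CC}\CatMod{\CatMod{\CC}{\Theorify\Monadtwo}}{\Theorify\Monadone}$. Since $\CTtwo$ and $\Theorify\Monadtwo$ both genuinely present $\Monadtwo$ (the latter by Theorem~\ref{thm:montheory}), Theorem~\ref{thm:alggood} gives $\CatMod{\CC}{\CTtwo}\cong_{\CC}\CatMod{\CC}{\Monadtwo}=\CatMod{\CC}{\Theorify\Monadtwo}$; transporting $\CTone$-algebras along this isomorphism (by the auxiliary observation) yields $\CatMod{\CatMod{\CC}{\CTtwo}}{\CTone}\cong_{\CC}\CatMod{\CatMod{\CC}{\Theorify\Monadtwo}}{\CTone}$. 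Finally, in the category $\BD=\CatMod{\CC}{\Theorify\Monadtwo}$, which has small products, Theorem~\ref{thm:alggood} applies once more---$\Monadone$ being genuinely presented both by $\CTone$ and by $\Theorify\Monadone$---giving $\CatMod{\BD}{\CTone}\cong_{\BD}\CatMod{\BD}{\Monadone}=\CatMod{\BD}{\Theorify\Monadone}$, an isomorphism which is in particular over $\CC$ via the forgetful functor $\BD\to\CC$. Composing these isomorphisms over $\CC$, and using that $\CatMod{\CatMod{\CC}{\Theorify\Monadtwo}}{\Theorify\Monadone}\cong_{\CC}\CatMod{\CC}{(\Theorify\Monadone\tensor\Theorify\Monadtwo)}=\TAlgOf{\CC}{\Monadone}{\Monadtwo}$, gives part~(2).

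I expect the main obstacle to be the bookkeeping in part~(1): although every individual step is conceptually routine, matching the abstract commutativity equations~\eqref{eq:commute} between the generic operations $h_{Y,m}$ to the precise shape of the tensor law requires care with the (unique) strengths of $\Monadone$ and $\Monadtwo$ and with the currying between $\Monadtwo(X^{Y})$ and $Y$-indexed families, and one must pin down that the isomorphism of Theorem~\ref{thm:montheory} sends $\alpha$ to exactly the interpretation $h_{Y,m}(p)=\alpha(\Monadone(p)(m))$. A lesser point is that Theorem~\ref{thm:alggood} and the Proposition are invoked above over a possibly proper-class-sized base such as $\CatMod{\CC}{\CTtwo}$; since their proofs use nothing about the base beyond existence of small products, this causes no difficulty.
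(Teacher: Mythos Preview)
The paper does not supply a proof of this theorem; it is stated without argument, in the same spirit as Theorems~\ref{thm:montheory} and~\ref{thm:alggood} and the Proposition preceding it. So there is no ``paper's own proof'' to compare against.

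Your argument is correct and is the natural one given the surrounding material. Part~(2) is exactly the two-step reduction the paper's setup invites: apply the Proposition to peel off the tensor as iterated algebras, then invoke Theorem~\ref{thm:alggood} twice (once at the inner layer over $\CC$, once at the outer layer over $\CatMod{\CC}{\Theorify{\Monadtwo}}$) to replace each genuinely presenting theory by $\Theorify{\Monadone}$ or $\Theorify{\Monadtwo}$. Your transport observation---that an isomorphism over $\CC$ between categories with created small products induces an isomorphism of $\CT$-algebra categories over $\CC$---is the only glue needed, and it is immediate. Part~(1) is likewise the expected unwinding: the equations of $\Theorify{\Amonad}$ are designed precisely so that a $\Theorify{\Amonad}$-algebra in $\Set$ is an Eilenberg--Moore algebra, and once one writes $h_{Y,m}(p)=\alpha(\Monadone p\,(m))$ and the analogous formula for $\Monadtwo$, the commutativity schema~\eqref{eq:commute} specialises on the nose to the tensor law of Definition~\ref{def:tensoralg}(1). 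Your caveat about matching strengths and currying is appropriate but not an obstacle: on $\Set$ strength is unique and given by the evident map, so the verification is mechanical.
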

\noindent Once again part (\ref{item:nomattertensor}) tells us that the category of tensor algebras depends only on the two monads, not on the choice of genuinely presenting theories. 

Next we relate tensor algebras to the universal property of Def.~\ref{def:univtensor}.  The following result appears
 (modulo translation from the language of large Lawvere theories into the language of monads) in~\cite{HylandLevyEtAl07}.
\begin{thm}\label{prop:tensor-ex}   
 The tensor product of monads $\Monadone, \Monadtwo$ on $\Set$ exists if and only if the
  forgetful functor from $(\Monadone,\Monadtwo)\TAlg$ to $\Set$ is monadic,
  equivalently has a left adjoint. In this case, the monad induced by
  the adjunction is the tensor $\Monadone\tensor \Monadtwo$.
\end{thm}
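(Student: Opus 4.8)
The plan is to recognize this as a standard monadicity argument and split the biconditional into its easy and hard directions. The central claim is that the category of tensor algebras $(\Monadone,\Monadtwo)\TAlg$, which by Theorem~\ref{thm:nomattertensor} can be presented as $\CatMod{\Set}{(\Theorify{\Monadone}\tensor\Theorify{\Monadtwo})}$, is monadic over $\Set$ precisely when its forgetful functor $U$ has a left adjoint. One direction of this equivalence is immediate: if $U$ has a left adjoint then $U$ is monadic. This follows from (a suitable form of) Beck's monadicity theorem once one checks that $U$ creates coequalizers of $U$-split pairs; but for a category of algebras over a large equational theory presented in the standard way, $U$ creates \emph{all} limits and all coequalizers that exist, so if the left adjoint exists the Eilenberg--Moore comparison functor is an isomorphism. (Concretely, $U$ is faithful, reflects isomorphisms, and creates $U$-split coequalizers because the algebra structure transports along such coequalizers; these are exactly the hypotheses of the crude monadicity theorem.) So the real content is the forward direction and the identification of the induced monad.

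First I would treat the forward direction: suppose the tensor $\Monadone\tensor\Monadtwo$ exists in the sense of Def.~\ref{def:univtensor}, i.e.\ we have an initial commuting cocone $(R,\tau,\sigma)$. I want to show $U$ has a left adjoint, and that the induced monad is $R$. The bridge is to exhibit, for each set $X$, a free $(\Monadone,\Monadtwo)$-tensor algebra on $X$ whose carrier is $RX$. Given a monad morphism $\tau:\Monadone\to R$ and $\sigma:\Monadtwo\to R$, the free $R$-algebra $\mu^R_X : R(RX)\to RX$ pulls back along $\tau_{RX}$ and $\sigma_{RX}$ to $\Monadone$- and $\Monadtwo$-algebra structures $\alpha$, $\beta$ on $RX$; the commuting-cocone condition on $(R,\tau,\sigma)$ is exactly what is needed to verify that $(\RX,\alpha,\beta)$ satisfies the tensor law, since the tensor law for these pulled-back structures unwinds (using the metalanguage computation rules and the monad morphism axioms) to the commutation equation \eqref{eq:comm} for programs of the form $\tau(p)$ and $\sigma(q)$. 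Then I would check this is the \emph{free} such algebra: given any tensor algebra $(Z,\alpha',\beta')$ and a map $g:X\to Z$, the two algebra structures assemble (via Theorem~\ref{thm:nomattertensor}(1), or directly) into a commuting cocone structure on $Z$ whose underlying monad is the one presented by $\Theorify{\Monadone}\tensor\Theorify{\Monadtwo}$-algebras, so the universal property of $R$ as initial commuting cocone yields a unique monad morphism $R\to(\text{monad of }Z)$, hence a unique $(\Monadone,\Monadtwo)$-tensor algebra homomorphism $RX\to Z$ extending $g$. This establishes the adjunction and simultaneously identifies the monad generated by the adjunction with $R=\Monadone\tensor\Monadtwo$.

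Conversely, I would run the argument backwards: if $U:(\Monadone,\Monadtwo)\TAlg\to\Set$ has a left adjoint, then by the easy monadicity direction it is monadic, and the induced monad $\Amonad'$ comes with a canonical $(\Monadone,\Monadtwo)$-tensor algebra structure on each free object $\Amonad'X$; restricting to free algebras and composing with the two forgetful-functor factorizations gives monad morphisms $\Monadone\to\Amonad'$ and $\Monadtwo\to\Amonad'$ which form a commuting cocone (the tensor law on the free algebras forces commutation of the images), and its universal property against an arbitrary commuting cocone $(R',\tau',\sigma')$ is obtained by noting that $R'$-algebras carry a $(\Monadone,\Monadtwo)$-tensor algebra structure, hence factor through the adjunction, yielding the required unique cocone morphism $\Amonad'\to R'$. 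Thus $\Amonad'$ is the tensor.

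The main obstacle is the bookkeeping in the forward direction: pinning down precisely why the commuting-cocone condition is \emph{equivalent} to the tensor law for the pulled-back algebra structures, i.e.\ translating between the program-commutation equation \eqref{eq:comm} written in the metalanguage and the explicit tensor law of Def.~\ref{def:tensoralg}(1) phrased with $\Monadone$, $\Monadtwo$ and the strength. This is essentially a diagram chase through the monad morphism axioms and the interpretation of $\DO$-notation via Kleisli composition and strength; it is routine but fiddly, and it is where a careless argument would go wrong. Everything else — the crude monadicity check for the reverse direction, and the transfer between the theory presentation $\Theorify{\Monadone}\tensor\Theorify{\Monadtwo}$ and the monad-level statements — is handled by Theorems~\ref{thm:montheory}, \ref{thm:alggood} and \ref{thm:nomattertensor} already in hand.
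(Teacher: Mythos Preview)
The paper does not actually give a proof of this theorem: it is stated with the remark that it ``appears (modulo translation from the language of large Lawvere theories into the language of monads) in~\cite{HylandLevyEtAl07}''. So there is no in-paper argument to compare against, and your sketch must be assessed on its own merits.

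Your treatment of the equivalence ``has a left adjoint $\Leftrightarrow$ monadic'' via Beck for a category of models of a (large) equational theory is correct, and your backward direction is essentially right: once $U$ is monadic with induced monad $\Amonad'$, the free $\Amonad'$-algebras are tensor algebras, giving monad morphisms $\Monadone\to\Amonad'$ and $\Monadtwo\to\Amonad'$; and for any commuting cocone $(R',\tau',\sigma')$ the functor $\Set^{R'}\to(\Monadone,\Monadtwo)\TAlg\cong\Set^{\Amonad'}$ over $\Set$ yields the unique $\Amonad'\to R'$.

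The forward direction, however, has a genuine gap. You want to show that if $R$ is an initial commuting cocone then $RX$ is the free $(\Monadone,\Monadtwo)$-tensor algebra on $X$; for this you must extend an arbitrary $g:X\to Z$ (with $(Z,\alpha',\beta')$ a tensor algebra) to a tensor-algebra morphism $RX\to Z$. You invoke initiality of $R$, but $R$ is initial among commuting cocones of \emph{monads}, not among algebras. The sentence ``the two algebra structures assemble \dots\ into a commuting cocone structure on $Z$ whose underlying monad is the one presented by $\Theorify{\Monadone}\tensor\Theorify{\Monadtwo}$-algebras'' is where the argument breaks: a single algebra $Z$ is not a monad, and appealing to ``the monad presented by $\Theorify{\Monadone}\tensor\Theorify{\Monadtwo}$'' is circular, since that is precisely the monad whose existence and identification with $R$ you are trying to establish.

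The standard repair is to use the endomorphism (continuation) monad $\langle Z,Z\rangle$ given by $X\mapsto Z^{(Z^X)}$. A $\Monadone$-algebra structure on $Z$ is the same datum as a monad morphism $\Monadone\to\langle Z,Z\rangle$, and likewise for $\Monadtwo$; the tensor law on $(Z,\alpha',\beta')$ is exactly the statement that these two morphisms form a commuting cocone into $\langle Z,Z\rangle$. Initiality of $R$ then yields a unique monad morphism $R\to\langle Z,Z\rangle$, i.e.\ an $R$-algebra structure on $Z$ compatible with $\alpha',\beta'$, and the free $R$-algebra property of $RX$ now gives the required extension $RX\to Z$. Your phrase ``monad of $Z$'' may have been reaching for $\langle Z,Z\rangle$, but as written the step does not go through without making this explicit.
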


% Just as taking the disjoint union of theories corresponds to taking the
% categorical \emph{sum} of
% monads~\cite{HylandPlotkinEtAl06,LuthGhani02}, it is not surprising that  

% This is justified by the following proposition proved (modulo translation from the language of large Lawvere theories into the language of monads) in~\cite{HylandLevyEtAl07}.
% \begin{prop}\label{prop:tensoragree}
%   The two definitions of tensor (Def.~\ref{def:algtensor}--\ref{def:univtensor}) agree.  A tensor $\Monadtwo \tensor \Monadtwo$ according to one definition is also a tensor according to the other.
% \end{prop}
% We call a monad $\Monadone$ \emph{tensorable} if the tensor
%   $\Monadone\tensor \Monadtwo$ exists for any $\Monadtwo$. 

\begin{remdefn}\label{rem:spurious}
  There are essentially three alternatives regarding the existence of
  a tensor of monads $\Monadone,\Monadtwo$ genuinely presented by large
   theories $\CTone$, $\CTtwo$.
  \begin{enumerate}[(1)]\item 
    The tensor theory $\CTone\tensor\CTtwo$ may fail to have
    free small algebras, in which case $\Monadone\tensor \Monadtwo$
    does not exist. 
\item $\CTone\tensor\CTtwo$ may have free small
    algebras but not small free algebras, in which case the tensor
    $\Monadone\tensor \Monadtwo$ does exist but does not have the
    expected form; we call such monad tensors
    \emph{spurious}. 
\item $\CTone\tensor\CTtwo$ may have small
    free algebras, in which case $\Monadone\tensor \Monadtwo$ exists
    and has the expected form, i.e.\ maps a set $X$ to the underlying
    set of $F_{\CTone\tensor\CTtwo}(X)$; we call such monad tensors
    \emph{genuine}.
  \end{enumerate}
By Theorem \ref{thm:nomattertensor}(\ref{item:nomattertensor}) with $\CC=\Class$, this three-way classification must depend only on the monads $\Monadone$ and $\Monadtwo$ and not on the choice of genuinely presenting theories.  We shall say that a monad $\Monadone$ on $\Set$ is \emph{tensorable} when $\Monadone \tensor \Monadtwo$ exists for every monad $\Monadtwo$ on $\Set$.  If these tensors are all genuine, $\Monadone$ is \emph{genuinely tensorable}.

It is currently an open
  question whether spurious monad tensors exist.  Our results
  established below are always the stronger of the two possible
  variants: where we show existence of tensors, we actually show also
  that the tensor is genuine, and where we show non-existence, we
  prove that not even a spurious tensor exists.
\end{remdefn}

The following is a straightforward consequence of Prop.~\ref{lemma:adapttheory}.
\begin{prop}\label{prop:tensor-inheritance}
  \begin{enumerate}[\em(1)]
  \item Let $T_0,T_1,T_2$ be monads on $\Set$, and $\alpha : T_0 \to T_1$  a componentwise surjective monad morphism.  If $T_0 \tensor S$ exists then so does $T_1 \tensor S$ and the induced morphism $T_0 \tensor S \to T_1 \tensor S$ is componentwise surjective.  Moreover if $T_0 \tensor S$ is genuine then so is $T_1 \tensor S$. 
  \item Let $T$ and $S$ be monads on $\Set$, and $E$ a set.  If $T \tensor S$ exists then so does $T(- + E) \tensor S$ and the induced morphism $(T\tensor S)(- + E) \to T(- + E) \tensor S$ is componentwise surjective.  Moreover if $T \tensor S$ is genuine then so is $T(- + E) \tensor S$. 
  \end{enumerate}
\end{prop}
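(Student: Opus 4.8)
The plan is to reduce both parts to Proposition~\ref{lemma:adapttheory} by working at the level of genuinely presenting theories and then invoking Theorem~\ref{thm:nomattertensor}(\ref{item:nomattertensor}) to transport the conclusions back to the level of monads. Fix genuinely presenting theories $\CTone = (\Sigma_1,\CE_1)$ for $T$ (resp.\ $T_0$) and $\CTtwo = (\Sigma_2,\CE_2)$ for $S$; such theories exist by Theorem~\ref{thm:montheory}, e.g.\ $\Theorify{\cdot}$.

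\medskip\noindent\emph{Part (1).} By Proposition~\ref{lemma:adapttheory}(1), since $\alpha : T_0 \to T_1$ is componentwise surjective, $T_1$ is genuinely presented by the theory $\CT_1' = (\Sigma_1, \CE_1')$ with $\CE_1' = \CE_1 \cup \{(X,t,t') \mid \alpha_X[t]_{\CE_1} = \alpha_X[t']_{\CE_1}\}$; note $\CT_1'$ has the \emph{same signature} as $\CTone$ and a \emph{larger} equation class. Now compare the tensor theories $\CTone \tensor \CTtwo$ and $\CT_1' \tensor \CTtwo$: they too share a common signature, and the equations of the latter are those of the former together with the extra equations coming from $\CE_1'\setminus\CE_1$ (the commutativity equations~(\ref{eq:commute}) are identical on both sides since the signature is unchanged). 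Hence the identity on terms induces a componentwise surjective monad morphism at the level of \emph{free $\Class$-algebras}, i.e.\ $F_{\CTone\tensor\CTtwo} \twoheadrightarrow F_{\CT_1'\tensor\CTtwo}$, simply because adding equations to a theory yields a quotient of each free algebra. If $T_0 \tensor S$ exists, then by Remark and Definition~\ref{rem:spurious} $\CTone\tensor\CTtwo$ has free small algebras; but a quotient of a small algebra is small, so $\CT_1'\tensor\CTtwo$ also has free small algebras, whence $T_1 \tensor S$ exists (again by~\ref{rem:spurious}). The induced map $T_0\tensor S \to T_1 \tensor S$ is, componentwise, exactly the above surjection of free algebras evaluated at each set, so it is componentwise surjective. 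Finally, if $T_0\tensor S$ is genuine, $\CTone\tensor\CTtwo$ has small free $\Class$-algebras; these are already small, so a fortiori $\CT_1'\tensor\CTtwo$ does too, giving genuineness of $T_1\tensor S$. Throughout, Theorem~\ref{thm:nomattertensor}(\ref{item:nomattertensor}) with $\CC = \Class$ guarantees these statements are independent of the chosen presenting theories.

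\medskip\noindent\emph{Part (2).} Apply Proposition~\ref{lemma:adapttheory}(2): $T(-+E)$ is genuinely presented by $\CT_E = (\Sigma_1 \sqcup \{c_e \mid e\in E\}, \CE_1)$, adjoining $|E|$ fresh constants and no new equations. Then $\CT_E \tensor \CTtwo$ has signature $(\Sigma_1 \sqcup \{c_e\}) \sqcup \Sigma_2$ and equations those of $\CTone\tensor\CTtwo$ plus, for each $e\in E$ and each $g\in\Sigma_2$ of arity $\alpha(g)$, the commutativity equation of the nullary $c_e$ with $g$, namely $c_e = g(c_e \mid j\in\alpha(g))$. The key observation is that $\CT_E\tensor\CTtwo$ is obtained from $\CTone\tensor\CTtwo$ by the \emph{same construction} as in Proposition~\ref{lemma:adapttheory}(2) — adding the constants $c_e$ with empty equation set — followed by imposing the extra $c_e$-vs-$g$ commutativities. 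Concretely, I would show $F_{\CT_E\tensor\CTtwo}(X) \cong F_{\CTone\tensor\CTtwo}(X + E)$: the unique $\CTtwo$-algebra structure forces each $c_e$ to be the image of the corresponding variable from $E$, and the $c_e$-commutativity equations are then automatically satisfied because a constant adjoined to $X+E$ already commutes with everything once we are in the tensor. More carefully, from the universal property: a $(T(-+E), S)$-tensor algebra on $Y$ is a $(T,S)$-tensor algebra structure on $Y$ together with a chosen point $Y$ for each $e\in E$ that is fixed by every $S$-operation — but in a $(T,S)$-tensor algebra, the $S$-structure map $\beta : SY\to Y$ satisfies $\beta\circ\eta^S = \id$, and a point fixed by all $g\in\Sigma_2$ is the same as a point, i.e.\ the "fixed" condition is vacuous once we note that in the free tensor algebra on $X+E$ each generator is already such a point. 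This identifies $(T(-+E))\tensor S$ with $(T\tensor S)(-+E)$ on the nose, up to the componentwise surjection quotienting out the redundant commutativity axioms; smallness and genuineness transfer exactly as in Part~(1) since quotients and term algebras over $X+E$ preserve both properties.

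\medskip\noindent The main obstacle I anticipate is making the comparison in Part~(2) precise: one must check that imposing the extra commutativity equations $c_e = g(c_e,\dots,c_e)$ really does collapse $F_{\CTone\tensor\CTtwo}(X+E)$ onto $F_{\CT_E\tensor\CTtwo}(X)$ rather than producing a proper further quotient, i.e.\ that these equations are already \emph{derivable} in $\CTone\tensor\CTtwo$ once the $E$-variables are present. This is where the tensor law does the work — a generator commutes with all operations of the other theory — but it requires a short induction on $\Sigma_2$-terms, using that every $g\in\Sigma_2$ applied to the constant tuple equals the constant by the commutativity equation~(\ref{eq:commute}) instantiated with a $\Sigma_1$-constant. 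Part~(1) is comparatively routine: the only thing to be careful about is that "componentwise surjective" is preserved under tensoring, which follows because the induced map is literally the quotient map on free algebras set by set.
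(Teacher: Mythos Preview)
Your overall strategy---reduce to Proposition~\ref{lemma:adapttheory} by choosing genuinely presenting theories and comparing the resulting tensor theories---is exactly what the paper has in mind (its proof is the single line ``straightforward consequence of Prop.~\ref{lemma:adapttheory}''), and your Part~(1) is carried out correctly.

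In Part~(2), however, your ``main obstacle'' paragraph chases a phantom and in doing so asserts something false. You correctly observe that $\CT_E\tensor\CTtwo$ is obtained from $(\CTone\tensor\CTtwo)$-with-constants-$c_e$ by adding the equations $g(c_e,\dots,c_e)=c_e$ for $g\in\Sigma_2$. Since the two theories share a signature and the former has strictly more equations, the free algebra for $\CT_E\tensor\CTtwo$ on $X$ is \emph{by definition} the quotient of $F_{\CTone\tensor\CTtwo}(X+E)$ by those equations; there is no ``proper further quotient'' to rule out. That alone gives the componentwise surjection $(T\tensor S)(-+E)\twoheadrightarrow T(-+E)\tensor S$, and smallness and genuineness transfer exactly as in Part~(1). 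You are done at that point.

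What you then try to prove---that the equations $g(c_e,\dots,c_e)=c_e$ are already derivable, making the map an isomorphism---is (a)~not required by the proposition, which asserts only surjectivity, and (b)~not true in general. Your proposed ``short induction \dots\ instantiated with a $\Sigma_1$-constant'' cannot work: the $c_e$ are \emph{not} in $\Sigma_1$, so no commutativity axiom involving them is present in $(\CTone\tensor\CTtwo)$-with-constants. Relatedly, the claim that in a $(T,S)$-tensor algebra ``a point fixed by all $g\in\Sigma_2$ is the same as a point'' is false: take $S$ to be the free-monoid monad, where $g(x,x)=xx\neq x$. Simply delete the derivability discussion and the fixed-point detour; the bare comparison of theories already delivers everything the proposition claims.
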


\noindent
Our negative results will be based on the following simple result.
\begin{defi}
As usual, we say that an algebra is \emph{generated} by a subset $X$
if it does not have a proper subalgebra containing $X$. We say that an
algebra is \emph{$\alpha$-reachable} for a cardinal $\alpha$ if it has
a generating set $X$ of cardinality $|X|\leq\alpha$.
\end{defi}
\begin{cor}\label{cor:tensor-initial}
  The (possibly spurious) tensor $\Monadone\tensor \Monadtwo$ of monads $\Monadone,\Monadtwo$ on $\Set$ exists if and only
  if for every cardinal $\alpha$, the cardinality of
  $\alpha$-reachable small $(\Monadone,\Monadtwo)$-tensor algebras is bounded.
\end{cor}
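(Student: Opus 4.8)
The plan is to derive Corollary~\ref{cor:tensor-initial} from Theorem~\ref{prop:tensor-ex} by analysing when the forgetful functor $U : (\Monadone,\Monadtwo)\TAlg \to \Set$ has a left adjoint. Since $U$ creates limits (the tensor-law equations and the two Eilenberg--Moore structures are all preserved by limits in $\Set$, so $(\Monadone,\Monadtwo)\TAlg$ is complete and $U$ continuous), by the General Adjoint Functor Theorem $U$ has a left adjoint if and only if it satisfies the solution-set condition: for every set $X$ there is a small family of arrows $X \to U(A_i)$ through which every arrow $X \to U(A)$ factors. The crux is to show that this solution-set condition is equivalent to the stated cardinality bound on $\alpha$-reachable tensor algebras, where $\alpha$ ranges over all cardinals (it suffices to read ``$\alpha$'' as the cardinality of $X$).

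First I would prove the backward direction. Suppose that for every cardinal $\alpha$ there is a bound $\beta_\alpha$ on the cardinality of $\alpha$-reachable small $(\Monadone,\Monadtwo)$-tensor algebras. Fix a set $X$ with $|X| = \alpha$. Given any tensor algebra $A$ and any map $g : X \to U(A)$, the subalgebra of $A$ generated by the image $g[X]$ is an $\alpha$-reachable tensor algebra (it is closed under both $\Monadone$- and $\Monadtwo$-structure and inherits the tensor law, which is an equational condition); so its carrier has cardinality at most $\beta_\alpha$, and $g$ factors through the inclusion of this subalgebra. Up to isomorphism there is only a set's worth of tensor algebras of cardinality $\leq \beta_\alpha$, and for each only a set's worth of maps from $X$; collecting these gives the required solution set, and hence the left adjoint exists, so by Theorem~\ref{prop:tensor-ex} the tensor exists.

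For the forward direction, suppose the tensor $\Monadone \tensor \Monadtwo$ exists, equivalently the left adjoint $F$ to $U$ exists. Fix a cardinal $\alpha$ and an $\alpha$-reachable tensor algebra $A$ with generating set $X$, $|X| \leq \alpha$. The inclusion $X \hookrightarrow U(A)$ transposes to an algebra homomorphism $F(X) \to A$, and since $X$ generates $A$ this homomorphism is surjective. Hence $|A| \leq |U(F(X))|$. Now $U(F(X))$ depends only on $X$, and up to the choice of a representative there is only a set of sets of cardinality $\leq \alpha$; taking the supremum of $|U(F(X))|$ over such $X$ yields a cardinal bounding all $\alpha$-reachable tensor algebras. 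This establishes the claimed equivalence.

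The main obstacle is the care required around foundations: $(\Monadone,\Monadtwo)\TAlg$ is a large category whose objects form a proper class, so one must be precise that ``bounded cardinality'' means there is a genuine cardinal (a set) dominating the carriers, and that the solution-set condition is being applied in its correct form for large categories. In particular, for the backward direction one needs that the class of $\alpha$-reachable tensor algebras, once a carrier-cardinality bound $\beta_\alpha$ is imposed, is \emph{essentially small}, so that a legitimate (small) solution set can be extracted; this is where Scott's trick or the global well-ordering mentioned in the Note on Foundations is implicitly used. The passage from ``subalgebra generated by $g[X]$'' to an honest $\alpha$-reachable tensor algebra is routine once one notes that the tensor law, being of the equational form in Definition~\ref{def:tensoralg}, is inherited by subalgebras.
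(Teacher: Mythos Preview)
Your proposal is correct and follows essentially the same approach as the paper: both directions match (quotient of the free tensor algebra for ``only if'', Freyd's General Adjoint Functor Theorem for ``if''). Your argument is considerably more detailed---spelling out the solution-set construction via generated subalgebras and the foundational care about essential smallness---whereas the paper dispatches the ``if'' direction in a single sentence invoking GAFT and the observation that each set carries only set-many tensor-algebra structures.
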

% noindent Note that the criterion refers to existence of possibly
% spurious tensors (see Remark and Definition~\ref{rem:spurious}). We
% will use it to show that in our counterexamples, not even spurious
% tensors exist.
\begin{proof}
  \emph{`Only if'}: If the tensor exists, then there is a free small
  $(\Monadone,\Monadtwo)$-tensor algebra over $\alpha$, and every $\alpha$-reachable small $(\Monadone,\Monadtwo)$-tensor algebra
  is a quotient of it.

  \emph{`If':} To show the forgetful functor from $(\Monadone,\Monadtwo)$-tensor algebras to
  $\Set$ has a left adjoint under the given assumption, we apply Freyd's general adjoint functor
  theorem, since every set carries only set-many small
  $(T,S)$-tensor algebras.
\end{proof}

\noindent Prior to the current results, the state of research
regarding counterexamples to tensorability was as follows. It is
well-known that the tensor of two ranked monads on $\Set$ does
exist~\cite{HylandLevyEtAl07}, so that any counterexample needs to
involve at least one unranked monad. One unranked monad that is known
to show hard-to-control behaviour in many respects is the continuation
monad. It has been shown in~\cite{HylandLevyEtAl07} that the tensor of
any ranked monad with the continuation monad exists, and at the same
time it has been conjectured that the continuation monad fails to be
tensorable (i.e.\ that there exists an unranked monad whose tensor
with the continuation monad fails to exist). However, it has
subsequently been shown that all so-called \emph{uniform} monads are
tensorable~\cite{GoncharovSchroder11b}, and the given proof in fact
implies that the tensors in question are genuine. The class of uniform
monads is quite broad and in particular includes both the powerset
monad and the continuation monad, so that these monads are ruled out
as counterexamples to tensorability.

\section{Tensoring With Bounded Powerset Monads}\label{sec:pow-positive}

\noindent As discussed in the last section, it has been shown using
the uniformity method that the unbounded and countable powerset monads
as well as their restrictions to non-empty subsets are
tensorable~\cite{GoncharovSchroder11b}, and we will show in
Section~\ref{sec:finpow} that the finite powerset monad fails to be
tensorable. Right now, we will show that all non-empty uncountably
bounded powerset monads, i.e.\ submonads of the powerset monad of the
form $\pow$, where $\pow(X)$ denotes the set of non-empty subsets of
$X$ of cardinality less than $\kappa$ for an uncountable regular
cardinal $\kappa$, are genuinely tensorable. (Regularity of $\kappa$
is equivalent to $\pow$ actually being a monad. Requiring $\kappa$ to
be uncountable ensures that $\pow$-algebras have countable joins.)
Interestingly, the proof does not seem to relate to any generalization
of uniformity. From genuine tensorability of $\pow$, genuine
tensorability of the full bounded powerset monad $\PSet_{\kappa}$
(which maps a set $X$ to the set of \emph{all} subsets of $X$ of
cardinality less than $\kappa$) is immediate by
Proposition~\ref{prop:tensor-inheritance} (alternatively, the
proof below can be adapted to full bounded powerset, and in fact
becomes simpler in the process).

Let $\Amonad$ be a monad genuinely presented by $\CT$, and let $X$ be a set. Let $A$ be the free large
$(\pow,\Amonad)$-tensor algebra on $X$; we need to show that $A$ is
small. Let $\hat X$ be the image of $X$ in
$A$. %(see Remark~\ref{rem:tensor-disclaimer}).
For any subset $Y$ of $A$, let $T``Y$ denote the (necessarily small) sub-$\Amonad$-algebra of
$A$ generated by $Y$, and similarly for $\pow$. In general a $\pow$-algebra is a semilattice in which every nonempty subset of size $<\kappa$ has a supremum; we denote by $\le$ the
ordering on $A$ induced by the $\pow$-structure.

\begin{lem}\label{claim:t}
For any $x \in A$ there is $t \in T``{\hat X}$ such that $x \geq t$.
\end{lem}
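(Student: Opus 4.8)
The plan is to find, for an arbitrary $x\in A$, a single element $t$ coming from the $\Amonad$-part of the structure over $\hat X$ that sits below $x$ in the semilattice order. The natural strategy is to exploit that $A$ is \emph{generated} by $\hat X$ using the operations of both $\pow$ and $\Amonad$, i.e.\ that $A=\bigcup_n A_n$ where $A_0=\hat X$ and $A_{n+1}$ is obtained from $A_n$ by closing under one application of either a $\pow$-operation (a supremum of a $<\kappa$-sized nonempty subset) or a $\Amonad$-operation. So I would argue by induction on the least stage $n$ at which $x$ appears. For the base case $x\in\hat X$, we may simply take $t=x$, which lies in $T``\hat X$ (as $\hat X\subseteq T``\hat X$) and satisfies $x\ge t$ reflexively.

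For the inductive step there are two cases according to which kind of operation produced $x$ from elements $x_i$ of an earlier stage. If $x=\sup_{i\in I} x_i$ for a nonempty $I$ of size $<\kappa$, then pick any single $i_0\in I$; by the induction hypothesis there is $t\in T``\hat X$ with $x_{i_0}\ge t$, and since $x=\sup_i x_i\ge x_{i_0}\ge t$ we are done. The more delicate case is when $x$ arises by applying an $\Amonad$-operation, say $x = h_{Z,m}(x_z\mid z\in Z)$ in the notation of the theory $\Theorify{\Amonad}$ (equivalently $x=\beta(\Monadtwo g\,m)$ for the map $g:z\mapsto x_z$), where each $x_z$ lies at an earlier stage. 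Here I would use the tensor law to commute the $\Amonad$-operation past the semilattice structure. Concretely, for each $z$ the induction hypothesis gives $t_z\in T``\hat X$ with $x_z\ge t_z$, i.e.\ $x_z = x_z \vee t_z$; applying the $\Amonad$-operation $h_{Z,m}$ to the family $(x_z\vee t_z\mid z\in Z)$ and using the tensor law (which makes binary join, an operation of $\pow$, a homomorphism for the operations of $\Amonad$) distributes it over the join, yielding $x = x \vee h_{Z,m}(t_z\mid z\in Z)$. Thus $x\ge t$ with $t:=h_{Z,m}(t_z\mid z\in Z)$, and since each $t_z\in T``\hat X$ and $T``\hat X$ is a sub-$\Amonad$-algebra, also $t\in T``\hat X$, completing the induction.

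The main obstacle is making the inductive transfinite construction of $A$ precise and checking that binary join really does distribute over an arbitrary $\Amonad$-operation $h_{Z,m}$ in the way claimed — that is, that the tensor law, which is stated for the structure maps $\alpha,\beta$ and families indexed by the \emph{whole} set $Z$, gives the pointwise distributivity $h_{Z,m}(a_z\vee b_z\mid z) = h_{Z,m}(a_z\mid z)\vee h_{Z,m}(b_z\mid z)$ and hence the absorption identity $x\vee h_{Z,m}(t_z\mid z)=x$ used above. This is exactly the content of the tensor law once one observes that finite (binary) join is itself one of the derived operations of the $\pow$-theory, so the $\Amonad$-operations are join-homomorphisms; the only care needed is the bookkeeping of which maps $f:Y\times Z\to A$ to feed into the tensor law of Definition~\ref{def:tensoralg} to extract this, and the routine verification that $t_z\ge$-absorption propagates. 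Smallness of $T``\hat X$ is automatic since $\Amonad$-algebras generated by a set are small, so $t$ is a genuine element of $A$ as required.
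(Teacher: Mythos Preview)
Your proposal is correct and follows essentially the same route as the paper: induction on term complexity, with the base case $x\in\hat X$ trivial, the join case handled by picking a single summand, and the $\Amonad$-operation case handled by writing $x_z=x_z\vee t_z$ and distributing the operation over the binary join via the tensor law to obtain $x=x\vee f(t_z\mid z)$. The only cosmetic difference is that the paper phrases the induction as ``induction on the complexity of terms'' and works with a chosen presenting theory $\CT$ rather than $\Theorify{\Amonad}$, and it invokes the distributivity $f(a_j\vee b_j)=f(a_j)\vee f(b_j)$ without further comment (it is immediate from the commutativity equation defining the tensor theory).
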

\begin{proof}
By induction on the complexity of terms.
For any $x \in \hat X$, we have $x \geq x\in T``{\hat X}$.

Next, let $x$ have the form $\bigvee_{i \in I} x_i$, where $I$ is a nonempty set of cardinality smaller than $\kappa$, and pick $i\in I$. By the induction hypothesis, we have some $t \in T``{\hat X}$ with $t \leq x_i \leq x$.

Finally, let $x$ have the form $f(x_j\mid j \in \alpha(f))$ for some operation $f$ of $\CT$. By the induction hypothesis we can pick $(t_j \in T``{\hat X}\mid j \in\alpha(f))$ with $t_j \leq x_j$ for each $j \in\alpha(f)$. Then $f(x_j\mid j \in\alpha(f)) = f(x_j \vee t_j\mid j \in\alpha(f)) = f(x_j\mid j\in\alpha(f)) \vee f(t_j\mid j \in\alpha(f))$, so that $x\geq f(t_j\mid j\in\alpha(f)) \in T``T``{\hat X} = T``{\hat X}$.
\end{proof}

\noindent Now define a sequence of subsets $(X_n\mid  n \in {\mathbb N})$ of $A$ by $X_0 = {\hat X}$ and $X_{n+1} = \pow``T``X_n$, and let $X_{\omega} = \bigcup_{n \in \mathbb N} X_n$.

\begin{lem}\label{claim:sup}
Any $x \in \pow``X_{\omega}$ can be written in the form $\bigvee_{n \in \Nbb} x_n$, with each $x_n \in X_n$
\end{lem}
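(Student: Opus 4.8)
The plan is to exploit the fact that $X_\omega$ is closed under the $\Amonad$-operations (since $T``X_n \subseteq X_{n+1} \subseteq X_\omega$) and that each stage $X_{n+1}$ already contains the $\pow$-suprema of small subsets of $T``X_n$. So if $x = \bigvee_{i\in I} y_i$ with $I$ of cardinality $<\kappa$ and each $y_i \in X_\omega$, the naive move is: each $y_i$ lies in some $X_{n_i}$, hence in $T``X_{n_i}$, so $x \in \pow``T``(\bigcup_i X_{n_i})$. But $\bigcup_i X_{n_i}$ need not be one of the $X_n$ — this is the main obstacle, because $\kappa$ may be uncountable while the chain $(X_n)_{n\in\mathbb N}$ is only of length $\omega$, so $\sup_i n_i$ can be $\omega$ itself. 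The trick is therefore not to try to fit $x$ into a single $X_{n+1}$, but to re-express the supremum as a countable join, with the $m$-th term drawn from $X_m$.

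The key step is a reindexing argument. Given $x = \bigvee_{i\in I} y_i$ with each $y_i \in X_{n_i}$, for each $m \in \mathbb N$ let $I_m = \{ i \in I \mid n_i \le m \}$, so $I = \bigcup_m I_m$ and each $I_m$ is a subset of $I$, hence of cardinality $<\kappa$. First I would handle the degenerate possibility that some $I_m$ is empty (only finitely many such $m$, if any, since $I$ is nonempty): for those $m$ simply set $x_m$ to be any fixed element of $X_m$, using that $X_m \supseteq X_0 = \hat X$ is nonempty and that adding the join with a smaller element — via Lemma~\ref{claim:t} or directly — does not change the overall supremum; more carefully, I would choose a witness $t \in T``\hat X \subseteq X_1$ below $x$ (Lemma~\ref{claim:t}) and use it to pad. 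For $m$ with $I_m \neq \emptyset$, set $x_m = \bigvee_{i \in I_m} y_i$; since each $y_i$ with $i \in I_m$ lies in $T``X_{n_i} \subseteq T``X_m$ and $|I_m| < \kappa$, we have $x_m \in \pow``T``X_m = X_{m+1}$. Hmm — that gives $x_m \in X_{m+1}$, not $X_m$; this off-by-one is cosmetic and I would absorb it by shifting indices, i.e.\ defining the sequence so that the $n$-th term is drawn from $X_n$ after the shift, or equivalently noting $X_{m+1} \supseteq X_m$ is false in general but $X_m \subseteq X_{m+1}$ does hold, so a term in $X_{m+1}$ can be recorded at position $m+1$ after reindexing the countable family.

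Finally I would verify $\bigvee_{m\in\mathbb N} x_m = x$. Since $I = \bigcup_m I_m$ and the $I_m$ are increasing, every $y_i$ appears in some $x_m$, so $x \le \bigvee_m x_m$; conversely each $x_m = \bigvee_{i\in I_m} y_i \le \bigvee_{i\in I} y_i = x$, so $\bigvee_m x_m \le x$. Here I use that $\pow``X_\omega$ sits inside the $\pow$-algebra $A$, where suprema of small families exist and behave associatively, so regrouping a $<\kappa$-indexed join as a countable join of $<\kappa$-indexed joins is valid. The only subtlety to spell out is the empty-$I_m$ padding and the index shift, both of which are routine; the conceptual content is entirely the observation that a $<\kappa$-sized index set, sliced by the $\omega$-indexed filtration $(X_n)$, decomposes into countably many $<\kappa$-sized pieces, each of which is small enough to be absorbed at a finite stage.
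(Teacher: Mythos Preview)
Your approach is correct and essentially identical to the paper's: slice $I$ by filtration level, pad the finitely many empty slices using an element $t\le x$ in $T``\hat X$ supplied by Lemma~\ref{claim:t}, and take the join within each nonempty slice. The only difference is that the paper avoids your off-by-one shift by observing that $X_n=\pow``T``X_{n-1}$ is already a $\pow$-subalgebra (hence closed under nonempty $<\kappa$-joins) for every $n\ge 1$, so $\bigvee_{i\in I_n} y_i\in X_n$ directly; accordingly the paper indexes over $\Nbb$ from the start.
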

\begin{proof}
We can write $x$ in the form $\bigvee_{i \in I} x_i$ with $I$ nonempty and of cardinality $<\kappa$ and  $x_i \in X_{\omega}$ for all $i$. For $n \in \Nbb$, let $I_n = \{i \in I \mid x_i \in X_n\}$, so that $\bigcup_{n \in \Nbb} I_n = I$. Pick (by Lemma~\ref{claim:t}) $t \in T``{\hat X}$ such that $t \leq x$, and $n_0 \in \Nbb$ minimal such that $I_n$ is nonempty. Taking $x_n = t\in T``{\hat X}\subseteq X_n$ for $0<n < n_0$ and $x_n=\bigvee_{i \in I_n} x_i$ for $n\ge n_0$, we obtain $x = \bigvee_{n \in \Nbb} x_n$, as required.
\end{proof}

\begin{lem}
$T``\pow `` X_{\omega} \subseteq \pow ``X_{\omega}$
\end{lem}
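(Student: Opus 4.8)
The plan is to prove, by induction on the complexity of terms, that every element of $T``\pow``X_{\omega}$ lies in $\pow``X_{\omega}$. By definition $T``\pow``X_{\omega}$ is the sub-$\Amonad$-algebra of $A$ generated by the subset $\pow``X_{\omega}$, so each of its elements is the value in $A$ of some $\CT$-term whose variables are drawn from $\pow``X_{\omega}$; I would induct on such a term. If the term is a variable, its value already lies in $\pow``X_{\omega}$ and there is nothing to show.

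For the inductive step, write the element as $f^A(z_j\mid j\in\alpha(f))$ for an operation $f$ of $\CT$, where each $z_j$ is the value of a proper subterm over $\pow``X_{\omega}$ and hence, by the induction hypothesis, lies in $\pow``X_{\omega}$. (If $\alpha(f)=\emptyset$ the element is the constant $f^A\in T``X_0\subseteq\pow``T``X_0=X_1\subseteq\pow``X_{\omega}$, so assume $\alpha(f)\ne\emptyset$.) Using Lemma~\ref{claim:sup}, write each $z_j=\bigvee_{n\in\Nbb}z_{j,n}$ with $z_{j,n}\in X_n$. Since $\kappa$ is uncountable, the $\Nbb$-indexed join is an operation of the (small) theory presenting $\pow$, so in the tensor algebra $A$ it commutes with $f$ via the commutativity equation~(\ref{eq:commute}):
\[
  f^A(z_j\mid j\in\alpha(f)) \;=\; f^A\bigl(\,\bigvee_{n\in\Nbb}z_{j,n}\bigm| j\in\alpha(f)\bigr) \;=\; \bigvee_{n\in\Nbb} f^A(z_{j,n}\mid j\in\alpha(f)).
\]
For each $n$ the element $f^A(z_{j,n}\mid j\in\alpha(f))$ is obtained by applying the $\CT$-operation $f$ to members of $X_n$, hence lies in $T``X_n\subseteq\pow``T``X_n=X_{n+1}\subseteq X_{\omega}$. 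Therefore $f^A(z_j\mid j\in\alpha(f))$ is a nonempty countable join of elements of $X_{\omega}$, so it lies in $\pow``X_{\omega}$, which completes the induction.

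The only delicate point, and the one place where the hypotheses really enter, is the commutation step. Because operations of $\CT$ may have infinite arity one cannot push the joins through $f$ one argument at a time, so one must use~(\ref{eq:commute}) in its genuinely infinitary form, with $f$ on one side and the countable join on the other; this is legitimate precisely because $\omega<\kappa$, i.e.\ because $\kappa$ is uncountable, so that the $\Nbb$-indexed join belongs to the presenting theory of $\pow$ and hence commutes with every $\CT$-operation in $A$. The remaining manipulations are routine, relying only on $T``T``(-)=T``(-)$ and on the nesting $X_n\subseteq X_{n+1}\subseteq X_{\omega}$.
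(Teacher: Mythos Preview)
Your proof is correct and follows essentially the same argument as the paper. The only difference is presentational: the paper asserts directly that any $x\in T``\pow``X_\omega$ can be written as a single operation $f$ applied to elements of $\pow``X_\omega$ (which is justified because the canonical theory $\Theorify{\Amonad}$ has the property that every term reduces to depth one), whereas you make the term induction explicit so as to work with an arbitrary presenting theory $\CT$. The commutation step and the passage $f(z_{j,n}\mid j)\in T``X_n\subseteq X_{n+1}\subseteq X_\omega$ are identical to the paper's.
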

\begin{proof}
By Lemma~\ref{claim:sup}, any $x \in T``\pow``X_{\omega}$ can be expressed as $f\left(\bigvee_{n \in \Nbb} x_{i,n}\mid i \in \alpha(f)\right)$ for some operation $f$ of $T$ and $x_{i,n} \in X_n$. Commutation of the $\pow$-structure with $f$ implies $x = \bigvee_{n \in \Nbb} f(x_{i,n}\mid i \in \alpha(f)) \in \pow `` X_{\omega}$ since for each $n \in \Nbb$ we have $f(x_{i,n}\mid i \in \alpha(f)) \in T``X_n \subseteq X_{n+1} \subseteq X_{\omega}$.
\end{proof}

\noindent It follows that $A = T``\pow `` X_{\omega}$, and in
particular $A$ is small. Since $X$ was arbitrary, this implies that
the monad tensor $\pow \tensor T$ exists, and since $T$ was arbitrary,
we obtain (using Proposition~\ref{prop:tensor-inheritance} again)

\begin{thm}
  For every regular cardinal $\kappa>\omega$, $\pow$ and
  $\PSet_\kappa$ are genuinely tensorable.
\end{thm}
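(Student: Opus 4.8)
The plan is to combine the three lemmas above into a single smallness statement about free tensor algebras, and then to transfer the conclusion from $\pow$ to $\PSet_\kappa$ via Proposition~\ref{prop:tensor-inheritance}. Since every monad on $\Set$ is genuinely presented by some theory (Theorem~\ref{thm:montheory}), genuine tensorability of $\pow$ amounts to the following: for every monad $\Amonad$ genuinely presented by a theory $\CT=(\Sigma,\CE)$ and every set $X$, the tensor $\pow\tensor\Amonad$ exists and is genuine. Let $A$ be the free large $(\pow,\Amonad)$-tensor algebra on $X$; it is generated by the image $\hat X$ of $X$, and by Theorem~\ref{thm:nomattertensor} it is the carrier of the free algebra on $X$ for the tensor of $\CT$ with a genuinely presenting theory of $\pow$ --- and such a theory exists, namely the small theory of nonempty $\kappa$-complete semilattices, since $\pow$ is ranked and hence has small free algebras. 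Hence, by Theorem~\ref{prop:tensor-ex} and the trichotomy of Remark and Definition~\ref{rem:spurious}, it suffices to prove that $A$ is small for every $X$: smallness of all these $A$ means that the forgetful functor from $(\pow,\Amonad)$-tensor algebras to $\Set$ has $X\mapsto A$ as a left adjoint (so that $\pow\tensor\Amonad$ exists) and that the tensor theory has small free algebras (so that this tensor is genuine).

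To prove $A$ small I would establish the three lemmas in sequence. The structure they rely on is that $A$, being a $(\pow,\Amonad)$-tensor algebra, is a semilattice in which every nonempty subset of size $<\kappa$ has a join, and that, by the tensor law, every operation of $\CT$ is a homomorphism for all of these joins. Lemma~\ref{claim:t}, proved by induction on term complexity, gives that every element of $A$ dominates some element of $T``{\hat X}$, the sub-$\Amonad$-algebra generated by $\hat X$. Setting $X_0=\hat X$, $X_{n+1}=\pow``T``X_n$ and $X_\omega=\bigcup_{n\in\mathbb N}X_n$, one checks by induction that each $X_n$ is small: $X_0$ is small as the image of a set, and passing to the sub-$\Amonad$- or sub-$\pow$-algebra generated by a small subset of $A$ again yields a small set --- the former because $\CT$ genuinely presents $\Amonad$, so such a subalgebra is a quotient of a small free $\Amonad$-algebra, the latter because $\pow$ sends sets to sets --- whence $X_\omega$, a countable union of small sets, is small. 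Lemma~\ref{claim:sup} then expresses every element of $\pow``X_\omega$ as a countable join $\bigvee_{n\in\Nbb}x_n$ with $x_n\in X_n$, and the last of the three lemmas pushes any operation of $\CT$ through such a countable join, yielding $T``\pow``X_\omega\subseteq\pow``X_\omega$. As the reverse inclusion is trivial, $\pow``X_\omega=T``\pow``X_\omega$ is a subalgebra of $A$ closed under both structures and containing $\hat X$; since $A$ is generated by $\hat X$, this subalgebra equals $A$, which is therefore a small set.

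For $\PSet_\kappa$ I would proceed via $\pow(-+1)$. By Proposition~\ref{lemma:adapttheory}(2), $\pow(-+1)$ is genuinely presented by the theory of nonempty $\kappa$-complete semilattices extended by one constant; imposing on that constant the law that it is an identity for all joins gives, by Proposition~\ref{lemma:adapttheory}(1), a componentwise surjective monad morphism onto the resulting quotient monad, which one checks to be $\PSet_\kappa$ (the new constant now playing the role of the empty join, i.e.\ of $\emptyset$). Thus, from genuine tensorability of $\pow$, Proposition~\ref{prop:tensor-inheritance}(2) yields genuine tensorability of $\pow(-+1)$, and then Proposition~\ref{prop:tensor-inheritance}(1) along this surjection yields genuine tensorability of $\PSet_\kappa$.

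The one genuinely load-bearing step, and the only point at which the hypothesis $\kappa>\omega$ is used, is Lemma~\ref{claim:sup} together with the way it closes the chain $(X_n)$ at stage $\omega$: an arbitrary join of size $<\kappa$ over $X_\omega$ must be reorganised as the countable join --- indexed by the level $n$ --- of joins each living at some bounded level, and for that outer, genuinely countable, join to exist in a $\pow$-algebra one needs $\kappa>\omega$ (this also ensures that countable arity is $<\kappa$, so that the tensor law is applicable there). The attendant bookkeeping --- substituting the term supplied by Lemma~\ref{claim:t} to fill empty fibres, and working over $\Nbb=\mathbb N_{>0}$ so that no level-$0$ summand is required --- is the delicate but otherwise routine remainder.
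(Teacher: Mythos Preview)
Your proposal is correct and follows essentially the same route as the paper: combine the three lemmas to show that the free large $(\pow,\Amonad)$-tensor algebra over any set $X$ equals $\pow``X_\omega$ and is therefore small, conclude that $\pow$ is genuinely tensorable, and then transfer to $\PSet_\kappa$ via Proposition~\ref{prop:tensor-inheritance}. Your treatment of the last step---factoring through $\pow(-+1)$ using both parts of Proposition~\ref{prop:tensor-inheritance}---is more explicit than the paper's one-line appeal to that proposition, but it is exactly the intended argument.
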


\section{Finite Powerset Fails to be Tensorable}
\label{sec:finpow}
\noindent We now turn to our negative results, i.e.\ examples of two
monads whose tensor fails to exist. Necessarily, one of these must be
unranked. In this section and the next, we present two examples of
this kind. The first, to be discussed presently, involves the finite
powerset monad, which is computationally significant as a monadic
model of finite non-determinism; the unranked partner in the example
is a somewhat involved theory that we explain in detail below. The
second example involves to comparatively simple monads, a type of
well-order monad and a free monad; this was actually the first example
to be found~\cite{GoncharovSchroder11}.

In both cases, we construct the unranked partner monad via a large
 theory. First, we introduce the theory and show
that has small free algebras, and hence induces a monad. Then we show
that the tensor product with the (small) theory of the
ranked partner has arbitrarily large tensor algebras all generated by
some particular set. The fact that this tensor product does not
induce a monad is then immediate from Corollary
\ref{cor:tensor-initial}.

We shall think of $\PSet_{\omega}$ as the `free bounded semilattice' monad, that is as the monad corresponding to the  theory of bounded semilattices.

% The class of $\CM$-algebras will be as small as possible subject to the restriction that it include a particular ordinal-indexed sequence of $\CM$-algebras of increasing (and therefore unbounded) size, which will be chosen so that each is a $\CM \tensor \PSet_{\omega}$-algebra, generated by countably many generators. This will exclude the possibility that the tensor product $\CM \tensor \PSet_{\omega}$ exists. We will take care to choose this sequence in such a way that the category of $\CM$-algebras is monadic over $\mathbf{Set}$, and so this construction relies upon some technical preliminaries.

We now define the unranked partner monad $\CM$ as induced by a large
 theory $\CT_{\CM}$. The signature $\Sigma$ of $\CT_{\CM}$ is defined as follows.
For any ordinal $\alpha$, we define $\alpha^+$ to be the least
regular ordinal greater than $\alpha$. Given a ordinal $\delta$,
let $B_{\delta}$ be the set of all triples $(\alpha, \beta, i)$
of ordinals such that $\alpha<\delta$, $\beta < \alpha^+$, $i <
\omega$.
% In particular, $B_1 = \{(0,0,i) | i \in \omega\}$.
%
Let $\Sigma_{\kappa}$ be the signature consisting of a constant symbol $c$ and operation symbols $f_{\alpha, \beta, i}$ of arity $B_{\alpha}$ for each $(\alpha, \beta, i) \in B_{\kappa + 1}$, $\alpha>0$ and let $\Sigma$ be the (large) union of all the $\Sigma_{\kappa}$. %The idea is that $f_{\alpha, \beta, i}$ will be used to generate the element $\{(\alpha, \beta, i)\}$ of $A$ from certain $B_{\alpha}$-tuples of elements of $A_{\alpha}$ which in turn cannot be generated without making a great deal of use of the semilattice structure. To make this precise, we need a definition which captures how much of the semilattice structure must be employed.

The theory $\CT_{\CM}$ consists of equations 
\begin{equation}%\label{eq:m-alg}
f_{\alpha, \beta, i}(t_{\alpha',\beta', i'} \mid (\alpha', \beta', i') \in B_{\alpha})=c
%(f_{\alpha, \beta, i}\in\Sigma\setminus\Sigma_{\kappa})
\end{equation}
whenever the $t_{\alpha', \beta', i'}$ are terms in variables drawn from some set $X$ of cardinality strictly smaller than that of $\alpha$.

We may understand algebras for this theory as follows. 
Given a $\Sigma$-algebra $A$, a set $X$, a map $x\colon X\to A$, and
$\Sigma'\subseteq\Sigma$, let us denote by $\brks{x}_{\Sigma'}$ the
$\Sigma'$-subalgebra of $A$ generated by the image of $x$. Then a
$\CT_{\CM}$-algebra is just a $\Sigma$-algebra that satisfies the
following property: Given a cardinal $\kappa$, a set $X$ of
cardinality $\kappa$ and a map $x:X\to A$, the equation
\begin{equation}\label{eq:m-alg}
f_{\alpha, \beta, i}(a_{\alpha',\beta', i'} \mid (\alpha', \beta', i') \in B_{\alpha})=c
%(f_{\alpha, \beta, i}\in\Sigma\setminus\Sigma_{\kappa})
\end{equation}
holds over $\brks{x}_{\Sigma_{\kappa}}$ whenever $\alpha>\kappa$. Note
that this implies
\begin{equation}
 \langle x \rangle_{\Sigma} = \langle x \rangle
_{\Sigma_{\kappa}}.\label{eq:m-alg-gen}
\end{equation}
%Although the condition defining $\CM$-algebras is not directly
%presented as an equational theory, it is easily seen to be equivalent
%to one: given $\alpha>\kappa$, satisfaction of
%equation~\eqref{eq:m-alg} over all $\brks{x}_{\Sigma_{\kappa}}$ for
%$x:X\to A$ with $\kappa=|X|$ translates into a set of equations, one
%for each substitution of the $a_{\alpha',\beta',i'}$ by
%$\Sigma_\kappa$-terms over $\kappa$ variables.

% \begin{rem}
%   , it is easily seen that $\CM$-algebras are
%   closed under homomorphic images, products, and subobjects, and hence
%   form an equational class by Birkhoff's variety theorem. To see the
%   former, consider, for example, the case when $A$ is a binary
%   product. Suppose that for all $(\alpha', \beta', i') \in
%   B_{\alpha}$, $\brks{a_{\alpha',\beta', i'},b_{\alpha',\beta',
%       i'}}\in\brks{x}_{\Sigma_{\kappa}}$. Then for all $(\alpha',
%   \beta', i') \in B_{\alpha}$, $a_{\alpha',\beta',
%     i'}\in\brks{\fst(x)}_{\Sigma_{\kappa}}$ and $b_{\alpha',\beta',
%     i'}\in\brks{\snd(x)}_{\Sigma_{\kappa}}$, and therefore
% %
% \begin{align*}
% &f_{\alpha, \beta, i}(\brks{a_{\alpha',\beta', i'},b_{\alpha',\beta', i'}} \mid (\alpha', \beta', i') \in B_{\alpha})\\
% =\,&\brks{f_{\alpha, \beta, i}(a_{\alpha',\beta', i'}\mid (\alpha', \beta', i') \in B_{\alpha}),f_{\alpha, \beta, i}(b_{\alpha',\beta', i'} \mid (\alpha', \beta', i') \in B_{\alpha})}\\
% =\,&\brks{c,c}=c.
% \end{align*}
% \end{rem}

\begin{lem}
  The theory $\CT_{\CM}$ has small free algebras.
\end{lem}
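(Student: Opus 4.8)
The plan is to prove that the free $\CT_{\CM}$-algebra $F_{\CT_{\CM}}(X)$ on an arbitrary set $X$ is small by exhibiting it as a quotient of the free algebra of a \emph{small} theory. Write $\kappa=|X|$. The key step is the normal-form statement already recorded in~(\ref{eq:m-alg-gen}): every $\Sigma$-term over $X$ is provably equal in $\CT_{\CM}$ to a term built only from the operation symbols in $\Sigma_{\kappa}$ (together with $c$ and the variables of $X$). I would prove this by well-founded induction on terms. A variable or $c$ is already of the required form. For $t=f_{\alpha,\beta,i}(t_j\mid j\in B_{\alpha})$, the induction hypothesis — applied to the set-many branches, invoking Global Choice to make the replacements simultaneously — rewrites each $t_j$ into a $\Sigma_{\kappa}$-term $t_j'$, and then one splits on $\alpha$: if $\alpha\le\kappa$ then $f_{\alpha,\beta,i}\in\Sigma_{\kappa}$, so $f_{\alpha,\beta,i}(t_j'\mid j)$ is again a $\Sigma_{\kappa}$-term; and if $\alpha>\kappa$ then the $t_j'$ are terms over the set $X$, whose cardinality $\kappa$ lies below $\alpha$, so the defining equation of $\CT_{\CM}$ collapses $f_{\alpha,\beta,i}(t_j'\mid j)$ to $c$.

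This normal-form fact says precisely that the canonical $\Sigma_{\kappa}$-homomorphism from the free $(\Sigma_{\kappa},\emptyset)$-algebra on $X$ into the $\Sigma_{\kappa}$-reduct of $F_{\CT_{\CM}}(X)$, carrying generators to generators, is surjective. Hence $F_{\CT_{\CM}}(X)$ is a quotient of that free algebra, and it remains only to see that the latter is small.

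And it is, because $(\Sigma_{\kappa},\emptyset)$ is a small theory with small arities. Its operation symbols are $c$ together with the $f_{\alpha,\beta,i}$ for $0<\alpha\le\kappa$, $\beta<\alpha^{+}$, $i<\omega$; counting these triples gives at most $\kappa^{+}$ operation symbols, and each has arity $B_{\alpha}$ of cardinality at most $\kappa+\aleph_0$. A small theory therefore has small free algebras — it induces a ranked monad on $\Set$ — so concretely the free $(\Sigma_{\kappa},\emptyset)$-algebra on $X$ consists of the well-founded $\Sigma_{\kappa}$-labelled trees over $X$, and since every node of such a tree has at most $\kappa+\aleph_0$ children its rank lies below the regular cardinal $(\kappa+\aleph_0)^{+}$, so there is only a set of them. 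A quotient of a small algebra is small, hence $F_{\CT_{\CM}}(X)$ is small; as $X$ was arbitrary, $\CT_{\CM}$ has small free algebras.

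There is no deep obstacle, but the point deserving care is exactly the last one. The operations $f_{\kappa,\beta,i}$ are \emph{not} annihilated on generating sets of cardinality $\kappa$, so one might worry that iterated applications $f_{\kappa,\beta,i}(f_{\kappa,\beta',i'}(\cdots))$ generate a proper class of distinct elements; what rules this out is that bounding every arity by $\kappa$ forces every well-founded $\Sigma_{\kappa}$-term to have rank below the regular cardinal $\kappa^{+}$. So the real content of the proof is this observation, the routine cardinal arithmetic bounding $|B_{\alpha}|$ and $|\Sigma_{\kappa}|$, and the standard fact that a small signature has small free algebras.
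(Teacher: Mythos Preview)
Your proof is correct and takes essentially the same approach as the paper: the paper's one-line argument simply invokes~(\ref{eq:m-alg-gen}) for $\kappa=|X|$, and you have spelled out both why that identity holds (the term induction collapsing every $f_{\alpha,\beta,i}$ with $\alpha>\kappa$ to $c$) and why it yields smallness (because $\Sigma_\kappa$ is a small signature with set-sized arities, hence has small free term algebras).
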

\begin{proof}
% Existence of the left adjoint can be established by Freyd's general adjoint functor theorem: The category of $\CM$-algebras is, by definition, complete. Let us verify the solution set condition. Let $X$ be a set, of cardinality $\kappa$. For any $\CM$-algebra $A$ and any map $x:X\to A$, by Lemma~\ref{lem:neg2} the set $\langle x \rangle_{\Sigma_m} = \langle x \rangle _{\Sigma_{M^{\kappa}}}$ is bounded by the size of the (small) absolutely free ${\Sigma}_{\CM}^{\kappa}$-algebra on $X$. Therefore, the class $\{\brks{x}_{\Sigma}\mid A\text{ an $\CM$-algebra, }x:X\to A\}$ is small modulo isomorphism of $\CM$-algebras. Now, for every $\CM$-algebra $A$, every map $x:X\to A$ factors through the inclusion of $\brks{x}_{\Sigma}$ into $A$. 
  The free algebra over a set $X$ satisfies~\eqref{eq:m-alg-gen}
%
%\begin{equation*}
%f_{\alpha, \beta, i}(a_{\alpha',\beta', i'} \mid (\alpha', \beta', i') \in B_{\alpha})=c
%%(f_{\alpha, \beta, i}\in\Sigma\setminus\Sigma_{\kappa})
%\end{equation*}
  for $\kappa=|X|$. % Again since
  % $\CM$-algebras form an equational class, monadicity now follows by
  % Beck's monadicity theorem.
  % Exact argument: Given a congruence, there exists a unique
  % well-defined algebra structure making the quotient map a homomorphism
% and indeed a coequalizer, and by closure under H the quotient
% algebra is an M-algebra.
\end{proof}
\noindent We denote the monad induced by $\CT_\CM$ by the above lemma
by $\CM$. We proceed to construct a sequence of reachable
$\CT_{\CM}$-algebras of unbounded cardinality.  For any cardinal
$\delta$, put $M_{\delta} = \PSet_{\omega}(B_{\delta})$. For the
definition of the $\Sigma$-algebra structure, we need a few technical
preliminaries.
\begin{defi}\label{subsume}
Let $X$ be a set, and let $(x_i \mid i \in \omega)$ be a sequence of elements of $X$. We say that a sequence $(a_i \mid i \in \omega)$ of finite subsets of $X$ {\em subsumes} $(x_i \mid i \in \omega)$ if and only if there is an infinite subset $M$ of $\omega$ such that for any $i, j \in M$ with $i < j$ we have $x_i \in a_j$. 
\end{defi}
\noindent It will be helpful to make use of an unusual sort of quantifier: we use $\exists_{\cofinal} \beta < \alpha.\,\phi(\beta)$ to mean `the set of $\beta < \alpha$ such that $\phi(\beta)$ holds is cofinal in $\alpha$'\footnote{Recall that a subset $C$ of $\alpha$ is cofinal in $\alpha$ if and only if $\forall \beta < \alpha.~\exists \gamma \in C.~\beta \leq \gamma$}. %
\noindent  We put a $\Sigma$-algebra structure on $M_{\delta}$ by taking $f_{\alpha, \beta, i}(a_{\alpha',\beta', i'} \mid (\alpha', \beta', i') \in B_{\alpha})$ to be 
\begin{equation*}
  \begin{array}{rl}
\{(\alpha, \beta, i)\}  &\text{if $\alpha<\delta$ and }\\
&\exists_{\cofinal} \alpha' < \alpha.\,\exists_{\cofinal} \beta' < \alpha'^+.\,(a_{\alpha', \beta', i'} \mid i' \in \omega) \text{ subsumes } ((\alpha', \beta', i')\mid i' \in \omega) \\ \emptyset  &\text{otherwise}
\end{array}
\end{equation*}
and interpreting $c$ as the empty set. We proceed to show that this
does indeed define an $\CT_{\CM}$-algebra.
%Note that according to this definition for any $\beta$, $i$, $f_{0,\beta,i}$ is nullary and identically $\emptyset$.
\begin{lem}\label{above1}
If $(a_i\mid i \in \omega)$ subsumes a sequence $(x_i\mid i \in \omega)$ of distinct elements of $X$ then some $a_i$ contains at least two elements of that sequence. \easy
\end{lem}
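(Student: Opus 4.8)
The plan is to prove Lemma~\ref{above1} by a direct pigeonhole argument. Recall that $(a_i \mid i \in \omega)$ subsumes $(x_i \mid i \in \omega)$ means there is an infinite set $M \subseteq \omega$ such that for all $i < j$ in $M$ we have $x_i \in a_j$. First I would pick the least three elements $i < j < k$ of $M$. Since $i < k$ and $j < k$, subsumption gives both $x_i \in a_k$ and $x_j \in a_k$. As the $x_\ell$ are distinct, $x_i \ne x_j$, so $a_k$ contains at least two elements of the sequence $(x_\ell \mid \ell \in \omega)$, namely $x_i$ and $x_j$. This already yields the claim.

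The only subtlety is making sure that the two chosen indices genuinely lie below $k$ in the order on $\omega$: since $M$ is infinite it has at least three elements, and picking them in increasing order $i < j < k$ ensures both $i$ and $j$ precede $k$, so the subsumption condition applies to each of the pairs $(i,k)$ and $(j,k)$. There is no real obstacle here; the statement is essentially an unwinding of Definition~\ref{subsume} combined with the distinctness hypothesis.
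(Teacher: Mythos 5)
Your argument is correct and is exactly the intended one: the paper marks this lemma as easy and gives no proof, and your unwinding — take the three least elements $i<j<k$ of the witnessing set $M$, so that $x_i,x_j\in a_k$ with $x_i\neq x_j$ by distinctness — is the evident way to fill it in. No gaps.
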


%
%
%
% \noindent We define the class of \emph{$\CM$-algebras} as the smallest subclass of $\Sigma$-algebras containing all the $M_{\kappa}$ and closed under sub-algebras, homomorphic images and small Cartesian products.

% \begin{rem}
%   Since $\CM$-algebras form, by definition, an HSP-class, they are, by
%   Birkhoff's theorem, an equational class of algebras. The equations
%   defining them are exactly the ones that hold in all the $M_\kappa$.
% \end{rem}
\noindent 
 %Given a cardinal $\kappa$ we call a subalgebra of $A$ \emph{$\kappa$-generated} if it the smallest subalgebra of $A$ containing some subset of $A$ whose cardinality does not exceed $\kappa$.
%
%\begin{lem}\label{lem:neg2}
%For every cardinal $\kappa$ there is a global cardinality bound of the sizes of $\kappa$-generated $\CM$-subalgebras of the $M_{\alpha}$ independent of $\alpha$. 
%\end{lem}
%
\begin{lem}\label{lem:neg2}
For every $\delta$, $M_\delta$ is a $\CT_{\CM}$-algebra.
\end{lem}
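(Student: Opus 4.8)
The plan is to verify that $M_\delta = \PSet_\omega(B_\delta)$, equipped with the $\Sigma$-algebra structure just defined, satisfies the characterising property of $\CT_{\CM}$-algebras, namely that for every cardinal $\kappa$, every set $X$ of cardinality $\kappa$, and every map $x : X \to M_\delta$, equation~\eqref{eq:m-alg} holds over $\brks{x}_{\Sigma_\kappa}$ whenever $\alpha > \kappa$. Concretely this means: if $a = f_{\alpha,\beta,i}(a_{\alpha',\beta',i'} \mid (\alpha',\beta',i') \in B_\alpha)$ with $\alpha > \kappa$ and all the arguments $a_{\alpha',\beta',i'}$ lying in $\brks{x}_{\Sigma_\kappa}$, then $a = c = \emptyset$. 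By definition of the structure, $a \neq \emptyset$ only if $\alpha < \delta$ and there are cofinally many $\alpha' < \alpha$ and, for each such, cofinally many $\beta' < \alpha'^+$, with $(a_{\alpha',\beta',i'} \mid i' \in \omega)$ subsuming $((\alpha',\beta',i') \mid i' \in \omega)$. So I must show that when the arguments are confined to $\brks{x}_{\Sigma_\kappa}$ and $\alpha > \kappa$, this subsumption situation cannot occur cofinally often.

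The key is a cardinality/boundedness bound on the elements of $\brks{x}_{\Sigma_\kappa}$. First I would observe that every element of $\brks{x}_{\Sigma_\kappa}$ is a finite subset of $B_\delta$ all of whose triples $(\gamma,\delta',j)$ satisfy $\gamma < \kappa$ or $(\gamma,\delta',j)$ appears in one of the sets in the image of $x$; more precisely, since each operation $f_{\gamma,\beta,i} \in \Sigma_\kappa$ has index $(\gamma,\beta,i) \in B_{\kappa+1}$, so $\gamma \le \kappa$, and the output of $f_{\gamma,\beta,i}$ is either $\emptyset$ or the singleton $\{(\gamma,\beta,i)\}$ with $\gamma \le \kappa$ and $\beta < \gamma^+ \le \kappa^+$, an easy induction on $\Sigma_\kappa$-term structure shows every member of $\brks{x}_{\Sigma_\kappa}$ is a finite union of singletons drawn from $\operatorname{Img}(x)$ together with singletons $\{(\gamma,\beta,i)\}$ with $\gamma \le \kappa$. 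Now suppose toward a contradiction that $a \neq \emptyset$ with $\alpha > \kappa$: then there is some $\alpha'$ with $\kappa < \alpha' < \alpha$ (using cofinality in $\alpha$, and that there are unboundedly many such $\alpha'$ below $\alpha$ since $\alpha > \kappa$) and some $\beta' < \alpha'^+$ with $(a_{\alpha',\beta',i'} \mid i' \in \omega)$ subsuming $((\alpha',\beta',i') \mid i' \in \omega)$. By Definition~\ref{subsume}, subsumption means there is an infinite $M \subseteq \omega$ such that for $i < j$ in $M$ we have $(\alpha',\beta',i) \in a_{\alpha',\beta',j}$. But $a_{\alpha',\beta',j} \in \brks{x}_{\Sigma_\kappa}$, so $(\alpha',\beta',i)$ must either come from $\operatorname{Img}(x)$ or have first coordinate $\le \kappa$; the latter is impossible since $\alpha' > \kappa$. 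So every such triple $(\alpha',\beta',i)$ for $i \in M$ lies in some set in $\operatorname{Img}(x)$.

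The main obstacle, and the heart of the argument, is turning this last observation into a contradiction via a counting argument against $|X| = \kappa$. The point is that the triples $(\alpha',\beta',i)$ thus forced into $\operatorname{Img}(x)$, as $\alpha'$ and $\beta'$ range over the cofinally-many witnesses and $i$ over the infinite set $M$ (which depends on $\alpha',\beta'$), are all distinct, and there are ``too many'' of them. Cofinally many $\alpha' < \alpha$ with $\alpha > \kappa$ gives at least $\operatorname{cf}(\alpha) $ values — and more usefully, for a fixed such $\alpha'$ we already get cofinally many $\beta' < \alpha'^+$, i.e.\ at least $\operatorname{cf}(\alpha'^+) = \alpha'^+ > \alpha' \ge \kappa^+ > \kappa$ distinct values of $\beta'$ (here using regularity of $\alpha'^+$), each contributing (via Lemma~\ref{above1}, since the $(\alpha',\beta',i)$ are distinct in $i$) at least one element of $B_\delta$ lying in some member of $\operatorname{Img}(x)$. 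Since the $(\alpha',\beta',i)$ with distinct $\beta'$ are distinct, and each finite set in $\operatorname{Img}(x)$ can only absorb finitely many of them, we conclude $\operatorname{Img}(x)$ must have cardinality at least $\alpha'^+ > \kappa$, hence $|X| > \kappa$, contradicting $|X| = \kappa$. I expect the delicate bookkeeping to be in making the quantifier ``$\exists_\cofinal \beta' < \alpha'^+$'' yield genuinely $\alpha'^+$-many distinct witnesses (using that $\alpha'^+$ is regular, so a cofinal subset has size $\alpha'^+$) and in invoking Lemma~\ref{above1} correctly to extract, for each $(\alpha',\beta')$, at least one triple that is actually present in $\operatorname{Img}(x)$ rather than merely in the term algebra — but once the boundedness lemma on $\brks{x}_{\Sigma_\kappa}$ is in hand, the rest is a clean pigeonhole against $|X| = \kappa$.
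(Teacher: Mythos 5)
Your overall strategy is the right one and is structurally the paper's: derive a contradiction with $|X|=\kappa$ by exhibiting more than $\kappa$ distinct witnesses that are forced to live in the image of $x$, using the key observation that operations from $\Sigma_\kappa$ only ever produce $\emptyset$ or singletons (your ``finite union'' phrasing of this is an overstatement, since $\Sigma_\kappa$ contains no union operation, but as your version is a superset of the truth it does no harm). The genuine gap is in the mechanism by which you force triples into the image of $x$. You rely on finding a witness $\alpha'$ with $\kappa<\alpha'<\alpha$, so that the triples $(\alpha',\beta',i)$ have first coordinate $>\kappa$ and hence cannot be the singletons introduced by $\Sigma_\kappa$-operations. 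Your justification that such an $\alpha'$ exists --- ``there are unboundedly many such $\alpha'$ below $\alpha$ since $\alpha>\kappa$'' --- is false when $\alpha=\kappa+1$: the interval $(\kappa,\kappa+1)$ is empty, and the only element $\ge\kappa$ that a cofinal subset of $\kappa+1$ can supply is $\kappa$ itself. The property you set out to verify (equation~\eqref{eq:m-alg} over $\brks{x}_{\Sigma_\kappa}$ for \emph{all} $\alpha>\kappa$, which is what feeds into \eqref{eq:m-alg-gen}) does include this case, and for $\alpha'=\kappa$ your first-coordinate argument says nothing: the triples $(\kappa,\beta',i)$ can perfectly well sit inside singletons $\{(\kappa,\beta',i)\}$ produced by $f_{\kappa,\beta',i}\in\Sigma_\kappa$.

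The repair is the tool you cite only parenthetically, namely Lemma~\ref{above1}, deployed where it is actually needed. Since the triples $((\alpha',\beta',i')\mid i'\in\omega)$ are pairwise distinct, subsumption forces some $a_{\alpha',\beta',j(\beta')}$ to contain at least \emph{two} of them, hence to be a non-singleton; and non-singletons in $\brks{x}_{\Sigma_\kappa}$ must lie in the image of $x$ irrespective of first coordinates, precisely because $\Sigma_\kappa$-operations introduce only $\emptyset$ and singletons. This works uniformly for any $\alpha'$ with $\kappa\le\alpha'<\alpha$, and such an $\alpha'$ always exists in the cofinal witness set. That is exactly the paper's proof: fix one such $\alpha_0$, use cofinality and regularity of $\alpha_0^+$ to get $\alpha_0^+>\kappa$ values of $\beta'$, and count the resulting non-singleton sets (the map $\beta'\mapsto a_{\alpha_0,\beta',j(\beta')}$ is finite-to-one because a finite set meets only finitely many second coordinates). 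Your closing pigeonhole is fine as it stands; one further small slip is the claim $\alpha'\ge\kappa^+$, which need not hold ($\alpha'$ could be the ordinal $\kappa+1<\kappa^+$), though the inequality $\alpha'^+>\kappa$ that you actually use is still correct.
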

\begin{proof}
% We restrict to the case $A=M_{\delta}$ since the closure operations generating the $\CM$-algebras from the $M_{\delta}$ are easily seen to preserve the property in question. 

%We will be done once we prove that $\brks{x}_{\Sigma}=\brks{x}_{{\Sigma}_{\CM}^{\kappa}}$. Indeed, the latter equation would mean that the size of $\brks{x}_{\Sigma}$ can be bounded by the size of the absolutely free ${\Sigma}_{\CM}^{\kappa}$-algebra on $X$.

\noindent % We proceed with the proof of the case $A=M_{\delta}$. 
Suppose for a contradiction that $f_{\alpha, \beta, i}(a_{\alpha', \beta', i'} \mid (\alpha', \beta', i') \in B_{\alpha}) \neq c=\emptyset$ with $a_{\alpha',\beta',i'}\in\langle x\rangle_{\Sigma^\kappa}$ for some $x:X\to M_\delta$ where $|X|=\kappa$ and $\alpha>\kappa$. Then ($\alpha<\delta$ and)
\begin{displaymath}
\exists_{\cofinal} \alpha' < \alpha.~\exists_{\cofinal} \beta' < \alpha'^+.~ (a_{\alpha', \beta', i'} \mid i' \in \omega) \text{ subsumes } ((\alpha', \beta', i')\mid i' \in \omega).
\end{displaymath}
Pick $\alpha_0$ such that $\kappa\leq\alpha_0<\alpha$ and the set
\begin{displaymath}
C := \{\beta' < \alpha_0^+ \mid (a_{\alpha_0, \beta', i'} \mid i' \in \omega) \text{ subsumes } ((\alpha_0, \beta', i')\mid i' \in \omega)\}
\end{displaymath}
is cofinal in $\alpha_0^+$ and hence has size $\alpha_0^+ > \alpha_0 \geq \kappa$. Applying Lemma \ref{above1}, for each $\beta' \in C$ we can pick $j(\beta')$ and $k(\beta') \neq k'(\beta')$ such that
\begin{align*}
(\alpha'_0, \beta', k(\beta')) \in a_{\alpha_0', \beta', j(\beta')}, &&
(\alpha'_0, \beta', k'(\beta')) \in a_{\alpha_0', \beta', j(\beta')}
\end{align*}
and therefore $|a_{\alpha'_0, \beta', j(\beta')}|>1$. We thus obtained a subset of $\brks{x}_{{\Sigma}_{\CM}^{\kappa}}$ of cardinality strictly greater than $\kappa$ whose elements are non-singleton sets. However, $\brks{x}_{{\Sigma}_{\kappa}}$ can have at most $\kappa$ non-singleton elements, for $|X|=\kappa$ and application of functions from ${\Sigma}_{\kappa}$ can only introduce either singletons or the empty set. Contradiction.
\end{proof}
\noindent We now go on to show that the $M_\delta$ disprove existence
of the tensor $\CM\tensor\PSet_\omega$.

\begin{lem}\label{subprop1}
Let $X$ be a set, $(x_i\mid i \in \omega)$ a sequence of elements of $X$, and $(a_i\mid i \in \omega)$ and $(b_i\mid i \in \omega)$ sequences of finite subsets of $X$. Then $(a_i \cup b_i\mid i \in \omega)$ subsumes $(x_i\mid i \in \omega)$ if and only if at least one of $(a_i\mid i \in \omega)$ and $(b_i\mid i \in \omega)$ does.
\end{lem}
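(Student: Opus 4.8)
The plan is to prove the two implications of the biconditional separately, with the ``if'' direction being essentially trivial (monotonicity of subsumption) and the ``only if'' direction carrying all the content. First I would record the easy observation: if $(a_i \mid i \in \omega)$ subsumes $(x_i \mid i \in \omega)$ via an infinite witness set $M \subseteq \omega$, then since $a_j \subseteq a_j \cup b_j$ for every $j$, the same set $M$ witnesses that $(a_i \cup b_i \mid i \in \omega)$ subsumes $(x_i \mid i \in \omega)$; symmetrically for $(b_i \mid i \in \omega)$. This gives the right-to-left implication.

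For the left-to-right implication, suppose $(a_i \cup b_i \mid i \in \omega)$ subsumes $(x_i \mid i \in \omega)$, witnessed by an infinite set $M \subseteq \omega$: for all $i < j$ in $M$ we have $x_i \in a_j \cup b_j$, i.e.\ $x_i \in a_j$ or $x_i \in b_j$. I would then colour the pair $\{i,j\} \subseteq M$ (with $i<j$) ``red'' if $x_i \in a_j$ and ``blue'' otherwise (so $x_i \in b_j$). By the infinite Ramsey theorem applied to this $2$-colouring of $[M]^2$, there is an infinite monochromatic subset $M' \subseteq M$. If $M'$ is red, then for all $i<j$ in $M'$ we have $x_i \in a_j$, so $M'$ witnesses that $(a_i \mid i \in \omega)$ subsumes $(x_i \mid i \in \omega)$; if $M'$ is blue, then likewise $(b_i \mid i \in \omega)$ does. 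Either way the disjunction holds, completing the proof.

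The only step requiring any care — and the ``main obstacle'', such as it is — is recognising that one cannot simply argue ``either infinitely many pairs are red or infinitely many are blue'', because the two infinite sets of pairs need not line up into a single infinite subset of indices all of whose pairs share a colour; this is exactly what Ramsey's theorem supplies and why it is invoked rather than a bare pigeonhole argument. Everything else is bookkeeping about the definition of \emph{subsumes} (Definition~\ref{subsume}), so I would keep the write-up to a few lines, citing the infinite Ramsey theorem for the partition step.
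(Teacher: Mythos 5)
Your proposal is correct and follows essentially the same route as the paper's own proof: the ``if'' direction by monotonicity, and the ``only if'' direction by two-colouring pairs from the witness set according to whether $x_i \in a_j$ and applying the infinite Ramsey theorem to extract a monochromatic infinite subset. The only cosmetic difference is that the paper first normalises the witness set $M$ to all of $\omega$ before colouring, whereas you colour $[M]^2$ directly; this changes nothing of substance.
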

\begin{proof}
The `if' direction is immediate. For the other direction, we employ Ramsey's theorem, which states that for any coloring of the set of pairs of natural numbers with two colours there is an infinite set of natural numbers which is monochromatic in the sense that all the pairs of numbers from that set are the same colour. Without loss of generality, the set $M$ witnessing that $(a_i \cup b_i \mid i \in \omega)$ subsumes $(x_i \mid i \in \omega)$ is the whole of $\omega$. Now for any $i < j < \omega$, we colour the pair $\{i,j\}$ red if $x_i \in a_j$ and blue otherwise. In this way we colour all 2-element subsets of $\omega$. By Ramsey's theorem, we can find a subset $M'$ of $\omega$ such that all pairs from $M'$ were given the same colour. If they were all coloured blue, then for each $i$ and $j$ in $M'$ with $i < j$ we have $x_i \in (a_j \cup b_j) \setminus a_j \subseteq b_j$, so $(b_i \mid i \in \omega)$ subsumes $(x_i \mid i \in \omega)$. A similar argument shows that if all pairs from $M'$ are coloured red then $(a_i \mid i \in \omega)$ subsumes $(x_i \mid i \in \omega)$. 
\end{proof}

\begin{lem}\label{commutes1}
  In $M_\delta$, each map $f_{\alpha, \beta, i}$ and the constant $c$
  commute with the bounded semilattice structure (i.e.\ the
  $\PSet_\omega$-algebra structure on $M_\delta$); in other words, any
  $M_{\delta}$ is a small $(\CM,\PSet_{\omega})$-tensor algebra.
\end{lem}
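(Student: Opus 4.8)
The plan is to reduce the claim to checking, for each operation of $\CT_{\CM}$, the commutativity equation~\eqref{eq:commute} against the two operations generating the theory of bounded semilattices, namely binary join $\vee$ and bottom $\bot$ (both interpreted on $M_{\delta}=\PSet_{\omega}(B_{\delta})$ as union and $\emptyset$). Once all these equations hold, $M_{\delta}$ is a model of $\CT_{\CM}\tensor(\text{theory of bounded semilattices})$ in $\Set$: it is a $\CT_{\CM}$-algebra by Lemma~\ref{lem:neg2}, a bounded semilattice by construction, and the commutativity equations are exactly the extra axioms of the tensor theory. Since $\CT_{\CM}$ genuinely presents $\CM$ and the theory of bounded semilattices genuinely presents $\PSet_{\omega}$, Theorem~\ref{thm:nomattertensor} then identifies this with $M_{\delta}$ being a small $(\CM,\PSet_{\omega})$-tensor algebra.

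For the constant $c$, which is interpreted as $\emptyset=\bot$, commutation with $\vee$ is $\emptyset\cup\emptyset=\emptyset$ and with $\bot$ is trivial, since a constant commutes with an operation precisely when its value is a fixed point of that operation. For $f_{\alpha,\beta,i}$ and $\bot$, the required equation is $f_{\alpha,\beta,i}(\emptyset\mid(\alpha',\beta',i')\in B_{\alpha})=\emptyset$; this holds because a sequence all of whose members are $\emptyset$ can subsume nothing (a witness $M$ would be infinite, yet $x_i\in a_j=\emptyset$ is impossible), so the side condition defining the value $\{(\alpha,\beta,i)\}$ fails.

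The only substantive step is that each $f_{\alpha,\beta,i}$ preserves binary joins, i.e.\ $f_{\alpha,\beta,i}(a_k\cup b_k\mid k\in B_{\alpha})=f_{\alpha,\beta,i}(a_k\mid k\in B_{\alpha})\cup f_{\alpha,\beta,i}(b_k\mid k\in B_{\alpha})$. Both sides lie in $\{\emptyset,\{(\alpha,\beta,i)\}\}$, so it suffices to compare side conditions. For fixed $\alpha',\beta'$, Lemma~\ref{subprop1} (with underlying set $B_{\delta}$ and $x_{i'}=(\alpha',\beta',i')$) says that $(a_{\alpha',\beta',i'}\cup b_{\alpha',\beta',i'}\mid i'\in\omega)$ subsumes $((\alpha',\beta',i')\mid i'\in\omega)$ iff one of the two original sequences does. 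Writing $P(\alpha',\beta')$ and $Q(\alpha',\beta')$ for these two subsumption properties, the left-hand side is nonempty iff $\alpha<\delta$ and $\exists_{\cofinal}\alpha'<\alpha.\,\exists_{\cofinal}\beta'<\alpha'^{+}.\,(P\vee Q)$, while the right-hand side is nonempty iff $\alpha<\delta$ and $\bigl(\exists_{\cofinal}\alpha'.\,\exists_{\cofinal}\beta'.\,P\bigr)\vee\bigl(\exists_{\cofinal}\alpha'.\,\exists_{\cofinal}\beta'.\,Q\bigr)$. These are equivalent by the elementary observation $\sup(C\cup D)=\max(\sup C,\sup D)$: if a union of two subsets of an ordinal is cofinal, one of them already is. Applying this first to the inner quantifier over $\beta'<\alpha'^{+}$ and then to the outer quantifier over $\alpha'<\alpha$ distributes the nested cofinal-existential quantifiers over the disjunction, establishing the equation.

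The main obstacle here is merely the bookkeeping of the two nested layers of the cofinal quantifier $\exists_{\cofinal}$ when distributing over the disjunction produced by Lemma~\ref{subprop1}; the bottom case, the constant case, and the invocation of Theorem~\ref{thm:nomattertensor} are all routine.
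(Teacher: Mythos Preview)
Your proof is correct and follows essentially the same approach as the paper: reduce to commutation with $\bot$ and $\vee$, dispose of $c$ and the $\bot$ case trivially, and for $f_{\alpha,\beta,i}$ against $\vee$ combine Lemma~\ref{subprop1} with the fact that $\exists_{\cofinal}$ distributes over disjunction. Your version is slightly more explicit about the two nested applications of the cofinality distribution and about invoking Theorem~\ref{thm:nomattertensor} for the translation to tensor algebras, but the argument is the same.
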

\begin{proof}
For $f_{\alpha,\beta,i}$, the case $\alpha\ge\delta$ is trivial; so assume $\alpha<\delta$. Since $(\emptyset \mid i \in \omega)$ never subsumes anything, we have $f_{\alpha, \beta, i}(\emptyset \mid (\alpha', \beta', i') \in B_{\alpha}) = \emptyset$, so $f_{\alpha, \beta, i}$ commutes with $\bot$. Because of the form of $f_{\alpha, \beta, i}$, to check that it commutes with $\vee$ it is enough to check that for any two families $(a_{\alpha', \beta', i'} \mid (\alpha', \beta', i') \in B_{\alpha})$ and $(b_{\alpha', \beta', i'} \mid (\alpha', \beta', i') \in B_{\alpha})$, the following two conditions are equivalent

\begin{iteMize}{$\bullet$}
\itum $\exists_{\cofinal} \alpha' < \alpha.~\exists_{\cofinal} \beta' < \alpha'^+.~ (a_{\alpha', \beta', i'} \cup b_{\alpha', \beta', i} \mid i' \in \omega) \text{ subsumes } ((\alpha', \beta', i')\mid i' \in \omega)$
\itum 
\begin{tabular}[t]{@{}r@{~~}l}
either &$\exists_{\cofinal} \alpha' < \alpha.~\exists_{\cofinal} \beta' < \alpha'^+.~(a_{\alpha', \beta', i'} \mid i' \in \omega) \text{ subsumes } ((\alpha', \beta', i')\mid i' \in \omega)$\\
 or &$\exists_{\cofinal} \alpha' < \alpha.~\exists_{\cofinal} \beta' < \alpha'^+.~(b_{\alpha', \beta', i'} \mid i' \in \omega) \text{ subsumes } ((\alpha', \beta', i')\mid i' \in \omega)$.
\end{tabular}
\end{iteMize}
This equivalence is immediate from Lemma \ref{subprop1} and the fact that a union of two subsets of an ordinal $\alpha$ is cofinal in $\alpha$ if and only if at least one of those two subsets is cofinal in $\alpha$. 

Finally, $c=\bot=\emptyset$ and hence $c\lor c= c$, i.e.\ $c$ also commutes with the semilattice structure.
\end{proof}

\begin{lem}\label{gen2}
  Any $M_{\kappa}$ is $\aleph_0$-reachable, specifically generated from
  the countable set $X = \{\{b\}\mid b \in B_1\}$ under the operations
  of\/ $\Sigma$ and the bounded semilattice structure.
\end{lem}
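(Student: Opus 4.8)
The plan is to show that starting from the countable set $X = \{\{b\}\mid b \in B_1\}$ and closing under the $\Sigma$-operations and the bounded semilattice operations $\vee$ and $\bot$, one recovers all of $M_\kappa = \PSet_\omega(B_\kappa)$. Since every element of $M_\kappa$ is a finite subset of $B_\kappa$, and $\bot = \emptyset$ together with finite $\vee$ lets us build every finite subset of $B_1$ from the singletons in $X$, it suffices to produce, for each triple $(\alpha,\beta,i) \in B_\kappa$, the singleton $\{(\alpha,\beta,i)\}$ as an element of the generated subalgebra; finite unions then give all of $M_\kappa$. So the real content is: every singleton $\{(\alpha,\beta,i)\}$ lies in $\brks{X}$.

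I would prove this by induction on $\alpha$. For $\alpha = 1$ the singleton is already in $X$. For the inductive step, fix $(\alpha,\beta,i)$ with $1 < \alpha < \kappa$ (the case $\alpha \ge \kappa$ is vacuous since then the triple is not in $B_\kappa$), and assume that $\{(\alpha',\beta',i')\} \in \brks{X}$ for all $(\alpha',\beta',i') \in B_\kappa$ with $\alpha' < \alpha$. By the definition of the $\Sigma$-structure on $M_\delta$ with $\delta = \kappa$, we have $f_{\alpha,\beta,i}(a_{\alpha',\beta',i'} \mid (\alpha',\beta',i') \in B_\alpha) = \{(\alpha,\beta,i)\}$ precisely when, cofinally in $\alpha'<\alpha$ and cofinally in $\beta'<\alpha'^+$, the sequence $(a_{\alpha',\beta',i'} \mid i' \in \omega)$ subsumes $((\alpha',\beta',i') \mid i' \in \omega)$. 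The key step is therefore to choose the argument family so that this subsumption condition holds. A natural choice is $a_{\alpha',\beta',i'} = \{(\alpha',\beta',0),(\alpha',\beta',1),\dots,(\alpha',\beta',i')\}$ — a finite subset of $B_1$-style triples of level $\alpha'$ — which by construction contains $(\alpha',\beta',j)$ for every $j \le i'$, so taking $M = \omega$ witnesses that $(a_{\alpha',\beta',i'}\mid i'\in\omega)$ subsumes $((\alpha',\beta',i')\mid i'\in\omega)$ for \emph{every} $\alpha'<\alpha$ and every $\beta'<\alpha'^+$, a fortiori cofinally. Each such $a_{\alpha',\beta',i'}$ is a finite subset of $B_\alpha$ (indeed of level $<\alpha$), hence, by the induction hypothesis applied to its elements together with $\bot$ and $\vee$, lies in $\brks{X}$. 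Consequently $\{(\alpha,\beta,i)\} = f_{\alpha,\beta,i}(a_{\alpha',\beta',i'} \mid (\alpha',\beta',i')\in B_\alpha) \in \brks{X}$, completing the induction.

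Having all singletons $\{(\alpha,\beta,i)\}$ for $(\alpha,\beta,i)\in B_\kappa$ in $\brks{X}$, and $\emptyset = \bot \in \brks{X}$, we get every finite subset of $B_\kappa$ as a finite $\vee$ of singletons, i.e.\ $\brks{X} = M_\kappa$. The main obstacle is simply pinning down the right witnessing family $a_{\alpha',\beta',i'}$ and checking that its entries are already available in the generated subalgebra; once one notices that taking each $a_{\alpha',\beta',i'}$ to be an initial segment $\{(\alpha',\beta',j)\mid j\le i'\}$ makes the subsumption condition hold trivially (with $M=\omega$) and keeps the level of all entries below $\alpha$, the induction goes through mechanically. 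One should also double-check the degenerate constraint $\alpha > 0$ in the signature and the cardinality bound $|X| < |\alpha|$ in the defining equations of $\CT_\CM$ are respected — here the relevant variable set is the (finite, hence of cardinality $<|\alpha|$ for $\alpha$ infinite, and handled directly for finite $\alpha$) set of triples actually occurring — but these are routine.
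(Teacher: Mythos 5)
Your proof is correct and takes essentially the same route as the paper's: both proceed by induction on $\alpha$, obtaining each singleton $\{(\alpha,\beta,i)\}$ as $f_{\alpha,\beta,i}$ applied to the family of initial segments ($\{(\alpha',\beta',j)\mid j\le i'\}$ in your version, $\{(\alpha',\beta',j)\mid j< i'\}$ in the paper's), which subsumes $((\alpha',\beta',i')\mid i'\in\omega)$ with witness $M=\omega$ for \emph{every} $\alpha'<\alpha$ and $\beta'<\alpha'^{+}$, hence cofinally. The one slip is an off-by-one in the base case: the singletons in $X$ range over $B_1$ and so have first coordinate $0$, so the base case is $\alpha=0$ and the inductive step must cover all $\alpha\ge 1$ (including $\alpha=1$, whose operations have arity $B_1$ and whose witnessing family lies in the closure of $X$ under finite joins) --- which your argument does verbatim.
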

\begin{proof}
Let $A'$ be the subset of $M_{\kappa}$ generated in this way. It suffices to prove that for each $\alpha\leq\kappa$ we have $\{\{(\alpha, \beta, i)\} \mid \beta < \alpha^+ \wedge i < \omega\} \subseteq A'$. If $\alpha = 0$ this is true by definition. Otherwise, this is true by induction on $\alpha$, using the equation
$$\{(\alpha, \beta, i)\} = f_{\alpha, \beta, i}(\{(\alpha', \beta', j) \mid j < i'\}\mid(\alpha', \beta', i') \in B_{\alpha}).$$
We have now shown that $B \subseteq A'$, and $A \subseteq A'$ is immediate.
\end{proof}
\noindent By~\ref{cor:tensor-initial}, we obtain
\begin{thm}\label{thm:neg_main2}
  The tensor of $\CM$ and $\PSet_{\omega}$ does not exist (not even as
  a spurious tensor).
\end{thm}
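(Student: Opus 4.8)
The plan is to invoke Corollary~\ref{cor:tensor-initial}: to show that $\CM \tensor \PSet_\omega$ fails to exist (even spuriously), it suffices to exhibit, for arbitrarily large cardinals, a small $(\CM,\PSet_\omega)$-tensor algebra of that cardinality which is generated by a fixed countable set. The family $(M_\delta \mid \delta \text{ a cardinal})$ from the preceding lemmas is designed precisely for this. First I would observe that Lemma~\ref{commutes1} already establishes that each $M_\delta$ is a genuine small $(\CM,\PSet_\omega)$-tensor algebra: its $\Sigma$-structure satisfies $\CT_\CM$ by Lemma~\ref{lem:neg2}, it carries the bounded-semilattice (i.e.\ $\PSet_\omega$-algebra) structure given by union with $\emptyset$ as bottom, and the tensor law holds because all the $\Sigma$-operations and the constant commute with $\vee$ and $\bot$. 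Second, Lemma~\ref{gen2} shows every $M_\kappa$ is $\aleph_0$-reachable, generated by the fixed countable set $X = \{\{b\} \mid b \in B_1\}$, which crucially does not grow with $\kappa$.

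Third, I would check that the cardinalities $|M_\delta|$ are genuinely unbounded. Since $M_\delta = \PSet_\omega(B_\delta)$ and $B_\delta$ consists of all triples $(\alpha,\beta,i)$ with $\alpha < \delta$, $\beta < \alpha^+$, $i < \omega$, we have $|B_\delta| \geq |\delta|$, so $|M_\delta| \geq |\delta|$; letting $\delta$ range over all cardinals gives tensor algebras of unbounded size all reachable from the same countable set. Hence the hypothesis of the `only if' direction of Corollary~\ref{cor:tensor-initial} fails for $\alpha = \aleph_0$: the cardinality of $\aleph_0$-reachable small $(\CM,\PSet_\omega)$-tensor algebras is not bounded. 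Therefore the tensor $\CM \tensor \PSet_\omega$ does not exist, and since Corollary~\ref{cor:tensor-initial} speaks of the possibly spurious tensor, not even a spurious tensor exists.

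The proof is thus a short assembly of the lemmas; the real work has already been done. In terms of where the difficulty lies, the substantive obstacle was not in this final step but upstream: designing the operations $f_{\alpha,\beta,i}$ and the subsumption predicate so that (a) the defining equations of $\CT_\CM$ are forced on small algebras by a counting argument (Lemma~\ref{lem:neg2}, using Lemma~\ref{above1} and the bound on non-singleton elements of a $\Sigma_\kappa$-generated subalgebra), while (b) the same operations still commute with the semilattice structure on each $M_\delta$ (Lemma~\ref{commutes1}, which leans on Ramsey's theorem via Lemma~\ref{subprop1}), and (c) the whole algebra remains reachable from a countable seed. Given those lemmas, the only thing to verify here is the cardinality lower bound $|M_\delta| \geq |\delta|$, which is immediate from the definition of $B_\delta$, and then cite Corollary~\ref{cor:tensor-initial}.
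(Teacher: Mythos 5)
Your proof is correct and is essentially identical to the paper's: the paper likewise obtains the theorem by combining Lemmas~\ref{lem:neg2}, \ref{commutes1} and \ref{gen2} with Corollary~\ref{cor:tensor-initial}, the unboundedness of $|M_\delta|\ge|B_\delta|\ge\delta$ being immediate from the definition. The only (harmless) extra care you take is insisting the generating set is literally the same countable set for all $\delta$, which the corollary does not require.
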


\begin{rem}
  By Proposition~\ref{prop:tensor-inheritance}, the above result
  implies that every monad that is induced by a theory that has one
  binary operation and at most one constant and whose equations are
  implied by those of $\PSet_\omega$ (associativity, commutativity,
  and idempotence of the binary operation, neutrality of the constant
  if any) fails to be tensorable. In particular, finite non-empty
  powerset and both the full and the non-empty versions of the list
  monad and and the finite multiset monad, respectively, fail to be
  tensorable.
\end{rem}

\section{A Well-Order Monad That Fails to Be Tensorable}\label{sec:non-ex}
\noindent 

\noindent As announced above, we now present a second example of two
monads whose tensor fails to exist, originally published
in~\cite{GoncharovSchroder11}. It involves a well-order monad $\CW$,
where $\CW(X)$ consists of all well-orderings on non-empty subsets of
$X$, plus an error element; the other partner is ranked, a free monad
over two binary operations.  The construction follows the same pattern
as in the preceding section: first we introduce a large 
theory of $\CT_{\CW}$-algebras, then we show that this theory yields a
monad $\CW$, and finally we prove that the tensor product $\CW$ with
the ranked partner does not exist using
Corollary~\ref{cor:tensor-initial}.
\begin{defi}\label{def:wo}
  The theory $\CT_\CW$ of \emph{strict non-empty well-orders} has a
  signature consisting of a constant $\bot$ and a family of operation
  symbols $\ulist_{\kappa}$ of arity $\kappa$, indexed over all
  positive ordinals $\kappa$. It imposes the following equations.

  % We define a \emph{$\CW$-algebra} to be a set $X$ equipped with an
  % ordinal-indexed family of operations
  % ${\ulist_{\kappa}:X^{\kappa}\to X}$ satisfying the following
  % conditions.
%
  \begin{enumerate}[(1)]
  \item \emph{Strictness:} $\ulist_{\kappa}(w_{\alpha}\mid \alpha <
    \kappa) = \bot$ whenever $w_\alpha=\bot$ for some $\alpha<\kappa$.
  \item \emph{Non-repetitiveness:} $\ulist_{\kappa}(w_{\alpha} \mid
    \alpha < \kappa) = \bot$ whenever $w_{\alpha_1}=w_{\alpha_2}$
    for some $\alpha_1<\alpha_2<\kappa$.
  \item\label{item:assoc} \emph{Associativity:} For every
    small-ordinal-indexed family $(\kappa_\mu)_{\mu<\nu}$ of ordinals
    $\kappa_\mu>0$,
    \begin{displaymath}
      \ulist_{\kappa}(w_{\mu,\alpha} \mid \mu < \nu, \alpha < \kappa_{\mu})=\ulist_{\nu}(\ulist_{\kappa_{\mu}}(w_{\mu, \alpha} \mid \alpha < \kappa_{\mu}) \mid \mu < \nu)
    \end{displaymath}
    where on the left hand side $\kappa=\sum_{\mu<\nu}\kappa_{\mu}$ is
    regarded as having elements $\brks{\mu,\alpha}$ with $\mu<\nu$ and
    ${\alpha<\kappa_\mu}$.
  \end{enumerate}
\end{defi}
\noindent We regard an ordinal $\kappa$ as the set of all ordinals
$\alpha<\kappa$ unless we explicitly specify otherwise, as in the
associativity law above where we use a more convenient isomorphic
representation of ordinal sums.  Even though in the above formulations
of strictness and non-repetitiveness we employ the word `whenever',
they may nevertheless be interpreted as sets of equational axioms.

Now consider a small $\CT_{\CW}$-algebra $X$. By non-repetitiveness,
for every $\kappa$ whose cardinality exceeds $|X|$, $\ulist_{\kappa}$
is identically $\bot$, which means that the set of nontrivial
operations in the structure of any particular $\CT_{\CW}$ is small. A
homomorphism of two $\CT_{\CW}$-algebras $(X,\ulist_{\kappa})$ and
$(Y,\ulist_{\kappa})$ is a map $f:X\to Y$ that commutes with the
operations, i.e.\
\begin{equation*}
f(\ulist_{\kappa}(w_{\alpha}\mid \alpha < \kappa)) = \ulist_{\kappa}(f(w_{\alpha}) \mid \alpha < \kappa)\quad\text{for $w\in X^\kappa$.}
\end{equation*} 

\begin{lemdefn}
  The theory $\CT_{\CW}$ has small free algebras. The induced monad
  $\CW$, the \emph{strict non-empty well-order monad}, maps a small
  set $X$ to the set
  \begin{equation*}
    \CW X=\{(Y,\rho)\mid \emptyset\neq Y\subseteq X,\rho\text{ a well-order on Y}\}\cup\{\bot\}.
\end{equation*}
Its unit maps $x\in X$ to unique well-order on $\{x\}$, and its
multiplication concatenates well-orders in case all its arguments are
well-orders whose carriers are pairwise disjoint (so that the result
is again a well-order), and otherwise returns $\bot$.
\end{lemdefn}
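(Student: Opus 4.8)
**

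The plan is to verify directly that the large theory $\CT_{\CW}$ has small free algebras by identifying the free algebra $F_{\CT_{\CW}}(X)$ with the explicitly described set $\CW X$, and then to check that the monad structure (unit and multiplication) induced by the adjunction coincides with the concatenation operation described in the statement. The key point for smallness is that $\CW X$ is visibly a set: it consists of pairs $(Y,\rho)$ with $Y$ a non-empty subset of $X$ and $\rho$ a well-order on $Y$, together with a distinguished element $\bot$; since there are only set-many subsets of $X$ and set-many binary relations on each, $\CW X$ is small.

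First I would observe that $\CW X$ carries a canonical $\CT_{\CW}$-algebra structure: interpret $\bot$ as the error element, and for a positive ordinal $\kappa$ interpret $\ulist_\kappa(w_\alpha \mid \alpha<\kappa)$ as follows — if any $w_\alpha=\bot$, or if two of the $w_\alpha$ are equal, or if the carriers of the $w_\alpha$ are not pairwise disjoint, return $\bot$; otherwise return the well-order obtained by concatenating the $w_\alpha$ in order of $\alpha$ (the carrier being the disjoint union of the carriers, ordered so that elements of $w_{\alpha_1}$ precede elements of $w_{\alpha_2}$ when $\alpha_1<\alpha_2$, and within each block using $\rho_\alpha$). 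One checks that this concatenation of well-orders is again a well-order, using that a lexicographic-style sum of well-orders indexed by a well-order is well-ordered. Strictness and non-repetitiveness hold by construction, and associativity holds because concatenation of concatenations agrees with a single concatenation — precisely the reassociation of ordinal sums recorded in Definition~\ref{def:wo}(\ref{item:assoc}).

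Next I would show this algebra is free on $X$ via the insertion $\eta_X\colon x\mapsto(\{x\},\text{unique order})$. Given any small $\CT_{\CW}$-algebra $B$ and a map $g\colon X\to B$, the unique homomorphic extension $\hat g\colon \CW X\to B$ must send $\bot$ to the interpretation $\bot^B$ and send $(Y,\rho)$, say with $Y=\{y_\alpha\mid\alpha<\lambda\}$ enumerated in $\rho$-increasing order by an ordinal $\lambda=\operatorname{ot}(\rho)$, to $\ulist_\lambda^B(g(y_\alpha)\mid\alpha<\lambda)$; well-definedness uses that this depends only on the order type and the enumeration, and that distinct $y_\alpha$ force distinct arguments only up to the point where non-repetitiveness in $B$ might collapse things — but that is exactly what is allowed. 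That $\hat g$ is a homomorphism follows from the associativity law in $B$: applying $\ulist_\kappa$ to a family of $\CW X$-elements and then $\hat g$ gives a nested well-order which reassociates to the single big well-order, matching $\hat g$ applied first. Uniqueness is clear since every element of $\CW X$ is $\eta$ of a singleton combined under the operations. This establishes $F_{\CT_{\CW}}(X)\cong\CW X$, hence small free algebras; the induced monad then has the described unit, and its multiplication, being the structure map $\CW\CW X\to\CW X$, is exactly concatenation of a well-order-indexed family of well-orders, returning $\bot$ unless all carriers are pairwise disjoint.

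The main obstacle I anticipate is the bookkeeping around non-repetitiveness and disjointness when verifying that $\hat g$ is well-defined and homomorphic: one must be careful that the enumeration chosen for $(Y,\rho)$ is forced (it is, since $\rho$ is a well-order, so there is a canonical order-isomorphism with an ordinal), and that in checking homomorphy for $\ulist_\kappa$ applied to well-orders $(Y_\alpha,\rho_\alpha)$, the cases where the result is $\bot$ on the $\CW X$ side (overlapping carriers or repeats) correctly match the $\bot$ cases on the $B$ side — here strictness and non-repetitiveness of $B$ do the work, but the argument must be organized so that no case is missed. Everything else is routine verification that ordinal-indexed concatenation of well-orders is associative and yields well-orders.
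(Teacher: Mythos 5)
Your proposal is correct and takes the same route as the paper, whose entire proof consists of the single remark that the elements of $\CW(X)$ serve as unique normal forms in $F_{\CT_\CW}(X)$; your detailed verification---in particular the observation that overlapping carriers on the $\CW X$ side flatten, via associativity, to repeated arguments that non-repetitiveness in the target algebra sends to $\bot$---is exactly what that remark compresses. (The one caveat, which you share with the paper since $\CT_\CW$ as stated contains no unit law $\ulist_1(w)=w$, is that a bare variable is not provably equal to the singleton well-order $\ulist_1(x)$, so strictly the free algebra contains an extra copy of $X$; this affects neither smallness of the free algebras nor any later use of $\CW$.)
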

\noindent One may alternatively think of the strict non-empty
well-order monad as a monad of infinite non-repetitive non-empty
lists, with $\bot$ playing the role of an error element that is thrown
in case of repetitions arising by concatenation, and that is
propagated through concatenation by the strictness law.
\begin{proof}
  It is easy to see that the elements of $\CW(X)$ serve as unique
  normal forms in $F_{\CT_\CW}(X)$. 
  %if either for some
  %$\alpha<\kappa$, $\pi_1(w(\alpha))=\emptyset$ or for some
  %$\alpha_1<\alpha_2<\kappa$,
  %$\pi_1(w(\alpha_1))\cap\pi_1(w(\alpha_2))\neq\emptyset$ and
  %\begin{displaymath}
  %  \iota_{\kappa}(w)=\bigl(\bigcup_{\alpha<\kappa}\pi_1(w(\alpha)),\bigcup_{\alpha<\kappa} \pi_2(w(\alpha))\cup\bigcup_{\alpha_1<\alpha_2<\kappa} \pi_1(w(\alpha_1))\times\pi_1(w(\alpha_2))\bigr)
  %\end{displaymath}
  %in all remaining cases. 
\end{proof}
  % For every $\CW$-algebra
  % $A$ every map $f$ from $X$ to the domain of $A$ extends to a
  % $\CW$-homomorphism $\hat f:\CW X\to A$ as follows: $\hat
  % f(Y,\rho)=\iota_{\kappa}(f\comp w)$ where $\kappa$ is the ordinal
  % number of $\rho$ and $w$ sends every $\alpha<\kappa$ to the element
  % of $Y$ associates with $\alpha$ by $\rho$. Obviously, by definition,
  % $f=\hat f\comp u_X$. It is easily seen that $\hat f$ is the only
  % $\CW$-homomorphism with this property. 
  
  % Monadicity now follows by Beck's monadicity theorem, as creation of
  % split coequalizers (and in fact, more generally, coequalizers of
  % congruences) by the forgetful functor is straightforward thanks to
  % the equational character of $\CW$-algebras. 

%

%
\noindent The second monad for our example is very simple, and has
finite rank: Let $\Sigma_{2,2}^{\varstar}$ be the free algebra monad for the empty theory in the signature $\Sigma_{2,2}$ consisting of just 2 binary operations.

\begin{lem}\label{lem:kappa-chain}
  For every infinite cardinal $\kappa$, there exists a $2$-reachable small
  $(\CW,\Sigma_{2,2}^{\varstar})$-tensor algebra $W_{\kappa}$ such
  that $|W_\kappa|>\kappa$.
\end{lem}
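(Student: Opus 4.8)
The plan is to build $W_\kappa$ as the carrier of a $(\CW,\Sigma_{2,2}^{\varstar})$-tensor algebra whose $\Sigma_{2,2}^{\varstar}$-part is a sufficiently rich free binary algebra and whose $\CW$-part (the well-order operations $\ulist_\kappa$, plus $\bot$) is defined by a rule that consumes a well-ordered family of binary terms and produces either a new, larger binary term or $\bot$. Concretely, I would start from the free $\Sigma_{2,2}$-algebra on a two-element set $\{x_0,x_1\}$ and then adjoin, for each ordinal $\alpha<\kappa^+$, a fresh generator $g_\alpha$, taking $W_\kappa$ to be (a suitable quotient of) the free binary algebra on $\{x_0,x_1\}\cup\{g_\alpha\mid\alpha<\kappa^+\}$ together with a distinguished error element $\bot$. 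Since $\kappa^+>\kappa$, this forces $|W_\kappa|>\kappa$. The crux is then to define $\ulist_\mu$ on $W_\kappa$ so that: (i) all three axioms of $\CT_\CW$ (strictness, non-repetitiveness, associativity) hold, (ii) the tensor law with the two binary operations holds, i.e.\ each binary operation is a $\CW$-algebra homomorphism, and (iii) the operations actually reach the generators $g_\alpha$ from $\{x_0,x_1\}$, so the algebra is $2$-reachable.

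The key design idea I would pursue: use the two binary operations together with the well-order operations to encode ordinals in binary notation, so that a well-ordered list of "digit terms" of order type $\alpha$ is collapsed to the generator $g_\alpha$. That is, I would arrange that $\ulist_\alpha$ applied to the canonical length-$\alpha$ sequence of previously-constructed distinct elements yields $g_\alpha$ (mirroring the role played by $f_{\alpha,\beta,i}$ in Lemma~\ref{gen2} of the previous section), while $\ulist_\mu$ applied to a family that is either not injective, or contains $\bot$, or does not match one of the chosen "good" patterns, returns $\bot$. Strictness and non-repetitiveness are then immediate from this "return $\bot$ unless the input is a good injective pattern" clause; associativity needs the good patterns to be closed under the relevant concatenations, which I would ensure by choosing the patterns to be exactly the order-isomorphic copies of initial segments of a single fixed $\kappa^+$-chain. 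For the tensor law, the point is that pushing a binary operation $\star$ inside $\ulist_\mu(w_\alpha\mid\alpha<\mu)$ versus applying $\ulist_\mu$ to $(w_\alpha\star w'_\alpha\mid\alpha<\mu)$ must agree; since a binary operation applied to two well-orders generally destroys the "good pattern" property (the result is a strictly larger binary term, not an initial-segment copy), both sides will typically evaluate to $\bot$, and I would set up the patterns precisely so that the rare non-$\bot$ cases match up on both sides. Finally, $2$-reachability follows by transfinite induction on $\alpha<\kappa^+$: $x_0,x_1$ generate all the length-$\alpha$ good patterns using the binary operations (to build the intermediate distinct elements) and $\ulist_\alpha$ (to collapse), hence generate every $g_\alpha$.

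I expect the main obstacle to be simultaneously satisfying the tensor law and non-triviality of the $\CW$-structure: any naive "always return $\bot$" well-order structure trivially tensors but then cannot produce large reachable algebras, while any structure that actually builds the $g_\alpha$'s must carefully dodge the homomorphism requirement for the two binary operations. The delicate point is that the binary operations are free (no equations), so a $\Sigma_{2,2}$-term has a unique parse tree, and I would exploit this rigidity: the "good patterns" should be recognizable purely syntactically from the parse trees of their entries, in a way that is manifestly incompatible with being in the image of a binary operation, so that $\ulist_\mu(a\star b\mid\dots)$ is forced to $\bot$ on one side while the other side is also forced to $\bot$ because $\ulist_\mu$ of the $g$'s-or-good-patterns, once hit by $\star$, is no longer a good pattern. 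Getting the bookkeeping of "which $\mu$-indexed families are good" exactly right — closed under the associativity re-bracketings but never accidentally hit by a binary operation, and still rich enough to reach all $\kappa^+$ generators — is where the real work lies; the cardinality bound $|W_\kappa|>\kappa$ and the appeal to Corollary~\ref{cor:tensor-initial} to conclude non-existence of $\CW\tensor\Sigma_{2,2}^{\varstar}$ are then routine.
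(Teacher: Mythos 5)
Your high-level strategy---a $2$-generated tensor algebra in which the two binary operations return $\bot$ on almost all inputs (so the tensor law is cheap) while the $\ulist$ operations manufacture more than $\kappa$ elements---is exactly the paper's, and your instinct that the hard part is reconciling non-triviality of the $\CW$-structure with the equations is right. But your specific design, collapsing each ``good'' length-$\alpha$ family to a single fresh generator $g_\alpha$, cannot satisfy the associativity axiom of $\CT_\CW$, and this is not a bookkeeping issue that a cleverer choice of good patterns repairs. Associativity forces $\ulist_2(\ulist_{\kappa_0}(p_0),\ulist_{\kappa_1}(p_1))=\ulist_{\kappa_0+\kappa_1}(p_0\frown p_1)$ for \emph{every} pair of good patterns $p_0,p_1$. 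If $p_0,p_1$ are adjacent disjoint pieces of one good pattern, the right-hand side is $g_{\kappa_0+\kappa_1}$; if instead they are overlapping pieces of your fixed chain (say two initial segments of different lengths), the concatenation repeats an entry and non-repetitiveness forces the right-hand side to be $\bot$. In both cases the left-hand side is $\ulist_2(g_{\kappa_0},g_{\kappa_1})$, because $g_\alpha$ retains no memory of which pattern produced it, so $\ulist_2$ would have to take two values on one argument. (The degenerate variant with exactly one good pattern per length fails in the other direction: the tail of the canonical length-$(\kappa_0+\kappa_1)$ pattern is not an initial segment, so the inner $\ulist_{\kappa_1}$ gives $\bot$ and strictness kills the nested side while the flat side is $g_{\kappa_0+\kappa_1}$.) This also undercuts your cardinality claim, since the quotient by the offending instances of associativity need not keep the $g_\alpha$ distinct.

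The paper's construction resolves precisely this point by \emph{not} collapsing: the value of $\ulist_\nu$ on a good family is the flattened injective sequence itself, an element of the set $L_\kappa$ of alternating injective sequences into $U^0_\kappa\cup U^1_\kappa$, so associativity reduces to associativity of concatenation and strictness and non-repetitiveness are read off directly. The tensor law holds because for $\nu>1$ the left side has first argument in $L_\kappa\cup\{\bot\}$, on which $u_i$ is $\bot$, while on the right side every $u_i(t(\mu),s(\mu))$ lands in $\{0,1,\bot\}\cup U^i_\kappa$ for the \emph{same} $i$, so the family cannot alternate and $\ulist_\nu$ sends it to $\bot$. Finally, $|W_\kappa|>\kappa$ is obtained not by adjoining $\kappa^+$ generators but by counting: if $|W_\kappa|\le\kappa$, each bijection $\varsigma\to W_\kappa\setminus\{0\}$ yields a distinct element of $L_\kappa$, giving strictly more than $|W_\kappa|$ elements, a contradiction. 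If you modify your construction so that each good family keeps its own identity under $\ulist$, you will essentially have rediscovered this argument.
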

\begin{proof}
  The domain of $W_{\kappa}$ is the union $\{\bot,0,1\}\cup
  U_{\kappa}^0\cup U_{\kappa}^1\cup L_{\kappa}$ where the
  $U_{\kappa}^i$ and $L_{\kappa}$ are sets of terms defined by infinitary mutual
  recursion according to the the rules
\begin{align*}
\infer
{\brks{i,0,t}\in U_{\kappa}^i}
{t\in W_{\kappa}-\{0\}}
\end{align*}
where $i\in\{0,1\}$, and
\begin{align*}
\infer
{t\in L_{\kappa}}
{
t:\nu\inject U_{\kappa}^0 \cup U_{\kappa}^1 ~~~~~~
\forall\mu.~\mu+1<\nu\implies\bigl(t(\mu)\in U_{\kappa}^{0}\iff t(\mu+1)\in U_{\kappa}^1\bigr)
}
\end{align*}
where $\nu$ is an ordinal such that $1<|\nu|\leq\kappa$ and $\into$ is
read as $t$ being injective (not a subset inclusion). Notice that
$U^0_\kappa\cap U^1_\kappa=\emptyset$, so the second premise says that
$t(\mu)$ alternates between $U^0_\kappa$ and $U^1_\kappa$. Let us
define a length map $\len{}$ from $W_{\kappa}$ to ordinals as follows:
we put $\len{t}=1$ for $t\in\{\bot,0,1\} \cup U_{\kappa}^0\cup
U_{\kappa}^1$, and $\len{t}=\nu$ whenever $t:\nu\inject U_{\kappa}^0
\cup U_{\kappa}^1\in L_{\kappa}$. Note that this implies $\len{t}>1$
iff $t\in L_{\kappa}$.

To give a $\Sigma^{\varstar}_{2,2}$-algebra structure over $W_{\kappa}$ is the same as to define two binary maps $u_0,u_1:W_{\kappa}\times W_{\kappa}\to W_{\kappa}$. For $i=0,1$ we put by definition
\begin{iteMize}{$\bullet$}
\item $u_i(t,t)=t$ if $t\in\{0,1\}$ ;
\item $u_i(0,t)=\brks{i,0,t}\in U_{\kappa}^i$ whenever $t\in\{1\}\cup L_{\kappa}$;
\item $u_i(s,t)=\bot$ in the remaining cases.
\end{iteMize}
We now define a $\CT_\CW$-algebra structure on $W_\kappa$. We
interpret $\bot$ by $\bot$, and $\ulist_1$ by $\id$. For $\nu>1$ and
$t\in (W_{\kappa})^{\nu}$ we define $\ulist_{\nu}(t)$ by the clauses
\begin{iteMize}{$\bullet$}
\item $\ulist_{\nu}(t)=s$, provided the map $s:\zeta\to W_\kappa$ on
  $\zeta=\sum_{\mu<\nu}\len{t(\mu)}$ defined as follows is in
  $L_{\kappa}$: We regard $\zeta$ as consisting of pairs
  $\brks{\mu,\kappa}$ where $\mu<\nu$ and $\kappa<\len{t(\mu)}$. For
  every such $\brks{\mu,\kappa}$, put $s\brks{\mu,\kappa} =
  t(\mu)(\kappa)$ if $t(\mu)\in L_{\kappa}$, and $s\brks{\mu,\kappa} =
  t(\mu)$ otherwise (in which case necessarily $\kappa=0$).
  \item $\ulist_{\nu}(t)=\bot$ otherwise.
\end{iteMize}
It is then clear by construction that $W_\kappa$ is $2$-reachable (it
is generated by $0$ and $1$), as the rules defining $L_\kappa$ and the
$U^i_\kappa$ just amount to closure under the $u_i$ and $\ulist_\nu$
as defined above. Next, we have to check that $W_\kappa$ is really a
$\CT_\CW$-algebra. By definition, for every $t$, $\ulist_{\nu}(t)\in
L_{\kappa}\cup\{\bot\}$, hence the conditions~(1) and~(2) of
Definition~\ref{def:wo} are ensured automatically. Condition~(3) is
less trivial, but still routine. Finally we need to verify the tensor
law. In the case at hand it amounts to proving the equation
\begin{displaymath}
u_i(\ulist_{\nu}(t),\ulist_{\nu}(s)) = \ulist_{\nu}(\lambda\mu<\nu.\ u_i(t(\mu),s(\mu)))
\end{displaymath}
for every $s,t\in W_{\kappa}$, $i=0,1$. It is immediate by definition that both sides of this equation equal $\bot$ unless $\nu=1$. In the latter case the equation also follows since, by definition, $\ulist_1=\id$.

Finally, we show that $|W_{\kappa}|>\kappa$. In order to derive a contradiction, assume that $|W_{\kappa}|\le\kappa$ and let $\varsigma$ be an ordinal number such that $|W_{\kappa}|=|\varsigma|$. Let $\rho$ be a bijection $\varsigma\to W_{\kappa}-\{0\}$. Since $|\varsigma|\le\kappa$ and hence $|\varsigma\cdot 2|\le\kappa$ (since $\kappa$ is infinite), we can form an element $t_{\rho}:\varsigma\cdot 2\into U_{\kappa}^0\cup U_{\kappa}^1$ of $W_{\kappa}$ by putting $t_{\rho}(\varsigma',i)=\brks{i,0,\rho(\varsigma')}$ for $\varsigma'<\varsigma$, $i=0,1$. By varying $\rho$, we can produce as many such elements as there are isomorphisms from $\varsigma$ to $W_{\kappa}-\{0\}$, i.e.\ strictly more than $|\varsigma|=|W_{\kappa}|$, contradiction. 
\end{proof}

\noindent By Corollary~\ref{cor:tensor-initial} we obtain
\begin{thm}\label{thm:neg_main}
  The tensor of the strict non-empty well-order monad $\CW$ and
  $\Sigma_{2,2}^{\varstar}$ does not exist, even as a spurious tensor.
\end{thm}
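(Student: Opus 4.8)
The plan is to read the theorem off directly from Corollary~\ref{cor:tensor-initial} together with Lemma~\ref{lem:kappa-chain}, which do all the real work. Corollary~\ref{cor:tensor-initial} states that the (possibly spurious) tensor $\CW\tensor\Sigma_{2,2}^{\varstar}$ exists if and only if, for \emph{every} cardinal $\alpha$, the cardinalities of $\alpha$-reachable small $(\CW,\Sigma_{2,2}^{\varstar})$-tensor algebras are bounded by some cardinal. Hence it suffices to refute this condition for a single convenient value of $\alpha$, and then the `only if' direction yields non-existence, even of a spurious tensor.

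Concretely, I would take $\alpha=\aleph_0$ (in fact $\alpha=2$ already works, since a $2$-reachable algebra is a fortiori $\alpha$-reachable for every $\alpha\ge 2$). Lemma~\ref{lem:kappa-chain} provides, for every infinite cardinal $\kappa$, a $2$-reachable small $(\CW,\Sigma_{2,2}^{\varstar})$-tensor algebra $W_\kappa$ with $|W_\kappa|>\kappa$. As $\kappa$ ranges over all infinite cardinals, the family $(W_\kappa)$ shows that the cardinalities of $2$-reachable --- hence $\aleph_0$-reachable --- small $(\CW,\Sigma_{2,2}^{\varstar})$-tensor algebras are unbounded. Applying Corollary~\ref{cor:tensor-initial} then gives that $\CW\tensor\Sigma_{2,2}^{\varstar}$ does not exist, not even as a spurious tensor.

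There is essentially no obstacle remaining at this point: the entire difficulty has been absorbed into Lemma~\ref{lem:kappa-chain}, whose proof had to equip $W_\kappa$ with both a $\CT_\CW$- and a $\Sigma_{2,2}^{\varstar}$-algebra structure, check the three well-order axioms (only the associativity clause~(3) being non-routine), verify that the tensor law holds (which it does trivially, both sides collapsing to $\bot$ unless the index $\nu$ equals $1$), and carry out the counting argument showing $|W_\kappa|>\kappa$ by producing more injections $\varsigma\inject W_\kappa-\{0\}$ than a set of size $|W_\kappa|$ could admit. Given that lemma, the theorem is a one-line corollary, and I would present it as such, noting that it instantiates exactly the same strategy used in Section~\ref{sec:finpow} for $\CM\tensor\PSet_\omega$: exhibit a family of reachable tensor algebras of unbounded cardinality, then invoke Corollary~\ref{cor:tensor-initial}.
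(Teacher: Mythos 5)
Your proposal is correct and is exactly the paper's argument: the paper derives Theorem~\ref{thm:neg_main} as an immediate consequence of Corollary~\ref{cor:tensor-initial}, using the family $(W_\kappa)$ of $2$-reachable small $(\CW,\Sigma_{2,2}^{\varstar})$-tensor algebras of unbounded cardinality supplied by Lemma~\ref{lem:kappa-chain}. Nothing is missing.
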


\section{Conclusion}

\noindent Tensors of theories, or monads, capture algebras of
one theory in the category of algebras of the other. For unranked
monads, equivalently large theories with small free
algebras, existence of tensors is not self-understood; we call a
theory or monad \emph{tensorable} if its tensors with all other
theories, or monads, respectively, exist. We have given two
counterexamples to tensorability of monads:
\begin{iteMize}{$\bullet$}
\item the tensor of the finite powerset monad with a certain somewhat
  complex unranked monad fails to exist;
\item the tensor of the strict nonempty well-order monad and a
simple finitary monad, generated by two binary operations and no
equations, fails to exist.
\end{iteMize}
We have thus settled in the negative the long-standing open question
of universal existence of tensors of monads on $\Set$~\cite{Manes69},
which has recently reemerged in the perspective of work on algebraic
effects~\cite{HylandPlotkinEtAl06,HylandLevyEtAl07}. The negative
answer as such is in accordance with expectations, but the actual
counterexamples are rather different from what was previously
suspected. % The notion of barren algebra employed in the second
% counterexample would seem to be of independent interest for future
% applications.

In addition to our negative results, we have established a positive
result stating that all bounded powerset monads---except finite
powerset---are (genuinely) tensorable.

Our main motivation for the study of tensors as such is to develop a
monadic framework for non-interference of side-effects, noting that
the tensor law precisely amounts to orthogonality of the component
monads; these ideas will be further developed in future
research. Another topic of further interest is the investigation of
tensors over base categories other than $\Set$, for example the
category of $\omega$-complete partial orders.

\subsubsection*{Acknowledgements} We wish to thank various
contributors to the categories mailing list, in particular Peter
Johnstone, for useful insights communicated via the list, and the
anonymous referees of~\cite{GoncharovSchroder11} for valuable
pointers to the literature.

\bibliographystyle{myabbrv}
\bibliography{monads}

\end{document}

%%% Local Variables:
%%% TeX-PDF-mode:t
%%% ispell-local-dictionary: "british-ize"
%%% End: